\documentclass[12pt]{article}
\usepackage{graphicx,amssymb,amsmath,empheq}
\usepackage{authblk}
\usepackage{amsthm}
\usepackage{amsfonts}
\usepackage{amssymb}
\usepackage{dsfont}
\usepackage{bm}
\usepackage{empheq}
\usepackage{subfigure}
\usepackage{braket}
\usepackage{siunitx}
\usepackage{amsmath,amssymb,amsfonts,stmaryrd}
\usepackage{hyperref}
\hypersetup{colorlinks = true, linkcolor=purple, citecolor=blue, urlcolor=blue}
\usepackage{url}
\usepackage{dsfont}
\usepackage{comment}
\usepackage{svg}
\usepackage{graphicx}
\usepackage{physics}
\usepackage{makecell}
\usepackage{rotating}
\usepackage{multirow} 
\usepackage[normalem]{ulem}

\theoremstyle{plain}

\usepackage{tikz}
\usepackage{tikz-cd}
\usetikzlibrary{arrows}
\usetikzlibrary{intersections}
\usetikzlibrary{shapes.geometric}
\usetikzlibrary{decorations.pathmorphing, patterns,shapes}
\usetikzlibrary{decorations.markings}

\tikzset{
  mid arrow/.style={postaction={decorate,decoration={
        markings,
        mark=at position .575 with {\arrow[#1]{stealth}}
      }}},
  near arrow/.style={postaction={decorate,decoration={
        markings,
        mark=at position .275 with {\arrow[#1]{stealth}}
      }}},
   far arrow/.style={postaction={decorate,decoration={
        markings,
        mark=at position .800 with {\arrow[#1]{stealth}}
      }}},
}

\usepackage[nosort]{cite}

\newtheorem{theorem}{Theorem}
\newtheorem*{theorem*}{Theorem}
\newtheorem{proposition}{Proposition}
\newtheorem{lemma}{Lemma}
\newtheorem{definition}{Definition}

\newtheorem*{corollary*}{Corollary}

\newcommand{\dkap}{\delta\kern-1.25pt\varkappa}

\newcommand{\R}{{\mathrm{R}}}
\newcommand{\A}{{\mathrm{A}}}

\newcommand{\onenorm}[1]{\lVert #1 \rVert_{1}}

\newcommand{\calC}{\mathcal{C}}
\newcommand{\calD}{\mathcal{D}}

\newcommand{\calF}{\mathcal{F}}
\newcommand{\calG}{\mathcal{G}}
\newcommand{\calH}{\mathcal{H}}
\newcommand{\calI}{\mathcal{I}}

\newcommand{\calL}{\mathcal{L}}
\newcommand{\calM}{\mathcal{M}}

\newcommand{\calP}{\mathcal{P}}

\newcommand{\calS}{\mathcal{S}}

\newcommand{\calU}{\mathcal{U}}

\newcommand{\st}{\operatorname{st}}

\renewcommand{\H}{\operatorname{H}}
\renewcommand{\epsilon}{\varepsilon}

\newcommand{\poly}{\operatorname{poly}}

\newcommand{\Lr}{\operatorname{LR}}
\newcommand{\Lgr}{\operatorname{LGR}}
\newcommand{\GR}{\operatorname{GR}}
\newcommand{\LR}{\mathsf{LR}}
\newcommand{\SM}{\operatorname{SM}}

\newcommand{\SP}{\mathcal{SP}}

\newcommand{\LU}{\operatorname{LU}}
\newcommand{\LF}{\operatorname{LF}}
\newcommand{\nA}{\operatorname{nA}}
\newcommand{\supp}{\operatorname{supp}}
\newcommand{\Om}{\Omega}
\newcommand{\IR}{\operatorname{IR}}
\newcommand{\MERA}{\operatorname{MERA}}
\renewcommand{\mod}{\text{ mod }}
\newcommand{\UFC}{\operatorname{UFC}}
\newcommand{\dist}{\operatorname{dist}}

\newcommand{\boldu}{\boldsymbol{u}}
\newcommand{\boldv}{\boldsymbol{v}}

\newcommand{\bolda}{\boldsymbol{a}}
\newcommand{\boldb}{\boldsymbol{b}}
\newcommand{\boldc}{\boldsymbol{c}}
\newcommand{\boldd}{\boldsymbol{d}}

\newcommand{\Ens}{\mathbb{E}}

\newcommand{\YZ}[1]{\textcolor{blue}{YZ: #1}}

\makeatletter

\newcommand*{\wideboxed}[1]{\setlength{\fboxsep}{1ex}%
  \fbox{\m@th$\displaystyle#1$}}
\makeatother

\title{
Extensive long-range magic\\ in non-Abelian topological orders
}

\author[1]{Yuzhen Zhang\thanks{yuzhenzhangph@gmail.com}}
\author[2]{Isaac H. Kim\thanks{ikekim@ucdavis.edu}}
\author[3]{Yimu Bao\thanks{baoyimu@gmail.com}}
\author[1]{Sagar Vijay\thanks{sagarvijay@ucsb.edu}}

\affil[1]{\normalsize\it Department of Physics, University of California, Santa Barbara, CA 93106, USA}
\affil[2]{\normalsize\it Department of Computer Science, University of California, Davis, CA 95616, USA}
\affil[3]{\normalsize\it Kavli Institute for Theoretical Physics, Santa Barbara, CA 93106, USA}

\date{\today}

\begin{document}

\maketitle
\begin{abstract}
We show that the low-energy states of non-Abelian topological orders possess extensive magic which is long-ranged, and cannot be eliminated by a constant-depth local unitary circuit.  This refines conventional notions of complexity beyond the linear circuit depth which is required to prepare any topological phase, and provides a new resource-theoretic characterization of topological orders. A central technical result is a no-go theorem establishing that stabilizer states—even up to constant-depth local unitaries—cannot approximate low-energy states of non-Abelian string-net models which satisfy the entanglement bootstrap axioms. Moreover, we show that stabilizer-realizable Abelian string-net phases have mutual braiding phases quantized by the on-site qudit dimension, and that any violation of this condition necessarily implies extensive long-range magic.
Extending to higher spatial dimensions, we argue that any state obeying an entanglement area law and hosting excitations with nontrivial fusion spaces must exhibit extensive long-range magic. This applies, in particular, to ground-states and low-energy states of higher-dimensional quantum double models.
\end{abstract}

\newpage
\tableofcontents
\newpage

\section{Introduction}

Topological quantum order characterizes zero-temperature phases of matter with robust long-range entanglement and emergent anyon excitations~\cite{Wen_2017}. All topologically-ordered states are complex in a coarse sense: they cannot be prepared from a product state by a finite-depth local unitary circuit, but instead require circuit depth at least linear in system size~\cite{Bravyi:2006zz}. This criterion, however, does not capture the fine-grained complexity of topological quantum matter. A broad class of Abelian topological orders can be realized as stabilizer codes~\cite{Ellison:2021vth}, and are therefore classically simulable despite the linear circuit depth needed to prepare these states. Recent work has shown that certain other topological orders, including some that host non-Abelian anyons, can be efficiently prepared using local measurements and classical feedforward~\cite{Briegel:2000unx,Raussendorf:2005dmx,Aguado:2008zz,Piroli:2021fjn,Tantivasadakarn:2021vel,verresen2021efficiently,Bravyi:2022zcw,Tantivasadakarn:2022hgp,Iqbal:2023shx,Foss-Feig:2023uew}. These observations point to the need for finer notions of quantum many-body complexity. A natural candidate is ``magic": the non-Clifford resource required to prepare, represent, or efficiently simulate a many-body quantum state.

In this work, we investigate whether non-Abelian topological order necessarily carries intrinsic magic. In non-Abelian topological order, information can be stored nonlocally in fusion space, and braiding can implement logical operations, sometimes including non-Clifford gates~\cite{Nayak_2008}. We ask whether the presence of such fusion structure underlying the computational power of non-Abelian anyons is already visible at the level of the many-body wavefunction as an intrinsic non-Clifford resource. Since magic is an inherently basis-dependent notion, defined only after specifying a local choice of Pauli operators and Clifford unitaries, the central question is whether non-Abelian topological order necessarily supports magic that cannot be removed by any local change of basis. This question is particularly relevant for non-Abelian topological codes~\cite{Koenig:2010uua,Wootton:2014,Dauphinais:2016loh,Cui:2019lvb}. In this setting, it is natural to ask whether preparing or simulating these codes necessarily requires significant non-Clifford resources. More broadly, investigating basis-independent magic is motivated by the search for sharper notions of low-energy complexity in quantum matter.% and by possible connections to quantum complexity theory.

This broader motivation naturally connects to questions in quantum complexity theory. It is natural to ask whether magic that is robust under local basis changes constrains the complexity of low-energy many-body states. This perspective is closely related to the quantum PCP  (probabilistically checkable proofs) conjecture in quantum complexity theory, which asks whether estimating the ground-state energy of a local Hamiltonian to constant accuracy remains QMA-hard~\cite{Aharonov:2013hwe}. A necessary prerequisite for such hardness is the no-low-energy trivial-state (NLTS) property~\cite{Freedman:2013zfj}: local Hamiltonians where all sufficiently low-energy states are long-range entangled. While NLTS is necessary, it is not sufficient for quantum PCP because long-range entanglement alone does not capture all forms of computational hardness. For instance, recent NLTS constructions~\cite{Anshu:2022hsn} are stabilizer Hamiltonians; such systems have an abundance of low-energy states that are stabilizer states. Stronger notions have also been studied, including Hamiltonians with no low-energy stabilizer states~\cite{Coble:2023uks} and Hamiltonians whose low-energy states require $\Omega(n)$ $T$-gates~\cite{Coble:2023hys}. Yet these notions still allow the possibility of low-energy states that are stabilizer states up to a constant-depth local unitary circuit, which makes it easy to estimate the energy of these states to constant accuracy~\cite{Wei:2025irp}. This motivates a stronger question: can one find Hamiltonians whose low-energy states have parametrically large magic that persists under arbitrary constant-depth local unitaries? This form of ``no low-energy trivial magic" could provide further evidence for quantum PCP.% \YZ{``NLTM''' was dubbed in the paper~\cite{Wei:2025irp}, where they donnot require \emph{large} long-range magic at low-energies, but only the \emph{existence} of long-range magic. But I think the mere existence of LRM would be a weak condition. What shall we say here?}

Before answering these questions, we formalize the idea of ``magic modulo local basis rotations''. If a state can be prepared by applying a constant-depth local unitary to a stabilizer state, then the state has no magic in a locally rotated basis. Such states are called \emph{short-range magic} (SRM) states, depicted in Fig.~\ref{fig:SRM}.
\begin{figure}
    \centering
    \includegraphics[width=0.6\linewidth]{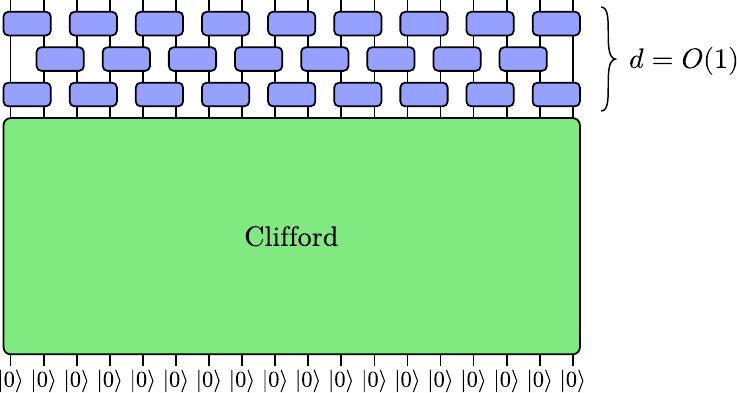}
    \caption{Circuit architecture depicting a one-dimensional qudit array with short-range magic.}
    \label{fig:SRM}
\end{figure}
Such states nevertheless have a list of interesting properties:
\begin{itemize}
\item SRM states can be long-range entangled. Applying a constant-depth local unitary to an arbitrary long-range entangled stabilizer state gives a SRM state with long-range entanglement.
\item SRM states can have extensive magic. For example, a tensor product of $T$ states have extensive magic for the magic measures considered below. 

\item Local observables can be efficiently simulated on SRM states. After conjugating a local observable by the constant-depth local unitary, only gates in its lightcone contribute, so the evolved observable remains supported on a constant-size region. One can then decompose it into Pauli operators and compute each Pauli expectation value on the underlying stabilizer state. As a consequence, the energy of an SRM state with respect to any local Hamiltonian can be efficiently evaluated. This is crucial for the quantum PCP conjecture~\cite{Aharonov:2013hwe}, because if a Hamiltonian has low-energy SRM states, then it cannot be a candidate for quantum PCP~\cite{Wei:2025irp}.
\item SRM states can also be highly random. In fact, ensembles of SRM states can reproduce the first $k$ moments of Haar random states up to relative error for $k=O(1)$~\cite{Zhang:2025dhg}.
\end{itemize}

To isolate the basis-independent part of magic that captures genuinely long-distance structure, one should minimize magic by performing local basis rotations and look at the remaining magic structure. To formalize this idea, we define \emph{long-range magic} (LRM) as the minimal remaining magic after applying an arbitrary finite depth local unitary:
\begin{equation}
\LR M_d(\rho)=\min_{U\in\LU_d}M(U\rho U^\dagger),
\end{equation}
where $M(\rho)$ is some magic measure, and $\LU_d$ is the set of local unitaries with depth at most $d$. These measures of magic qualitatively relate to the distance to a stabilizer state, and will be formally defined in Sec. \ref{subsec:LRM_magic_measures}. As a scale, an array of $T$ states has $M(\ketbra{T}^{\otimes n})=O(n)$ for the relevant extensive magic measures, while these measures are at most $O(n)$ on an $n$-qudit system. Closely related notions of long-range magic have recently appeared in several contexts, where the \emph{existence} of long-range magic has been shown~\cite{Korbany:2025noe,Wei:2025irp,Parham:2025sxj,Li:2026tsy}. The goal of this work is to give a \emph{quantitative} evaluation using the LRM measure defined above.

\subsection{Overview of Results}
Our main results are the following theorems. Our first result is that non-Abelian topological order carries \emph{unconditional} extensive long-range magic: the conclusion does not depend on choosing a special local Hilbert-space dimension.
\begin{theorem*}
The ground state $\ket{\psi_{\nA}}$ of a non-Abelian string-net model has long-range magic
\begin{equation}
\LR M_d(\ketbra{\psi_{\nA}})\ge n\cdot \calC_{\nA}(\UFC,q,d),
\end{equation}
in a system of $n$ qudits with qudit dimension $q$, where $\calC_{\nA}(\UFC,q,d)$ only depends on the unitary fusion category (UFC), $q$ and circuit depth $d$.
\end{theorem*}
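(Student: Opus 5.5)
The plan is to reduce the extensive bound to a constant-size local no-go statement, and then add up many disjoint copies of it. The local statement is the no-go theorem advertised in the abstract: stabilizer states, even after a depth-$d$ local unitary, cannot approximate the reduced density matrices of a non-Abelian string-net on an annulus-type region. Then I would use a magic measure $M$ that is (approximately) superadditive over disjoint regions, or at least lower bounded by a sum of local distances to stabilizer marginals.

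First I would fix $U\in\LU_d$ and set $\sigma=U\ketbra{\psi_{\nA}}U^\dagger$. I would tile the lattice by $\Theta(n)$ disjoint disks $D_i$. Each disk has linear size $r=r(\UFC,d)$ chosen larger than a constant multiple of $d$ times the correlation and Markov lengths of the string-net. These disks are separated by buffers of width exceeding $2d$. Inside each $D_i$, take an annulus $X_i$ together with its complementary pieces, arranged so that the entanglement bootstrap axioms hold for $\ket{\psi_{\nA}}$ on these regions (string-nets satisfy them exactly). The key invariant is a quantity computed from entropies of marginals on a region and its lightcone-thickened version; it is sensitive to the total quantum dimension. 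Concrete candidates are the topological entanglement entropy $\gamma=\log\mathcal{D}$, or better, the annulus information-convex-set structure, whose extreme points are labeled by anyon sectors $a$ with entropy differences $2\log d_a$.

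Second, I would show that any state of the form $V\rho_{\rm stab}V^\dagger$ with $V\in\LU_{d}$ carries entropy combinations in which every sector has $\log d_a\in\frac12\log q\,\mathbb{Z}$ or is otherwise "Abelian". This uses the fact that stabilizer-state entropies are integer multiples of $\log q$ (for prime $q$; composite $q$ handled by the factor structure), so every sector in the effective stabilizer description has quantum dimension 1. The non-Abelian UFC has some $d_a\notin\{1\}$, and quantum dimensions of the form $\sqrt{q^k}$ conflict with fusion rules ($d_a^2=\sum N d_c$) and with the Abelian group structure of stabilizer anyons. Together these give a gap: the distance on $D_i$ between $\sigma|_{D_i}$ and any locally stabilizer-realizable marginal is at least $\epsilon_0(\UFC,q,d)>0$, where I would use Fannes-type continuity to convert entropy mismatch into trace distance. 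Crucially, the circuit $U$ only enlarges regions by $d$, so marginals of $\sigma$ on $D_i$ are determined by those of $\psi_{\nA}$ on the $d$-thickened disks, which still satisfy the axioms once $r\gg d$.

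Third, to go from these local gaps to $M(\sigma)\ge n\,\calC_{\nA}$, I would use a measure such as the min-relative-entropy or the stabilizer fidelity $-\log\max_{\rm stab}|\langle s|\sigma\rangle|^2$. For any stabilizer $\ket{s}$, the fidelity with $\sigma$ is bounded by a product over disks of local fidelities: by data processing, fidelity is bounded by that of the marginals on $\bigcup_i D_i$. Since the marginal of $\ket{s}$ there is a stabilizer mixed state, it is close to a product across well-separated disks up to Pauli-correlated structure. Each factor is $\le 1-\epsilon_0^2$, yielding $M\ge \Theta(n)\cdot(-\log(1-\epsilon_0^2))$. This product step is the main obstacle, because a global stabilizer state need not factor over the disks. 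I would handle it by conditioning. A stabilizer mixed state restricted to disjoint regions is a mixture of products of local stabilizer states, obtained by measuring the cross-region stabilizer generators. Concavity, or a max over the mixture, then reduces to the product case. Alternatively, I would sequentially apply the local gap with a Markov-chain/gentle-measurement argument. The constant $\calC_{\nA}$ then depends only on $\UFC$, $q$, and $d$, through $r$ and $\epsilon_0$.
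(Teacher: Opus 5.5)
Your architecture matches the paper's: a local obstruction on $\Theta(n/d^2)$ well-separated patches of size $O(d)$, amplified through the stabilizer fidelity, with $\LF$ as the weakest measure. The local obstruction in your second step, however, fails.

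Entropy quantization only says that entropies of a stabilizer projection state are logarithms of integers whose prime factors divide $q$. Through $2\log d_a$ and $\log\mathcal{D}$, this restricts quantum dimensions to square roots of such integers, not to $d_a=1$. Doubled Ising ($d_a\in\{1,\sqrt2,2\}$, $\mathcal D=4$) or $D(D_4)$ ($d_a\in\{1,2\}$, $\mathcal D=8$) with $q=2$, and $D(S_3)$ with $q=6$, pass every test based on this universal entropic data. So no such criterion gives a bound valid for all $q$.

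Your fallback does not close the gap. $d_a=\sqrt{q^k}$ is compatible with the fusion rules ($d_\sigma^2=2=d_1+d_\psi$ for Ising). Invoking ``the Abelian group structure of stabilizer anyons'' assumes what is to be proved. In addition, $\gamma$ evaluated on $U\ket{\psi_{\nA}}$ is not pinned to $\log\mathcal D$, since entropic quantities need not be stable under constant-depth circuits.

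The missing idea is to detect non-Abelianness by non-commutativity rather than entropy. Since $N^a_{a\bar a a}=\sum_x(N^x_{a\bar a})^2>1$, the information convex set of a three-holed disk has a sector reversibly isomorphic to a full state space of dimension $>1$. It therefore contains non-commuting states (Lemma~\ref{lemma:commutation_bijection}), and Proposition~\ref{prop:info_convex_u_no_u} carries this through the circuit. Conversely, if a patch marginal were a stabilizer projection state, decoupling and local extendibility make ball stabilizers locally generated and rule out nontrivial logicals on balls. Then every element of $\Sigma(\Om)$ has the form $\Pi_{S_{\Om^+}(\Om)}\sum_{\ell\in S_r(\Om)}c_\ell\ell$, and all of them commute (Theorem~\ref{thm:EB_stabilizer_implies_Abelian}). ``Every sector has $d_a=1$'' is true for such states, but only as a consequence of this structural analysis (extreme points are Pauli conjugates of the vacuum), not of entropy quantization.

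Your third step rests on a false lemma. A stabilizer projection state on disjoint patches need not be a mixture of products. A Bell pair straddling two patches, stabilized by $X_1X_2$ and $Z_1Z_2$, is unchanged by measuring its cross-region generators and stays entangled. Nor does ``concavity or a max over the mixture'' help for the mixed state $\otimes_i\rho_i$. Concavity of fidelity gives lower bounds, and $F(I/2,I/2)=1$ although $I/2$ has fidelity $1/\sqrt2$ with each of $|0\rangle\langle0|$ and $|1\rangle\langle1|$.

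The paper instead applies a \emph{stabilizer} measurement $\calM$ on one patch to both states. Data processing gives $F(\rho_1\otimes\rho_2,\sigma_{12})\le\sum_x\sqrt{p_xq_x}\,F(\rho_2,\sigma_{2,x})$. Because $\calM$ is a stabilizer measurement, the conditional states $\sigma_{2,x}$ remain in $\SP_2$, so iterating yields $F\le\prod_iF_{\SM}(\rho_i,\SP_i)$ (Lemma~\ref{thm:fidelity_two_party}). With a general local measurement, Bell pairs across patches could steer patch $2$ toward $\rho_2$. So your per-patch factor $1-\epsilon_0^2$, which comes from an arbitrary distinguishing measurement, is not available. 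The paper converts the trace-distance gap into one witnessed by a single Pauli measurement, $F_{\SM}(\rho_i,\SP_i)\le\sqrt{1-\epsilon^2/4\mathrm{D}^2}$ with $\mathrm{D}$ the patch dimension (Lemma~\ref{thm:SM_distinguishability}).

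You should also state explicitly that the marginal of $U\ket{\psi_{\nA}}$ on the patches is exactly a product state. This follows from vanishing mutual information between separated disks in the string-net together with disjoint backward lightcones.
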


Furthermore, the extensiveness of long-range magic persists at low energy densities, for sufficiently short circuit depth:
\begin{theorem*}
For a non-Abelian string-net Hamiltonian $H$ on $n$ qudits, where each term in the Hamiltonian has $O(1)$ operator norm, any state $\ket{\psi}$ with energy $\bra{\psi}H\ket{\psi}$ less than $\epsilon n$ has long-range magic
\begin{equation}
\LR\Lr_d(\ketbra{\psi})\ge\Omega(nd^{-2}-n\epsilon^{1-\alpha}\log(1/\epsilon))\cdot\calC_{\nA}(\UFC,q,d).
\end{equation}
\end{theorem*}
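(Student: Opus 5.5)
Both lower bounds will come from one local mechanism, summed over a packing of disjoint regions. The magic measure $\Lr$ (the quantity used in $\LR\Lr_d$) should be superadditive, or at least additive up to constants, over well-separated disjoint regions. We will show that every such region, after any depth-$d$ circuit, still contributes a constant amount of magic.

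\textbf{Local obstruction.} Take a disk-like region $R$, with an annulus $A$ around it, of linear size $\ell \sim c\,d$ large enough that the lightcone of a depth-$d$ circuit stays inside $A$. We then argue by contradiction. Suppose $U\in\LU_d$ makes $U\ketbra{\psi}U^\dagger$ exactly stabilizer, or $\delta$-close to a stabilizer state in the relevant sense. The reduced states of $\ket{\psi}$ on the annulus $A$ and on the disks satisfy the entanglement bootstrap axioms, and these reduced states are only dressed by $U$ locally. The key input is the structure of stabilizer states. On any region, a stabilizer state is locally Clifford-equivalent to copies of Bell pairs and GHZ-type states, so its reduced density matrices are flat, with all nonzero eigenvalues equal.

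\textbf{Contradiction via quantum dimensions.} The reduced state on an annulus $A$ in the bootstrap framework is the direct sum $\bigoplus_a p_a\,\rho_a$. The weights are $p_a \propto d_a^2$, and the sector entropies differ by $2\log d_a$. A depth-$d$ circuit acts within a thickened annulus, so this superselection structure survives. More precisely, the information convex set of the thickened annulus is invariant. For the stabilizer image, the information convex set is built from flat spectra, so it only allows $d_a$ that are powers, or square roots of powers, of $q$ with a uniform (Abelian) fusion structure. For a non-Abelian UFC, some $d_a$ is irrational, or at least some fusion multiplicity is incompatible. This gives a contradiction. It also gives a robust version: the deviation $\delta$ from a stabilizer state is bounded below by a constant $c(\UFC,q)>0$.

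The main obstacle is this robust, quantitative step. We would make it quantitative using continuity of the conditional mutual information, which enters through Fannes-type bounds. Each disjoint region then contributes at least a constant $\calC_{\nA}(\UFC,q,d)$, depending on $d$ through the region size. There are $\sim n/d^2$ such regions, which recovers $n\cdot\calC$ after absorbing $d^{-2}$ into $\calC$.

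\textbf{Low-energy states.} If the total energy is at most $\epsilon n$, Markov's inequality shows that all but an $O(\epsilon^{1-\alpha})$ fraction of the $\sim n/d^2$ regions have local energy at most $\epsilon^{\alpha}$. On these good regions, the bootstrap axioms hold approximately, with error controlled by the local energy. Entropy continuity then turns this error into an $O(\epsilon^{\alpha}\log(1/\epsilon))$ correction to each region's constant. Subtracting the bad regions and these corrections yields
\begin{equation}
\Omega\big(n d^{-2}-n\epsilon^{1-\alpha}\log(1/\epsilon)\big)\cdot \calC_{\nA}(\UFC,q,d).
\end{equation}
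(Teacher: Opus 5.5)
Your proposal has genuine gaps, and the first is in the local obstruction, which is the heart of the argument. Flat stabilizer spectra only tell you that entropy differences between extreme points of an annulus information convex set are logarithms of ratios of stabilizer ranks. For prime $q$ these are integer multiples of $\log q$. That is compatible with many non-Abelian theories:
\begin{itemize}
\item the Ising string-net realizes $\mathrm{Ising}\times\overline{\mathrm{Ising}}$, with $d_a^2\in\{1,2,4\}$, perfectly consistent with $q=2$;
\item $D(S_3)$ has $d_a^2\in\{1,4,9\}$, consistent with $q=6$.
\end{itemize}
So ``some $d_a$ is irrational'' is false in general. The fallback ``some fusion multiplicity is incompatible'' is exactly the claim that needs proof. (Also, the ground-state marginal on an annulus is the vacuum extreme point; the $d_a^2$-weighted mixture is only the maximum-entropy element of $\Sigma$.)

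The paper supplies the missing idea in two parts. First, a non-Abelian $a$ gives $N^a_{a\bar a a}>1$. So $\Sigma$ of an $O(d)$-thick three-holed disk has a sector reversibly isomorphic to a full state space, and this isomorphism survives the circuit by the lightcone argument. That sector must contain non-commuting states, since otherwise the isomorphism would be entanglement-breaking. Second, suppose an $O(d)$ ball marginal were a stabilizer projection state. Local extendibility makes its stabilizer group locally generated. Decoupling forces every Pauli supported in the ball that commutes with the local stabilizers to be a stabilizer. Hence every element of $\Sigma$ has the form $\Pi_{S_{\Omega^+}(\Omega)}\sum_{\ell\in S_r(\Omega)}c_\ell\,\ell$, and all of them commute. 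An entropy route could only be rescued with the stronger fact that all extreme points are Pauli conjugates of one another, hence isospectral, forcing $d_a=1$.

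Conversely, the quantitative step you call the main obstacle is the easy part. On a bounded patch $\SP$ is a finite set, so $\rho_i\notin\SP$ already gives a uniform trace-distance gap by compactness, with no Fannes-type bounds.

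The amplification step is also unsupported. $\Lr$ is not superadditive: robustness of magic is strictly submultiplicative already on tensor powers of single-qubit magic states. Moreover, the optimal stabilizer comparison state can be entangled across your regions, so local constants do not simply add. The paper handles this as follows:
\begin{itemize}
\item it lower-bounds $\Lr$ by $\LF$;
\item it uses linearity in the pure-state case to reduce to a pure stabilizer state, whose patch marginal is a stabilizer projection state;
\item it proves $F(\otimes_i\rho_i,\SP)\le\prod_i F_{\SM}(\rho_i,\SP_i)$ by measuring patches one at a time with stabilizer measurements (post-measurement states stay in $\SP$);
\item it uses informational completeness of Pauli measurements to bound each $F_{\SM}$.
\end{itemize}
This requires exact factorization of the patch marginals. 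That holds for the ground state and for energy eigenstates on anyon-free patches, but not for your $\ket{\psi}$.

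This is why a region-by-region low-energy argument fails. Your state is neither product over the good regions nor exactly vacuum-like on them. Fidelity-type bounds are also destroyed by small stabilizer admixtures, and entropy continuity does not repair that. The paper instead proceeds in four steps:
\begin{itemize}
\item it applies Markov's inequality to the energy distribution, writing $\ket{\psi}=c_l\ket{\psi_l}+c_h\ket{\psi_h}$ with $|c_h|^2\le\epsilon^\alpha$;
\item it shows that each eigenstate below $\epsilon^{1-\alpha}n$ has exactly clean, factorizing patches, hence exponentially small stabilizer overlap;
\item it applies Cauchy--Schwarz over the $2^{O(n\epsilon^{1-\alpha}\log(1/\epsilon))}$ such eigenstates; this counting is where the $\log(1/\epsilon)$ comes from;
\item it handles the tail with the dual LP for robustness, using the witness $A=\ket{\tilde\psi_l}\bra{\tilde\psi_l}/\calF(\ket{\tilde\psi_l}\bra{\tilde\psi_l},\calS)$, which gives $1+2R\ge(1-\epsilon^\alpha)/\calF$.
\end{itemize}
Your own bookkeeping would actually produce corrections of order $nd^{-2}\epsilon^{\alpha}\log(1/\epsilon)$, not the stated $n\epsilon^{1-\alpha}\log(1/\epsilon)$. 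So your final formula is matched to the target rather than derived.
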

Here $\Lr$ is a specific magic measure---log robustness of magic---which will be formally defined in Sec.~\ref{subsec:LRM_magic_measures}.

These results extend to higher dimensions. In $D$ spatial dimensions, any state $\ket{\psi}$ that satisfies strict entanglement area law and hosts emergent excitations with a non-trivial fusion space has long-range magic $\LR M_d(\ketbra{\psi})\ge n\cdot \calC(q,d)$.\footnote{A state is said to satisfy a strict area law if the entanglement entropy of a ball has a leading contribution proportional to the boundary area and a universal subleading term. This implies the entanglement bootstrap axioms \textbf{A1} in $D=2$~\cite{kim2024strictarealawimplies} and $D=3$~\cite{Huang:2021gsc,Shi_2025}.} An example would be the ground state of the $S_3$ quantum double in three spatial dimensions. Moreover, in this example, the low-energy states have long-range magic $\Lr_d(\ketbra{\psi})=\Omega(nd^{-3}-n\epsilon^{1-\alpha}\log(1/\epsilon))\cdot\calC(q,d).$ Here, the excitations and the dimension of the fusion space are defined by the information convex set in an appropriate topology, as previously studied in Ref.~\cite{Shi:2019mlt}, and as we review in later sections. 

The central step in demonstrating the long-range magic of non-Abelian string-net models is to show that stabilizer states can realize only Abelian string-net phases. This statement was proven in~\cite{Bombin:2013ris,Haah:2020lnd} under the assumptions of translation invariance and prime qudit dimensions, but beyond these settings it has remained a conjecture. Here, we establish this result rigorously without imposing any additional assumptions:  
\begin{theorem*}
Stabilizer states, even up to constant-depth local unitaries, cannot realize the ground states of non-Abelian string-net models.
\end{theorem*}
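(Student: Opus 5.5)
We argue by contradiction. Suppose $\ket{\psi} = U\ket{S}$, where $\ket{S}$ is a stabilizer state on $n$ qudits of dimension $q$ and $U\in\LU_d$ with $d=O(1)$, and suppose $\ket{\psi}$ is (or approximates) the ground state of a non-Abelian string-net model obeying the entanglement bootstrap axioms. The plan is to transport the entanglement-bootstrap data, namely the information convex sets and the quantum dimensions of the superselection sectors, from $\ket{\psi}$ to $\ket{S}$. We then show that for a stabilizer state every quantum dimension equals $1$.

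\emph{Step 1: the data survive the circuit.} A depth-$d$ circuit maps a disk, annulus or $N$-shaped region $X$ into its $O(d)$-thickening. Conditional mutual informations and the bootstrap axioms \textbf{A0}/\textbf{A1} are therefore preserved on regions coarse-grained at scale $\gg d$, up to thickening. Concretely, we lift $U$ to the dilation $U\otimes U^\dagger$, or equivalently use the fact that reduced density matrices of $\ket{S}$ on coarse regions are obtained from those of $\ket{\psi}$ by a local unitary on a thickened region. From this we conclude that $\ket{S}$ satisfies the axioms at a coarse-grained scale and has the same information convex set on an annulus. Since $\Sigma(\text{annulus})$ is the simplex whose extreme points are labeled by the anyon types, and the quantum dimensions $d_a$ are read off from $\tfrac12 I(A:C|B)$ of the extreme points, the full set $\{d_a\}$ is circuit invariant.

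\emph{Step 2: stabilizer states have only Abelian sectors.} For a stabilizer state every reduced density matrix is proportional to a projector onto a stabilizer code, so every entropy has the form $\log(\text{integer power of primes dividing }q)$. The extreme points of the annulus information convex set can be built by acting on $\ket{S}$ with Pauli string operators, and Pauli operators act by permuting or stabilizing code spaces. The resulting states are therefore again stabilizer states, with entropies that differ from the vacuum's by $\log$ of a group order. The rigorous claim we need is stronger: in the sector $a$, the quantity $2\log d_a = S(\text{extreme point}) - S(\text{vacuum})$ must vanish. The tool is that a stabilizer state's marginal on an annulus $BC$ is maximally mixed on a Pauli subgroup. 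Hence the merged or fusion-space dimension $N_{ab}^c$ counts cosets of an abelian group, and the fusion rules form a group, $a\times b=$ a single anyon. Together with $d_a^2=\sum_c N_{a\bar a}^c d_c$, Abelian fusion forces $d_a=1$.

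\emph{Step 3: contradiction.} A non-Abelian UFC has some sector with $d_a>1$, for instance $d_\tau=\phi$ for Fibonacci. This is incompatible with Steps 1 and 2.

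I expect Step 2 to be the main obstacle: showing, without translation invariance or prime $q$, that the information convex set of a (coarse-grained) stabilizer state realizes only an abelian group of sectors. My first attempt would be to express the annulus states as stabilizer mixed states via the symplectic structure over $\mathbb{Z}_q$, handling composite $q$ by working with modules over $\mathbb{Z}_q$ rather than vector spaces. Step 1 must also be done carefully, because approximate axioms hold only after coarse-graining, and the error needs to be controlled by the bootstrap's robustness results.
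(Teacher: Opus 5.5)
Your architecture (transport bootstrap data through the circuit, then show that a stabilizer state carries only trivial fusion data) is the same as the paper's. However, Step 1 rests on a false claim, and Step 2, which carries all the content, is missing the key idea. In Step 1, it is not true that $\ket{S}=U^\dagger\ket{\psi}$ satisfies the bootstrap axioms at a coarse-grained scale. Constant-depth circuits can produce spurious topological entanglement entropy (e.g.\ from an embedded 1D SPT chain), so \textbf{A1} and $I(A:C|B)$ can fail for arbitrarily large regions; the paper flags this explicitly. Hence you cannot read $d_a$ off conditional mutual informations of $\ket{S}$, nor apply bootstrap structure theorems to $\ket{S}$ directly. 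What does survive is the information convex set itself (Proposition~\ref{prop:info_convex_u_no_u}). It is proved by invoking the isomorphism theorem only for the string-net state, and thickening or shrinking regions by the lightcone. The weaker decoupling and local-extendibility properties also survive. Only these may be used on the stabilizer side, and the fusion data must be transported as sector decompositions and fusion-space dimensions of $\Sigma$, not as entropies.

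In Step 2, entropy quantization cannot decide the question: $2\log d_a=\log 4$, as for the $d=2$ anyons of $D(S_3)$, is an integer multiple of $\log 2$. Moreover, ``maximally mixed on a Pauli subgroup, hence $N_{ab}^c$ counts cosets, hence group-like fusion'' is an assertion rather than an argument. The real danger is that $\Sigma(\Omega)$ for a multiply connected $\Omega$ could contain the full state space of a logical qudit, generated by non-commuting Paulis supported in $\Omega$ that commute with every local stabilizer. The missing idea has three parts. (i) By local extendibility, the stabilizers of the reference state on a ball are generated by stabilizers on small balls. (ii) By decoupling, a purification of the ball can be disentangled across a collar, so any Pauli in the ball commuting with these local stabilizers is itself a stabilizer of the reference state (no nontrivial logicals on balls). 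Consequently every $\rho_\Omega\in\Sigma(\Omega)$ equals $\Pi_{S_{\Omega^+}(\Omega)}\sum_{\ell\in S_r(\Omega)}c_\ell\ell$, and all elements of $\Sigma(\Omega)$ commute. (iii) A reversible-channel image of the state space of a Hilbert space of dimension $>1$ must contain non-commuting states, since otherwise the channel would be entanglement-breaking. Even then, commutativity only yields multiplicities $\le 1$, which on a two-holed disk does not exclude multiplicity-free non-Abelian theories such as Fibonacci. The paper therefore uses a three-holed disk, where $N_{a\bar a a}^a=\sum_x (N_{a\bar a}^x)^2>1$ whenever $d_a>1$. Your claim that $a\times b$ is a single anyon needs exactly such an argument. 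Your assertion that annulus extreme points are produced by Pauli strings is likewise only available after (i)--(ii).
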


Interestingly, stabilizer states can nonetheless host non-Abelian \emph{defects}. For instance, it is possible to introduce a line-like lattice dislocation in the toric code, whose endpoints exhibit non-Abelian braiding statistics~\cite{Bombin2010}. Therefore, while the existence of non-Abelian anyons implies there is long-range magic, the existence of non-Abelian defects does not. We note that defects and anyons are different objects~\cite{Barkeshli2019}, and as such, there is no contradiction. In particular, the state hosting the non-Abelian defects cannot be connected to any string-net wavefunction by any constant-depth circuit.\footnote{More precisely, if the circuit depth is sufficiently small compared to the minimal distance between the endpoints, the circuit cannot convert one state to the other.}

%\IK{The subtle difference between the two states can be understood as follows. From a quantum information perspective, states hosting such defects cannot be connected to any string-net wavefunction by a constant-depth circuit; if the circuit depth is small compared to the minimal distance between the endpoints, there is an obstruction to converting one state to the other. From the perspective of topological order, the anyons correspond the objects in unitary modular tensor category. In contrast, the non-Abelian defects can be understood as objects in $G$-crossed unitary braided fusion category~\cite{Barkeshli2019}.}

%Moreover, we
We also show that stabilizer states can only realize the ground state of a restricted set of Abelian string-net models, which are constrained by the qudit dimension $q$. Moreover, if the qudit dimension $q$ violates these constraints, then the string-net model must have extensive long-range magic. 
\begin{theorem*}
Stabilizer states can only realize Abelian string-net models where mutual braiding phases are $q$-th roots of unity. Moreover, if there exist mutual braiding phases that are not $q$-th roots of unity, then the Abelian string-net ground state has long-range magic
\begin{equation}
\LR M_d(\ketbra{\psi_{\A}})\ge n\cdot \calC_{\A}(\UFC,q,d),
\end{equation}
where $\calC_{\A}(\UFC,q,d)$ only depends on the unitary fusion category, $q$, and circuit depth $d$. As a concrete example, this means that qubit stabilizer states cannot realize a $\mathbb{Z}_{3}$ topological quantum order, though this is certainly possible with qudit stabilizer codes with dimension $q=3$.  %Another example is that qubit stabilizer states cannot realize a double-semion topological quantum order, though this is possible with qudit stabilizer codes with dimension $q=4$.
\end{theorem*}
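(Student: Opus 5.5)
The proof has three parts: show that stabilizer realizations force the braiding phases to be $q$-th roots of unity, transfer this through constant-depth circuits, and then convert the obstruction into a bound that is extensive in $n$.

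\textbf{Step 1: mutual braiding of stabilizer string operators.} Suppose $\ket{\psi}=U\ket{S}$ with $U\in\LU_d$ and $\ket{S}$ a stabilizer state, and suppose $\ket{\psi}$ realizes the ground state of an Abelian string-net (in the entanglement bootstrap sense) on a large disk. I would use the information convex set / bootstrap machinery to define, for each Abelian anyon $a$, a string operator on an annulus that creates the superselection sector. Pulling back by $U$ only thickens supports by $O(d)$. On the stabilizer state $\ket{S}$, I would argue that the sectors can be represented by \emph{Pauli} string operators. The reason is that the reduced states of a stabilizer state on the relevant regions (annuli, disks) are stabilizer mixed states, whose extreme points in the information convex set are stabilizer states. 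These are related by Pauli operators drawn from the cleaning/logical structure of the local stabilizer group restricted to the annulus. The mutual braiding phase $e^{2\pi i\theta_{ab}}$ is then computed as the group commutator of two crossing string operators (the standard interferometric argument: $W_aW_bW_a^\dagger W_b^\dagger$ on linked loops). For generalized Paulis on $\mathbb{Z}_q$ qudits, any group commutator is a power of $\omega=e^{2\pi i/q}$, so every mutual braiding phase is a $q$-th root of unity.

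This step also gives the first part of the theorem, because braiding phases are topological invariants preserved by $U$ once $d$ is small compared to the region sizes. For non-prime $q$ I would work with the full Weyl–Heisenberg group, where commutation phases still lie in $\langle\omega\rangle$, so no primality or translation invariance is needed.

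\textbf{Step 2: extensivity of long-range magic.} Assume some mutual braiding phase is not a $q$-th root of unity. I would tile the lattice into $\Theta(n/\ell^2)$ disjoint disks of radius $\ell=O(d)+c_{\UFC}$, chosen large enough that a disk detects the braiding phase. Since the string-net ground state satisfies the bootstrap axioms locally, each disk region can certify the braiding phase. By Step 1, any state $\sigma$ whose reduced density matrix on a disk is sufficiently close to that of $\ket{\psi_{\A}}$ cannot be of the form $U\rho_{\rm stab}U^\dagger$ restricted there. Hence each disk carries a constant amount $\delta(\UFC,q,d)>0$ of distance to the set of stabilizer reduced states. To obtain this constant I would use compactness: the set of $\LU_d$-stabilizer reduced states on a fixed finite region is compact, and it does not contain the string-net marginal.

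I would then invoke whatever additivity or superadditivity the magic measures $M$ of Sec.~\ref{subsec:LRM_magic_measures} satisfy under tensor structure and partial trace, for example monotonicity of log robustness under partial trace together with a subadditivity-type lower bound across disjoint regions. This should give $M(U\ketbra{\psi_{\A}}U^\dagger)\ge \Theta(n/\ell^2)\,\delta$, which is $n\cdot\calC_{\A}$.

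\textbf{Main obstacle.} The hardest step is Step 1's claim that sector-creating operators on a stabilizer state can be chosen to be Pauli, with commutators that genuinely compute the braiding phase. This requires showing that the information convex set of a stabilizer state on an annulus is spanned by Pauli-related stabilizer extreme points. I would first try to prove this by the structure theorem for stabilizer mixed states (local Clifford decomposition into Bell pairs and classical correlations across the annulus cut), which identifies the sectors with Pauli logical operators of the annulus. Extensivity in Step 2 also needs care, since a generic magic measure need not be superadditive over regions. I would address this by passing to a measure known to be monotone and additive on product-like decompositions, for example by first decoupling the disks with the Markov structure of the bootstrap states.
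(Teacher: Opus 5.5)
Your Step 1 follows the paper's route in outline. Extreme points of the information convex set of a stabilizer projection state are re-phasings of the generators of $\calL$. Any re-phasing is implemented by conjugation with a Pauli; the paper proves this for arbitrary $q$ by showing that $P\mapsto\chi_P$ surjects onto the character group of $S$, which is cleaner than a Bell-pair canonical form for composite $q$. Crossing Pauli strings then produce only powers of $\omega$. Two things you assert still need argument:
\begin{itemize}
\item The obstruction must be \emph{local}. Step 2 needs the statement that a single $O(d)$-size patch marginal of $U\ketbra{\psi_{\A}}U^\dagger$ is not a stabilizer projection state. In that statement, decoupling and local extendibility come from the string-net and only the patch is assumed stabilizer; your Step 1 is phrased for a global $U\ket{S}$.
\item The commutator of your Pauli strings computes the \emph{native} braiding phase only once you show that the $S$-matrix extracted from the state is independent of strip deformations and of the endpoint/purification choice. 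Only then do Pauli strings and native string-net strings give the same $S_{ab}$.
\end{itemize}

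The genuine gap is Step 2, the amplification to an extensive bound. First, tiling into adjacent disks does not make the patch marginal a product. You need patches separated by a large multiple of $d$ so that their backward lightcones are disjoint. The vanishing mutual information between separated disks of the string-net then gives $\rho_{A_1\cup\cdots\cup A_m}=\bigotimes_i\rho_i$ exactly.

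More seriously, the additivity tools you propose point the wrong way. $\Lr$ is subadditive, since $1+2\R$ is submultiplicative. Monotonicity under partial trace only yields $M(\rho)\ge M(\rho_{\rm patches})$, and no general superadditivity is available for these measures. The essential difficulty, which your plan does not address, is that even for a product $\bigotimes_i\rho_i$, the optimal comparison stabilizer state restricted to the patches is a stabilizer projection state that may be entangled across patches. So per-patch distances $\delta$ do not simply add, and ``Markov decoupling'' of $\rho$ does nothing to the comparison state.

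The idea you are missing is the paper's multiplicative fidelity bound, built in four steps:
\begin{itemize}
\item Bound $\LF$, which lower-bounds every other listed measure. This covers all $M$ at once, whereas picking one ``additive'' measure would not.
\item For a pure state, the stabilizer fidelity is attained at a pure stabilizer state, whose patch marginal is some $\sigma\in\SP$.
\item Measuring one patch with a stabilizer measurement keeps the conditional states of $\sigma$ on the remaining patches inside $\SP$. This gives $F(\rho_1\otimes\rho_2,\SP_{12})\le F_{\SM}(\rho_1,\SP_1)\,F(\rho_2,\SP_2)$, which iterates over patches.
\item A Pauli measurement distinguishes with $\|\rho-\tau\|_{\SM}\ge\mathrm{D}^{-1}\|\rho-\tau\|_1$, where $\mathrm{D}$ is the patch Hilbert-space dimension. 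By Fuchs--van de Graaf, each patch then contributes a factor at most $\sqrt{1-\delta^2/(4\mathrm{D}^2)}$.
\end{itemize}
Together these give $\LF\ge m\log\bigl(1-\delta^2/(4\mathrm{D}^2)\bigr)^{-1}$ with $m=\Theta(n/d^2)$. Your compactness argument for a uniform $\delta(\UFC,q,d)>0$ is fine and is what feeds into this bound.
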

We note that it is known \cite{Ellison:2021vth} that every twisted quantum double with Abelian anyons can be realized as the ground-state of a Pauli stabilizer code; our result implies that the qudit dimension of such a stabilizer code cannot be arbitrary.

We now provide a high-level argument for our results. Our first objective is to demonstrate that stabilizer states cannot arise in the sufficiently low-energy spectrum of a non-Abelian topological quantum phase. Since no parent Hamiltonian is specified for the stabilizer state, making sense of this statement requires extracting the anyon data directly from the wavefunction. There are currently two such approaches: an algebraic approach~\cite{Haah:2014iuw} and an entanglement-bootstrap approach~\cite{Shi:2019mlt}. In this work, we prove that stabilizer states that satisfy the entanglement bootstrap axioms can only realize a restricted set of Abelian topological orders. This statement is clearly true if the anyons are defined with respect to the parent Hamiltonian which is given by the sum of the stabilizer generators.  In that case, excitations are created by Pauli operators, which immediately rules out non-Abelian anyon statistics. A defining feature of non-Abelian anyons is a non-deterministic fusion:
\begin{equation}
a\times b=\sum_cN_{ab}^cc,
\end{equation}
while Abelian anyons have definite fusion outcome. If all anyons are Pauli strings, then putting one on top of another gives a single Pauli string rather than a superposition. Hence the fusion outcome is deterministic. Moreover, the fact that anyons are Pauli strings strongly constrains the possible mutual braiding phases. Consider a braiding experiment where an anyon crosses another. Since the strings are Pauli's, crossing the operators only produce a phase $e^{i\phi}$:
\begin{equation}\label{eq:braiding_experiment}
\vcenter{\hbox{\includegraphics[height=1.2cm]{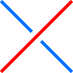}}}\ 
\quad=\quad
\vcenter{\hbox{\includegraphics[height=1.2cm]{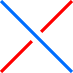}}}\quad e^{i\phi},
\end{equation}
where $\phi$ is an integer power of $2\pi i/q$, $q$ is qudit dimension. 

These conclusions can be made on a single disk with radius $O(d)$, when the state is acted upon by depth-$d$ local circuit. In total, we can find $O(n/d^2)$ such patches. In the non-Abelian ground state, the reduced density matrix on each patch cannot be a stabilizer projection state, namely states which are proportional to spectral projectors of a stabilizer group, see Fig~\ref{fig:patches_with_magic}.
\begin{figure}
    \centering
    \includegraphics[width=.9\linewidth]{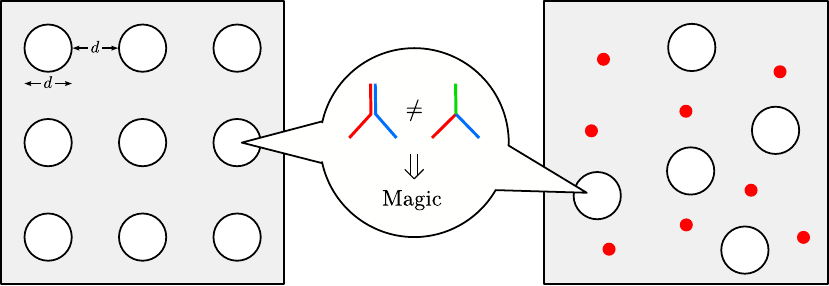}
    \caption{Patch construction for non-Abelian states. In a non-Abelian string-net ground-state (left) or a finite energy-density excited state (right) evolved by a depth-$d$ local unitary circuit, there are extensively-many well-separated patches that are anyon-free and hence locally agree with a ground state density matrix. The entanglement bootstrap axioms allow us to define and extract universal topological data within each such patch. A fictitious experiment, involving creating and braiding anyons in each patch reveals the existence of a non-trivial fusion space, which we demonstrate to be inconsistent with stabilizerness. This leads to extensive, long-range magic in either case.} %\YZ{Patch construction for non-Abelian states. Left: on a non-Abelian string-net ground state evolved by a depth-$d$ local unitary, one can find an extensive number of well-separated clean patches that are anyon free and hence agree with the ground state density matrix. Right: on a low-energy excited state with a low density of anyon excitations, the anyon endpoints are enlarged to regions of width $O(d)$ after the depth-$d$ circuit, but one can still find an extensive number of well-separated clean patches, as long as $d$ doesn't scale badly with the energy density. Zooming in on each patch,} }
    \label{fig:patches_with_magic}
\end{figure}
We use this property to show that there is at least $O(n/d^2)$ magic in total. This line of argument is similar to~\cite{Li:2025vru}, but here we need to show an exponentially small fidelity between the state over patches and a stabilizer state. A challenge is that the stabilizer state can be entangled across the patches, meaning that the fidelity isn't simply multiplicative. We overcome this by proving a multiplicative fidelity upper bound, using the notion of fidelity under stabilizer measurements.

The same physical picture also extends to low-energy states. A state with low energy has most of its weight on eigenstates with a low density of anyon excitations. After applying a depth-$d$ local unitary, each anyon endpoint is enlarged to a region of diameter $O(d)$. Provided the circuit depth is not too large compared with the typical separation between anyons, one can still find many well-separated clean, anyon-free patches whose reduced density matrices agree with those of the ground state. Utilizing the same local obstruction and the multiplicative fidelity upper bound on these clean patches yields an extensive lower bound on the long-range magic of low-energy states.

In the argument above we have shown that there is at least extensive long-range magic. We further show that the long-range magic after applying a circuit of depth $d$ is upper bounded by $O(n/d^2)$ by presenting a local circuit that decreases the magic down to $O(n/d^2)$ at depth $d$. The circuit is given by the entanglement renormalization circuit, which coarse grains the system and extracts the magic at increasingly IR length scales. String-net models are fixed points under entanglement renormalization, in the sense that the circuit is composed of exactly the same gates at every scale. Importantly, in the non-Abelian case, each gate is non-Clifford, suggesting that there is extensive magic at all scales.

\section{Stabilizer states only host restricted Abelian excitations}\label{sec:EB}

In this section, we show that stabilizer states evolved by constant-depth local unitaries cannot realize the ground state of any non-Abelian string-net model. In fact, we prove a stronger, local statement: if, after applying a depth-$d$ local unitary to a string-net ground state, there exists a subregion of width $\Theta(d)$ whose reduced state is a stabilizer projector state---a state proportional to spectral projectors of a stabilizer group, then the string-net model is necessarily Abelian. We further show that the corresponding anyonic excitation operators can be chosen to be Pauli operators, which forces all mutual braiding phases to be quantized by the qudit dimension.

The main idea behind the proof is to combine the structure of states that host non-Abelian anyons, phrased in the formalism of entanglement bootstrap~\cite{Shi:2019mlt}. If the underlying state hosts a non-Abelian anyon, there should be a nontrivial fusion space that can store quantum information. In particular, there should be states supported on the fusion space that do not commute with each other. Entanglement bootstrap lets us port this abstract statement about fusion space to a concrete structural property of a given physical state. In particular, this structure is shown to be incompatible with stabilizer states, which already rules out non-Abelian anyons for stabilizer states. We further prove constraints on the topological $S$-matrix of an anyon model based on stabilizer formalism, which provide a more fine-grained information.

% Before proceeding to the results, let us make a few remarks on an important subtlety. The original axioms that were advocated in entanglement bootstrap~\cite{Shi:2019mlt} are known to be fragile. They can break under constant-depth quantum circuits, making the framework not applicable to some topological states [cite spurious tee]. Nonetheless, we emphasize that some of the structural properties derived under these axioms do survive under constant-depth quantum circuits, a fact that we prove in this section. The prime examples include the (i) braiding statistics of the anyons and (ii) a set  of states known as the information convex set~\cite{Shi:2019mlt}. Our argument relies solely on the robustness of these objects, and as such, do not suffer from the examples in [cite spurious tee].

% We also remark that there is currently an ongoing work that aim to propose a more robust version of the axioms [cite], which is briefly described in~\cite{Kim:2024bsn}. We also describe an argument based on this alternative approach, for completeness.

\subsection{Entanglement bootstrap perspective on non-Abelian TQO}
\label{subsec:nonabelian_eb}

The ground state of a non-Abelian topological quantum order is a many-body wavefunction in which gapped  excitations can exhibit non-Abelian anyon statistics. Defining these excitations naturally requires knowledge of the Hamiltonian and its low-energy spectrum. In contrast, the entanglement bootstrap~\cite{Shi:2019mlt} provides a way to define these excitations from just the ground-state wavefunction, itself. In this approach, one posits that the given many-body ground-state  obeys certain locally-checkable conditions. Once these conditions are satisfied, topological excitations can be defined by investigating obstructions to constructing the state from knowledge of the state on local patches and in various topologies. % Universal properties of the topological quantum order are  %Physically, a the ground state of a non-Abelian topological quantum order is a many-body wavefunction in which gapped excitations can exhibit non-Abelian anyon statistics. While intuitive, this physical picture implicitly uses Hamiltonian and low-energy excited states, which are objects that may not be definable from a single state a priori. Entanglement bootstrap [22] is an approach that makes this possible. In this approach, one posits that the given ground state wavefunction obeys certain locally checkable conditions. Once these conditions are satisfied, the familiar notion of low-energy excitations and their properties can be defined just from that single state. Once these conditions are satisfied, the familiar notion of low-energy excitations and their properties can be defined just from that single state. 

The axioms of the entanglement bootstrap are summarized in Fig.~\ref{fig:EB_axioms_old}. More precisely, we consider a many-body quantum state $\sigma_0$ that obeys the following conditions.  Throughout our discussion here, we use the distance metric of the ambient space on which the qubits/qudits are placed (e.g., Euclidean distance). 
Let $\mu$ be a set of balls of radius $r_1$. For each $r_1$-ball, we choose a finite set of subsystems topologically equivalent to Fig.~\ref{fig:EB_axioms_old}, and demand that these axioms are satisfied. With an appropriate overlapping choice of this set, it follows that these conditions continue to hold at a larger scale, say $r'$ for any $r' \ge cr_1$ for some sufficiently large constant $c$. We envision choosing $r_1$ and the finite set so that this statement holds.\footnote{A rigorous formulation of this statement will be discussed in Ref.~\cite{EB:newpaper}. However, for string-nets~\cite{wen_string_net2005}, the fact that axioms hold at a larger scale follows from an easier argument; the axioms in Fig.~\ref{fig:EB_axioms_old} follow from the entanglement area law, which was shown in Ref.~\cite{Kitaev:2005dm,Levin_2006}.}

\begin{figure}
\centering
\subfigure[]{\includegraphics[width=0.32\textwidth]{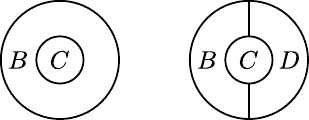}\label{fig:EB_axioms_old}}\qquad\qquad\qquad
\subfigure[]{\includegraphics[width=0.32\textwidth]{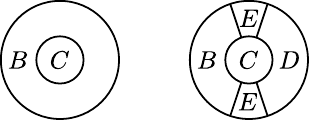}\label{fig:EB_axioms_new}}
\caption{(a) Original axioms of the entanglement bootstrap. Left: $S_{\sigma_0}(BC)+S_{\sigma_0}(C)-S_{\sigma_0}(B)=0$. Right: $S_{\sigma_0}(BC)+S_{\sigma_0}(CD)-S_{\sigma_0}(B)-S_{\sigma_0}(D)=0$. (b) New axioms, which are weaker but have the advantage of being robust against constant-depth quantum circuits~\cite{Kim:2024bsn}. Left: $S_\sigma(BC)+S_\sigma(C)-S_\sigma(B)=0$. Right: the state can be locally extended from $BE$ to $BC$. That is, let $A$ be the purification of $BCDE$, there exist a channel $\Phi_{BE\rightarrow BC}$ such that $\calI_A\otimes\Phi_{BE\rightarrow BC}(\sigma_{ABE})=\sigma_{ABC}$.}
\label{fig:EB_axioms}
\end{figure}

%{\color{purple} 

Using this approach, we seek to identify universal properties of topological ground states that support non-Abelian anyons. Specifically, we study states that are related by a constant-depth unitary circuit to a ground-state wavefunction satisfying the entanglement bootstrap axioms~\cite{Shi:2019mlt}. Canonical examples include string-net ground states~\cite{wen_string_net2005} and states obtained from them by the action of a constant-depth circuit.

A subtlety is that the entanglement bootstrap axioms themselves are not necessarily preserved under such constant-depth circuits. Consequently, even if a topologically ordered ground state lies in the same phase as a reference state that does satisfy the axioms, such as a string-net ground state, the axioms do not automatically apply directly to that state~\cite{Kitaev:2005dm,Levin_2006,bravyi2008unpublished,Zou:2016dck}. Nevertheless, we show that certain consequences of the axioms remain stable under constant-depth circuits. In particular, these robust consequences constrain the structure of the information convex set, defined below, and encode universal topological data associated with the non-Abelian character of the low-energy excitations. 

For the string-net models considered in this work, we use the identification between the data extracted from a given state via the bootstrap axioms, such as the anyon labels and quantum dimensions, with the intrinsic topological data of the phase represented by that wavefunction. Indeed, a recent work established such an identifications for string-net wavefunctions~\cite{bols2025sector,bols2026sector} as we describe in subsequent sections.
%\IK{Indeed, a recent work established such a connection for string-net wavefunctions~\cite{bols2025sector,bols2026sector}; see ~\cite[Remark 4.8]{bols2025sector} in particular.} 

%}

% We also briefly discuss an argument based on a complementary approach, in which one begins with an alternative, robust set of axioms~\cite{Kim:2024bsn} [cite new paper].

We now define the \emph{information convex set}. Let $\Om$ be a region and $\Om^{+}$ be a $r_1$-neighborhood of $\Om$. 
\begin{definition}[Information convex set~\cite{Shi:2019mlt}]
The information convex set of $\sigma_0$ on $\Omega$ is 
\begin{equation}
    \Sigma(\Omega)= \left\{ \rho_{\Omega}\,\big|\, \rho_{\Omega} = \mathrm{Tr}_{\Omega^+ \setminus \Omega}( \rho_{\Omega^+}), \,\rho_{\Omega^+} \in \tilde{\Sigma}(\Omega)\right\},
\end{equation}
where
\begin{equation}
    \tilde{\Sigma}(\Omega) = \left\{ \rho_{\Omega^+}\,\big|\,\mathrm{Tr}_{\Omega^+\setminus b} (\rho_{\Omega^+}) =\sigma_{0b}, \forall \,b\in \mu\right\},
\end{equation}
\end{definition}
\noindent 
Here, $\sigma_{0b}$ is the reduced density matrix of $\sigma_0$ over the ball $b$. Colloquially, the information convex set $\Sigma(\Omega)$ is a set of states which are locally identical to the reference state on $\Omega$. 

If the underlying state $\sigma$ satisfies the axioms in Fig.~\ref{fig:EB_axioms_old}, for any region $\Omega$ that is sufficiently ``nice'', the information convex set obeys a structure theorem~\cite{Shi:2019mlt,EB:newpaper}. More precisely, we say $\Omega$ is $r$-\emph{regular} if its boundary curvature is at most $r^{-1}$ and it admits an inner $r$-collar, a neighborhood of the boundary inside the region, and of thickness $r$. The precise statement is that for $\Omega$ that is $r$-regular for $r\gg r_1$, its information convex set obeys certain structure theorems~\cite{Shi:2019mlt}. 

For instance, if $\Omega$ is an annulus, its information convex set is isomorphic to a convex set of density matrices whose extreme points are orthogonal to each other. In particular, each extreme point corresponds to the topological charge of the underlying anyon theory. If $\Omega$ is an $n$-holed disk for $n\ge 2$, $\Sigma(\Omega)$ is isomorphic to a direct sum of state spaces over finite-dimensional Hilbert spaces. Moreover, if we have two regular regions $\Omega$ and $\Omega'$ that are isotopic to each other, there is a bijective map realized by a reversible quantum channel between $\Sigma(\Omega)$ and $\Sigma(\Omega')$; this is known as the isomorphism theorem~\cite{Shi:2019mlt}.

Importantly, these structure theorems continue to hold even if the underlying reference state is changed to $\sigma=U\sigma_0 U^{\dagger}$, where $U$ is a constant-depth quantum circuit. We provide a sketch of the argument below.

\begin{figure}
    \centering
    \includegraphics[width=0.5\linewidth]{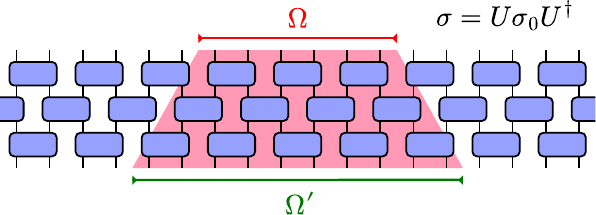}
    \caption{The state $\sigma = U\sigma_{0}U^{\dagger}$ obtained by a constant-depth circuit, along with regions $\Omega \subset\Omega'$ used in the proof of Proposition \ref{prop:info_convex_u_no_u}. Here, the image is shown in one spatial dimension for convenience. }
    \label{fig:Prop1_fig}
\end{figure}

\begin{proposition}
\label{prop:info_convex_u_no_u}
    Let $\Sigma'(\Omega)$ be an information convex set defined with respect to $\sigma=U\sigma_0 U^{\dagger}$, where $U$ is a depth-$d$ local quantum circuit. If $\Omega$ is $r$-regular for $r\gg d$, 
    \begin{equation}
        \Sigma(\Omega) \cong \Sigma'(\Omega).
    \end{equation}
    In particular, the bijection can be realized by a reversible quantum channel. 
\end{proposition}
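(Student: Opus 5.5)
The plan is to build explicit channels in both directions from the circuit $U$ and its inverse, acting only inside a slightly enlarged region, and to check that they carry each information convex set into the other. Write $U$ as a product of $d$ layers of gates of bounded range. For a region $X$, let $U_X$ denote the product of those gates whose backward lightcone meets $X$. Such a $U_X$ is supported on the $O(d)$-neighborhood $X^{(d)}$, and it agrees with $U$ whenever we conjugate operators supported on $X$. Choose $\Omega' \supset \Omega$ with $\Omega'$ the $c d$-neighborhood of $\Omega$, as in Fig.~\ref{fig:Prop1_fig}. Since $\Omega$ is $r$-regular with $r\gg d$, the region $\Omega'$ is isotopic to $\Omega$ and is still regular. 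By the isomorphism theorem of~\cite{Shi:2019mlt} for $\sigma_0$, there is therefore a reversible channel $\Sigma(\Omega)\to\Sigma(\Omega')$ together with its inverse partial trace.

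\emph{Forward map.} Given $\rho\in\Sigma(\Omega)$, first extend it reversibly to $\rho'\in\Sigma(\Omega')$. Then conjugate by the truncated circuit $U_{\Omega'}$, keeping only gates fully inside $\Omega'$, and trace out the collar $\Omega'\setminus\Omega$.

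Let $b$ be any ball of radius $r_1$ whose $\sigma$-marginal we must match inside $\Omega$. Its backward lightcone $b^{(d)}$ is a ball of radius $r_1+O(d)$ lying inside $\Omega'$. Because the axioms hold at the larger scale $r'\ge c r_1$, and balls lie in the information convex set with a unique element, $\rho'_{b^{(d)}}=\sigma_{0,b^{(d)}}$. Hence the image of $\rho'$ on $b$ equals $(U\sigma_0U^\dagger)_b=\sigma_b$. The image is thus locally indistinguishable from $\sigma$ on every ball of $\Omega$ (after the appropriate $r_1$-thickening in the definition). This gives a map $\Sigma(\Omega)\to\Sigma'(\Omega)$.

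\emph{Backward map.} Do the same with $U^\dagger$. An element of $\Sigma'(\Omega)$ should first be extended to $\Sigma'(\Omega')$ using only $\sigma$-data. This is the step I expect to be the main obstacle, since $\sigma$ need not satisfy the axioms and the isomorphism theorem is not available for $\Sigma'$ directly.

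I would avoid that problem by routing through $\sigma_0$ and defining everything on $\Omega$ by composing in the opposite order. Take $\tau\in\Sigma'(\Omega'')$ with $\Omega''\supset\Omega'$ a further $O(d)$-thickening. Pull it back by $U^\dagger$ restricted to gates inside $\Omega''$, which lands in $\Sigma(\Omega')$ by the same lightcone argument with the roles of $\sigma,\sigma_0$ swapped. Then apply the $\sigma_0$-isomorphism $\Sigma(\Omega')\cong\Sigma(\Omega)$.

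It remains to identify $\Sigma'(\Omega)$ with the image of $\Sigma'(\Omega'')$ under partial trace. One inclusion is trivial. For the other, take $\tau_\Omega\in\Sigma'(\Omega)$ and pull it back to a state on $\Omega$ shrunk by $O(d)$. That state lies in $\Sigma$ of the shrunk region, which the $\sigma_0$-structure theorem lets us extend to $\Omega''$. Pushing forward by $U$ then gives an extension consistent with $\tau_\Omega$. This last consistency uses the Markov/recovery property of elements of $\Sigma$, which the axioms give for $\sigma_0$: the extension is generated by a local recovery channel on the collar, so it preserves the marginal on the interior.

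Finally, I would check that the forward and backward maps compose to the identity on $\Sigma(\Omega)$ and on $\Sigma'(\Omega)$. Inside the lightcone, $U_{\Omega'}$ and $U^\dagger_{\Omega''}$ cancel exactly on operators supported away from the collar. The isomorphism-theorem channels of~\cite{Shi:2019mlt} are mutually inverse. Both maps are compositions of unitaries, partial traces and the reversible extension channels, so each is a reversible quantum channel. This gives $\Sigma(\Omega)\cong\Sigma'(\Omega)$.
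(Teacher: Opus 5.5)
There is a genuine gap in the backward direction. It sits at the step ``Pushing forward by $U$ then gives an extension consistent with $\tau_\Omega$,'' and it reappears in your final check that the composite is the identity on $\Sigma'(\Omega)$.

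You pull $\tau_\Omega$ back onto $\Omega$ shrunk by $O(d)$, so the pulled-back state only retains what $\tau_\Omega$ looks like in the interior. After extending with $\sigma_0$-data and pushing forward by $U$, the lightcone cancellation gives agreement with $\tau_\Omega$ only on $\Omega$ minus a collar of width $\sim 2d$. On that collar, the pushed-forward state is whatever the $\sigma_0$-recovery produces. It coincides with $\tau_\Omega$ only if elements of $\Sigma'(\Omega)$ are determined by their interior marginals. That is a Markov/isomorphism property for $\sigma$ itself, which is exactly the obstacle you flagged.

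The recovery property you invoke is a statement about $\sigma_0$. It only says the extension preserves the interior marginal and says nothing about the collar of $\tau_\Omega$. So the argument relocates the obstacle rather than removing it. You would also still need to show that your backward map does not depend on which extension in $\Sigma'(\Omega'')$ is chosen.

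The paper never enlarges on the $\Sigma'$ side. Its backward map pulls $\tau_\Omega$ back by $U^\dagger$ onto a region $\Omega_{\mathrm{in}}\subset\Omega$ that is isotopic to $\Omega$ and whose forward lightcone lies in $\Omega$. This lands in $\Sigma(\Omega_{\mathrm{in}})$, and the isomorphism theorem for $\sigma_0$ then returns to $\Sigma(\Omega)$. Only $\sigma_0$ ever needs the structure theorems, and no extension of $\Sigma'$-elements is required.

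With this map $G$ and your forward map $F$, the identity $G\circ F=\mathrm{id}$ follows from two facts: the lightcone cancellation, and uniqueness of extensions in $\Sigma$.

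To close the loop on the $\Sigma'$ side (the paper's sketch does not spell this out), use the thickening built into the definition of $\Sigma'(\Omega)$ rather than any Markov property of $\sigma$:
\begin{itemize}
\item $\tau_\Omega$ is the marginal of some $\tau_{\Omega^+}$ that is consistent with $\sigma$ on balls.
\item Assume that thickening, and the ball radius used for $\sigma$, are a large enough multiple of $d$. Then pulling back $\tau_{\Omega^+}$, not $\tau_\Omega$, gives a $\sigma_0$-consistent state on a region $\hat\Omega\supseteq\Omega^{+d}$ isotopic to $\Omega$.
\item Pushing that state forward returns $\tau_\Omega$ exactly on all of $\Omega$, collar included.
\end{itemize}
Hence $F$ is onto, and $F\circ G=\mathrm{id}$ as well.
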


\begin{proof}
    We first construct a channel from $\Sigma(\Omega)$ to $\Sigma'(\Omega)$. We consider $\Omega'\supset \Omega$ that is isotopic to $\Omega$ and includes the past lightcone of $\Omega$. By the isomorphism theorem~\cite{Shi:2019mlt}, there is a channel that maps $\Sigma(\Omega)$ to $\Sigma(\Omega')$. We now apply the portion of the circuit $U$ whose support can affect $\Omega$, and then trace out $\Omega'\setminus \Omega$. The resulting states belong to $\Sigma'(\Omega)$. An example of these regions, and the relevant lightcone are shown in Fig. \ref{fig:Prop1_fig}.

    In the opposite direction, we consider a $\Omega'' \subset \Omega$ isotopic to $\Omega$ such that the future lightcone of $\Omega''$ is included in $\Omega$. Starting from an element of $\Sigma'(\Omega'')$. we can apply the the inverse of the lightcone of $\Omega''$ and trace out $\Omega \setminus \Omega''$, which yields elements in $\Sigma(\Omega'')$. Since $\Omega''$ is isotopic to $\Omega$, we can apply the isomorphism theorem to obtain an element in $\Sigma(\Omega)$. 
\end{proof}

Here it is important to note that we used the isomorphism theorem --- a property that follows from the entanglement bootstrap axioms [Fig.~\ref{fig:EB_axioms_old}] --- only with respect to the original state $\sigma_0$. Therefore, even though $\sigma=U\sigma_0 U^{\dagger}$ may not satisfy the axioms, there is a bijective map between $\Sigma'(\Omega)$ and $\Sigma(\Omega)$. In particular, if $\Omega$ is a $n$-holed disk with $n\ge 2$, $\Sigma'(\Omega)$ is isomorphic to a direct sum of state spaces of some finite-dimensional Hilbert spaces. We also note that, as a simple corollary of Proposition~\ref{prop:info_convex_u_no_u}, the isomorphism theorem also holds for $U\sigma_0 U^{\dagger}$.

The non-Abelian character of the anyon excitations is intimately related to the information convex sets of $(n\geq 2)$-holed disk. For such subregions, the information convex set can be further decomposed into a direct sum over the topological charges measured on thin annular neighborhoods of its boundary components, namely the $n$ inner boundaries and the outer boundary, see Fig.~\ref{fig:EB_punctured_disk}. Once these topological charges are fixed, we are left with a state space of a finite-dimensional Hilbert space, whose dimension is precisely the dimension of the fusion space. In particular, the existence of a non-Abelian anyon necessarily implies that there should be a nontrivial fusion space. 
\begin{figure}
    \centering
    \includegraphics[width=0.32\linewidth]{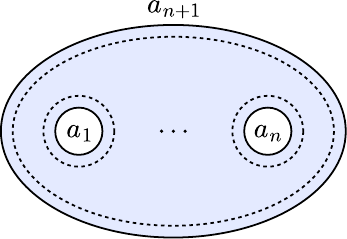}
    \caption{The $n$-holed disk. The information covex set of the $n$-holed disk decomposes into a direct sum over convex sets labeled by topological charges measured on the thin annular neighborhoods of its boundary components. $a_1,\ldots,a_n$ are the charges in the holes measured by thin annuli surrounding them; $a_{n+1}$ is the overall charge measured by the thin annular neighborhood of the outer boundary. This particular sector is isomorphic to the fusion space where $a_1,\ldots,a_n$ fuse into $a_{n+1}$.}
    \label{fig:EB_punctured_disk}
\end{figure}

We can formalize this intuition as follows. For an $n$-holed disk $Z$, its information convex set can be decomposed into mutually orthogonal sectors as 
\begin{equation}
    \Sigma(Z) = \bigoplus_{a_1, \ldots, a_{n+1}} \Sigma_{a_1\ldots a_n}^{a_{n+1}}(Z),
\end{equation}
where $a_1, \ldots, a_{n+1}$ label topological charges. Each $\Sigma_{a_1\ldots a_n}^{a_{n+1}}(Z)$ is isomorphic to a state space of some finite-dimensional Hilbert space, called as the \emph{fusion space}. Their dimension is denoted as $N_{a_1\ldots a_n}^{a_{n+1}}$. It was shown in Ref.~\cite{Shi:2019mlt} that these objects satisfy the fusion rules of the anyon theory in Ref.~\cite{Kitaev:2005hzj}.

For the string-net wavefunctions, it is possible to make the connection between information convex sets and the anyon data explicit. In Ref.~\cite{bols2025sector}, the authors demonstrated that the information convex sets of annulus and two-hole disk for the string-net wavefunction contains information about anyon types and fusion rules; see Lemma 4.7 and Remark 4.8 therein. Since ~\cite[Lemma 4.7]{bols2025sector} applies to arbitrary regions, it should also apply to more general regions considered below.

Using this connection, we show below that for $n\geq 3$-holed disk, there should be a nontrivial fusion space. Consider the process of fusing three anyons $a$, $\bar a$, and $a$ to total charge $a$. Let $N_{a\bar a a}^a$ denote the dimension of the fusion space, i.e. the number of internal states in the fusion tree
\begin{equation*}
\includegraphics[height=3cm]{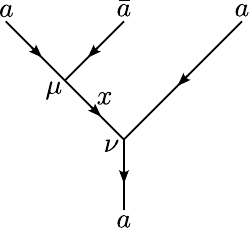}.
\end{equation*}
In this basis, one first fuses $a$ and $\bar a$ to an intermediate anyon $x$, and then fuses $x$ with the remaining $a$ back to $a$. For each allowed $x$, let $\mu\in\{1,\dots,N_{a\bar a}^x\}$ label the fusion channel at the first vertex and let $\nu\in\{1,\dots,N_{x a}^a\}$ label the fusion channel at the second vertex, where $N_{x a}^a = N_{a\bar a}^x$. The multiplicity simply counts the number of possible combinations of $x$, $\mu$ and $\nu$, given by
\begin{equation}
N_{a\bar a a}^a=\sum_x (N_{a\bar a}^x)^2=1+\sum_{x\neq 1}(N_{a\bar a}^x)^2,
\end{equation}
where we used $N_{a\bar a}^1=1$. If an anyon $a$ is non-Abelian, it has a quantum dimension $d_a>1.$ Since~\cite{Kitaev:2005hzj}
\begin{equation}
d_a^2=d_a d_{\bar a}=\sum_x N_{a\bar a}^x\, d_x,
\end{equation}
the condition $d_a>1$ implies that $a\times\bar a$ must have at least one non-vacuum fusion outcome. In other words, there exists some $x\neq 1$ such that $N_{a\bar a}^x>0$, and hence
\begin{equation}\label{eq:higher_fusion_multiplicity}
N_{a\bar a a}^a>1.
\end{equation}

Eq.~\eqref{eq:higher_fusion_multiplicity} implies that the information convex set of a three-holed disk $Z$ must be nontrivial. In particular, there should be a topological charge $a$ such that $\Sigma_{a \bar{a} a}^a(Z)$ is isomorphic to a state space of some nontrivial finite-dimensional Hilbert space. For the states in that abstract space, it is a simple fact that there exists at least a pair of states that do not commute with each other. From this, we show that there should exist a pair of states in the information convex set $\Sigma(Z)$ that do not commute with each other.

\begin{lemma}
\label{lemma:commutation_bijection}
    Let $\mathcal{D}(\mathcal{H})$ be a set of density matrices of a finite-dimensional Hilbert space $\mathcal{H}$ with $\dim\calH>1$. If $\mathcal{D}(\mathcal{H})$ is isomorphic to a convex set $\mathrm{S}$ of finite-dimensional density matrices by a reversible channel, there exists $\rho, \rho' \in \mathrm{S}$ such that $[\rho, \rho']\neq 0$.
\end{lemma}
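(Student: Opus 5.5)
Argue by contradiction: assume every pair of states in $\mathrm{S}$ commutes, and transport this back to $\mathcal{D}(\mathcal{H})$ using the reversible channels. Let $\Phi:\mathcal{D}(\mathcal{H})\to\mathrm{S}$ be the bijection realized by a channel, and let $\Psi$ be the channel inverting it on $\mathrm{S}$. So $\Psi\circ\Phi=\mathrm{id}$ on $\mathcal{D}(\mathcal{H})$, and $\Phi\circ\Psi=\mathrm{id}$ on $\mathrm{S}$. If all elements of $\mathrm{S}$ commute pairwise, they are simultaneously diagonalizable. Hence there is an orthonormal basis $\{\ket{k}\}$ of the output space with every $\sigma\in\mathrm{S}$ diagonal in it, and the fully dephasing channel $\Delta(X)=\sum_k \ketbra{k}X\ketbra{k}$ acts trivially on $\mathrm{S}$.

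Put $\Phi'=\Delta\circ\Phi$. Because $\Phi(\mathcal{D}(\mathcal{H}))=\mathrm{S}$ and $\Delta$ fixes $\mathrm{S}$, we have $\Phi'=\Phi$ on $\mathcal{D}(\mathcal{H})$. Therefore $\Psi\circ\Delta\circ\Phi=\mathrm{id}$ on all density matrices of $\mathcal{H}$, and by linearity on all operators on $\mathcal{H}$. This identity map factors through the image of $\Delta$, which is a commutative algebra of diagonal matrices. In other words, the identity channel on $\mathcal{H}$ would factor as a quantum channel into a classical system followed by a channel out of it, i.e.\ an entanglement-breaking (measure-and-prepare) channel.

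The key step, and the only real obstacle, is ruling this out when $\dim\mathcal{H}>1$. I would do it by a short distinguishability argument. Pick two non-orthogonal, distinct pure states in $\mathcal{H}$, e.g.\ $\ket{0}$ and $\ket{+}=(\ket{0}+\ket{1})/\sqrt2$, which exist because $\dim\mathcal{H}\ge2$.

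Each of $\ketbra{0}$ and $\ketbra{+}$ lies in $\mathcal{D}(\mathcal{H})$, so $\Phi$ sends it into $\mathrm{S}$, to a state that is diagonal in $\{\ket{k}\}$. Write these images as $\sigma_0=\sum_k p_k\ketbra{k}$ and $\sigma_+=\sum_k q_k\ketbra{k}$. Applying $\Psi$ must recover the pure states exactly.

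A pure state is an extreme point. Since $\Psi$ is linear, $\ketbra{0}=\sum_k p_k\Psi(\ketbra{k})$ forces $\Psi(\ketbra{k})=\ketbra{0}$ for every $k$ with $p_k>0$. Similarly, $\Psi(\ketbra{k})=\ketbra{+}$ for every $k$ with $q_k>0$. These two conditions cannot both hold for a single $k$, so the supports of $p$ and $q$ are disjoint. Hence $\sigma_0\perp\sigma_+$, so the trace distance between them is $1$.

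But trace distance is contractive under the channel $\Phi$, so $\tfrac12\|\sigma_0-\sigma_+\|_1\le\tfrac12\|\ketbra{0}-\ketbra{+}\|_1=1/\sqrt2<1$. This is a contradiction, so some $\rho,\rho'\in\mathrm{S}$ satisfy $[\rho,\rho']\ne0$. The orthogonality conclusion is actually reached directly without invoking $\Delta$, so the final write-up can skip the dephasing step and use only simultaneous diagonalization plus this extreme-point/contractivity argument.
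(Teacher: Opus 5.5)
Your proof is correct, but it closes the argument differently from the paper. The paper also starts from simultaneous diagonalization and writes $\Phi(\sigma)=\sum_i \mathrm{Tr}(\sigma M_i)\,|i\rangle\langle i|$ as a measure-and-prepare map. It then argues via entanglement: $\mathcal{I}_A\otimes\Phi$ outputs only separable states, and no local channel $\mathcal{I}_A\otimes\Psi$ can turn these back into entangled inputs, so $\Phi$ cannot be reversible.

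Your first paragraph reaches the same reduction, namely that the identity would factor through a classical system. You then rule this out on a single system, with no reference system. Extremality of pure states forces the diagonal images of two distinct, non-orthogonal pure states to have disjoint supports, so they would be perfectly distinguishable, which contradicts trace-distance contractivity of $\Phi$.

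This route is more elementary. It does not need the facts that entanglement-breaking channels destroy entanglement and that local channels cannot create it. It also gives a sharper conclusion. You only use that $\Phi(|0\rangle\langle 0|)$ and $\Phi(|+\rangle\langle +|)$ commute, so you show that the images of \emph{any} two distinct, non-orthogonal pure states fail to commute, not merely that some non-commuting pair exists in $\mathrm{S}$.

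Both arguments rely on $\Psi$ being a genuine channel on the whole output space, in comparable ways. In your case, you need $\Psi(|k\rangle\langle k|)$ to be a density matrix for basis vectors $|k\rangle$ that need not themselves lie in $\mathrm{S}$. This is what ``reversible channel'' means in the paper's setting. As you note, the dephasing map $\Delta$ can be dropped from the final write-up.
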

\begin{proof}
    We prove the contrapositive. Let $\Phi:\mathcal{D}(\mathcal{H}) \to \mathrm{S}$ be a reversible channel. If $[\rho, \rho']=0$ for all $\rho, \rho'\in \mathrm{S}$, all the elements of $\mathrm{S}$ can be diagonalized simultaneously. By linearity, the action of $\Phi$ can be written as 
    \begin{equation}
        \Phi(\sigma) = \sum_i \text{Tr}(\sigma M_i) |i{\rangle\langle i}|
    \end{equation}
    for a set of operators $\{ M_i\}$, where $\{ |i\rangle\}$  is an orthonormal basis set  in the support of $\mathrm{S}$. In particular, because $\Phi$ is CPTP, it follows that $\{M_i\}$ forms a POVM.

    Note $\Phi$ is entanglement-breaking. That is, consider an extension $\mathcal{I}_A\otimes \Phi$. The image of this channel consists of separable states with respect to the bipartition between $A$ and the rest. Thus this channel cannot be reversible, which is a contradiction.  
\end{proof}

From Lemma~\ref{lemma:commutation_bijection}, it follows that there should be a pair of states in $\Sigma_{a\bar{a}a}^a(Z)$ that do not commute with each other, for some topological charge $a$. This is a key property of the ground state wavefunction of a non-Abelian topological order that holds true universally. As we show in Section~\ref{subsec:stab_projection}, stabilizer states are fundamentally incompatible with this constraint. We summarize this into the following theorem.
\begin{theorem}
\label{thm:nonabelian_noncommuting}
    Let $\sigma$ be a state obtained by evolving a non-Abelian string-net ground state with a depth-$d$ local unitary. Let $\Omega$ be a $r$-regular three-holed disk. The information convex set of $\Omega$ defined with respect to $\sigma$ contains a pair of non-commuting states, provided that $r\ge c d$ for some sufficiently large constant $c$.
\end{theorem}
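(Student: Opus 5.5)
The plan is to assemble the theorem from three ingredients already in place: the fusion-multiplicity computation, Lemma~\ref{lemma:commutation_bijection}, and Proposition~\ref{prop:info_convex_u_no_u}.

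\emph{Step 1: a nontrivial fusion space for $\sigma_0$.} Let $\sigma=U\sigma_0U^\dagger$, where $\sigma_0$ is the non-Abelian string-net ground state; it satisfies the axioms of Fig.~\ref{fig:EB_axioms_old} at scale $r_1$. Take $\Omega$ to be an $r$-regular three-holed disk with $r\gg r_1$. Then the structure theorem gives the sector decomposition
\[
\Sigma(\Omega)=\bigoplus_{a_1a_2a_3a_4}\Sigma^{a_4}_{a_1a_2a_3}(\Omega).
\]
By the identification of \cite{bols2025sector}, these labels and multiplicities are the anyon data of the UFC's Drinfeld center. Since the model is non-Abelian, there is an anyon $a$ with $d_a>1$. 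Eq.~\eqref{eq:higher_fusion_multiplicity} then gives $N^a_{a\bar a a}>1$. So $\Sigma^a_{a\bar a a}(\Omega)\cong\mathcal{D}(\mathcal{H})$ via a reversible channel, with $\dim\mathcal{H}>1$.

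\emph{Step 2: non-commuting states in $\Sigma(\Omega)$.} Lemma~\ref{lemma:commutation_bijection} applied to the convex set $\Sigma^a_{a\bar a a}(\Omega)$ yields $\rho,\rho'$ in it with $[\rho,\rho']\neq0$. These are elements of $\Sigma(\Omega)$.

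\emph{Step 3: transfer to $\Sigma'(\Omega)$.} For $r\ge cd$, Proposition~\ref{prop:info_convex_u_no_u} gives a reversible channel $\Phi:\Sigma(\Omega)\to\Sigma'(\Omega)$. Restricting $\Phi$ to $\Sigma^a_{a\bar a a}(\Omega)$ shows that its image $\mathrm S:=\Phi(\Sigma^a_{a\bar a a}(\Omega))\subset\Sigma'(\Omega)$ is convex and reversibly isomorphic to $\mathcal{D}(\mathcal{H})$. Applying Lemma~\ref{lemma:commutation_bijection} directly to $\mathrm S$ produces the required non-commuting pair in $\Sigma'(\Omega)$.

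The constant $c$ must be large enough for two purposes. First, the enlarged region $\Omega'$ (containing the past lightcone) and the shrunk region $\Omega''$ used in the proof of Proposition~\ref{prop:info_convex_u_no_u} must stay isotopic to $\Omega$ and remain $\Theta(r)$-regular with $\Theta(r)\gg r_1$, so that the isomorphism theorem for $\sigma_0$ applies. This requires collar thickness $r-2d\gtrsim r_1$. Second, the holes must not close up when $\Omega$ is shrunk by $d$.

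\emph{Main obstacle.} The delicate step is justifying that the channel of Proposition~\ref{prop:info_convex_u_no_u} is reversible on the chosen sector and maps convex sets to convex sets. The forward map is isomorphism, then $U$ on the lightcone, then partial trace. The backward map is the analogous composite through $\Omega''$. One has to check that backward$\circ$forward is the identity on $\Sigma(\Omega)$. My approach would be to observe that this composite is a channel $\Sigma(\Omega)\to\Sigma(\Omega)$ built from isomorphism-theorem maps and local unitary conjugations that cancel (since $UU^\dagger=1$ on the relevant lightcone). I would then invoke uniqueness of the isomorphism: elements of the information convex set are determined by their reductions through Markov-chain merging, so the composite must act as the identity. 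Linearity of the channel then handles convexity automatically.
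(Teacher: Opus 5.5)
Your proposal is correct and follows the paper's own argument: $N^a_{a\bar a a}>1$ yields a sector reversibly isomorphic to $\mathcal{D}(\mathcal{H})$ with $\dim\mathcal{H}>1$, Proposition~\ref{prop:info_convex_u_no_u} carries it into the information convex set of $\sigma$, and Lemma~\ref{lemma:commutation_bijection} applied to the image gives the non-commuting pair. Your Step 2 is redundant. You are right to reapply the lemma to the image instead of pushing $\rho,\rho'$ through $\Phi$, since a channel need not preserve non-commutativity.
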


\noindent 
\textbf{Note}: Theorem~\ref{thm:nonabelian_noncommuting} can be proved under a weaker assumption, starting with a recently studied robust axioms of entanglement bootstrap~\cite{Kim:2024bsn,EB:newpaper}; see Fig.~\ref{fig:EB_axioms_new}. Our argument can be viewed as a poor-man's version of that statement. We take this approach for two reasons. First, our argument hinges only on known results in the literature. Second, the concrete examples we envision are string-net wavefunctions, for which our argument does apply.

\subsection{Trivial fusion space in stabilizer projection states}
%\subsection{Stabilizer projection state: triviality of fusion space}
\label{subsec:stab_projection}

In Section~\ref{subsec:nonabelian_eb}, we proved a robust property of the ground states of non-Abelian topological orders. There should be a pair of states in the information convex set of a three-hole disk that do not commute with each other. This statement holds for any state of the form $\sigma=U\sigma_0 U^{\dagger}$, where $\sigma_0$ satisfies the entanglement bootstrap axioms [Fig.~\ref{fig:EB_axioms_old}] and $U$ is a constant-depth quantum circuit; see Theorem~\ref{thm:nonabelian_noncommuting}. 

Now we aim to show that this structure is incompatible with any stabilizer state. Specifically, we posit that on a ball region with sufficiently large $O(d)$ radius, a stabilizer projection state (Definition~\ref{def:stabilizer_projection_states}) arises as the reduced density matrix of a non-Abelian ground state evolved by a depth-$d$ local unitary.  This will lead to a contradiction, implying that such ground states have long-range magic.

Let us first set up our notations. On a $q$-dimensional qudit, a local basis can be labeled by $\{\ket{j}\}_{j=0}^{q-1}$. The Pauli $X$ and $Z$ matrices are defined as clock and shift operators, generalizing the qubit case:
\begin{equation}\begin{aligned}
Z\ket{j}&=\omega^j\ket{j},\qquad\omega:=e^{2\pi i/q}, \\
X\ket{j}&=\ket{j+1\ \mod q}.
\end{aligned}\end{equation}
The multi-qudit Pauli group is generated by the tensor product of single qubit Pauli operators and $\omega I$. A pure stabilizer state is the unique common eigenstate of a group of commuting Pauli operators. More generally, we can define a stabilizer projection state, defined below:
\begin{definition}[Stabilizer projection states]\label{def:stabilizer_projection_states}
Given a stabilizer group $S$ with generators $\calG(S)$, the associated stabilizer projection state is the maximally mixed state in the stabilizer code associated with $S$. That is, the state is proportional to the projector
\begin{equation}
\prod_{g\in\calG(S)}P(g),
\end{equation}
where $P(g)$ denotes the projector onto the eigenspace of $g$ with eigenvalue $1$. If $\delta$ is the order of $g$, i.e. the smallest positive integer such that $g^\delta\propto I$, then $P(g)=\delta^{-1}(1+g+\cdots+g^{\delta-1})$.
\end{definition}
\noindent
The main motivation is that the reduced density matrix of pure stabilizer states take the form of stabilizer projection states.

If the stabilizer projection state $\sigma$ (which we refer to as the reference state) is obtained by applying a depth-$d$ local unitary $U$ to a state satisfying the entanglement bootstrap axioms, it satisfies a few nontrivial properties. While the axioms in Fig.~\ref{fig:EB_axioms_old} may not continue to hold, a weaker set of axioms in Fig.~\ref{fig:EB_axioms_new}~\cite{Kim:2024bsn} are satisfied. These will be used repeatedly, so we discuss them in more detail below.
\begin{itemize}
\item \emph{Decoupling}: $I_\sigma(A:C)=S_\sigma(BC)+S_\sigma(C)-S_\sigma(B)=0$ in the configuration below. This implies there is zero correlation between $A$ and $C$.
\begin{equation*}
\centering\includegraphics[width=0.18\textwidth]{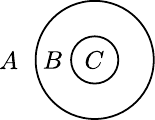}
\end{equation*}

\item \emph{Local extendibility}~\cite{Kim:2024bsn}: Given the reduced state $\sigma_{ABE}$ in the configuration shown below
\begin{equation*}
\centering\includegraphics[width=0.18\textwidth]{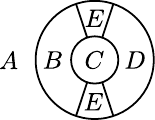}
\end{equation*}
there exists a channel $\Phi_{BE\rightarrow BC}$ such that $\calI_A\otimes\Phi_{BE\rightarrow BC}(\sigma_{ABE})=\sigma_{ABC}$.  Crucially, this extension channel is determined by $\sigma_{BCE}$, and therefore depends only on the state in the local neighborhood of the region being extended.
\end{itemize}
The partitions described above pertain to two spatial dimensions. Generalizations to higher dimensions are straightforward. For instance, in three dimensions, one  can rotate the configurations about a horizontal axis.

We will now use these properties to investigate the reduced state of $\sigma$ over a large, disk-like region $B$.  We will show that $\sigma_{B}$ has a locally generated stabilizer group. To demonstrate this, we first observe that the state $\sigma_B$ over a disk-like region $B$ is completely fixed by the reduced density matrices over small balls that cover a slightly enlarged region~\cite{Kim:2024bsn}. This is a consequence of the fact that for a ball-like region, the information convex set has only one element. To see why, Let $B^+$ be the thickening of $B$---the $r_1$-neighborhood of $B$, and let $\mu$ be the set of all balls $b$ with radius $\Theta(d)$ that are supported on $B^+$. By sequentially applying the local extending channels, one can reconstruct $\sigma_B$ from the collection of local marginals $\{\sigma_b\}_{b\in\mu}$. Importantly, each local extending channel is determined by the density matrix on its input and output neighborhoods~\cite{Kim:2024bsn}.  Thus, the extending channel $\Phi$ which takes the state on a given small ball and produces the state on $B$, i.e. $\Phi(\sigma_{b}) = \sigma_{B}$, is entirely fixed by this collection of local marginals.

\begin{table}[b]
\centering
\begin{tabular}{ll}
\hline
Notation & Definition \\
\hline
$\calG(S)$ & generators of the stabilizer group $S$ \\
$S_r(A)$ & the stabilizers of the reference state that are fully supported on $A$ \\
$S_A$ & the stabilizer group generated by local stabilizers in $A$ \\
$S_A(B)$ & the stabilizers in $S_A$ that are fully supported on $B$ \\
$L_A$ & the Pauli operators on $A$ that commute with $S_A$ \\
$L_A(B)$ & the operators in $L_A$ that are fully supported on $B$ \\
\hline
\end{tabular}
\caption{List of notation used in this section.}
\label{tab:notations_stabilizers}
\end{table}

An immediate consequence is that $\sigma_{B}$ has a locally-generated stabilizer group. Consider the stabilizer group on each ball $b\in\mu$, which we denote as $S_r(b)$; these are the set of stabilizers of $\sigma_{B^{+}}$ which are exclusively supported on $b$.  We define $S_{B^+}$ as the stabilizer group generated by the collection of all these stabilizers $\cup_{b\in\mu}S_r(b)$. The notation used in this
construction is summarized in Table~\ref{tab:notations_stabilizers}. We then consider the corresponding stabilizer projection state:
\begin{equation}
\pi_{B^+}\propto \prod_{g\in\calG(S_{B^+})}P(g).
\end{equation}
The state $\pi_{B^+}$ manifestly reproduces the correct reduced density matrices on all balls $\pi_{b} = \sigma_{b}$ for all $b\in \mu$. As a consequence, the local extending channel $\Phi$ that extends $\sigma_b$ to $\sigma_B$ should work for $\pi_{B^+}$ as well\cite{Kim:2024bsn}, that is, $\Phi(\pi_b)=\pi_B$. With $\sigma_b=\pi_b$, we are led to the conclusion that 
%$\pi_{B}$ is also a locally-extendable state, and can be constructed by applying local extending channels to the state of a fixed ball $\Phi'(\pi_{b}) = \pi_{B}$.  Furthermore, since $\Phi'$ is entirely fixed by the marginals on balls $b\in \mu$ for which $\pi_{b} = \sigma_{b}$, the extending map is identical to that for the state $\sigma$, i.e. $\Phi' = \Phi$.  Then we are led to the conclusion that 
\begin{equation}
\sigma_B=\Tr_{B^+\backslash B}\pi_{B^+}.
\end{equation}
%In our context, $\sigma_{B^+}$ is a stabilizer projection state. This further implies that the stabilizer group of a ball-like region can be generated by local stabilizers, as shown in Fig.~\ref{fig:EB_locally_generated_stabilizers}. On each ball $b\in\mu$, we find the stabilizer group $S_r(b)$, defined as the set of stabilizers of $\sigma_{B^+}$ that are only supported on $b$. Define a stabilizer group $S_{B^+}$ generated by the collection of all these stabilizers $\cup_{b\in\mu}S_r(b)$. Consider the stabilizer projection state associated with $S_{B^+}$:
%\begin{equation}
%\pi_{B^+}\propto \prod_{g\in\calG(S_{B^+})}P(g).
%\end{equation}
To summarize, we have the following:
\begin{proposition}[Stabilizer of balls can be locally generated]
Let $S_r(B)$ be the stabilizers of reference state $\sigma$ that are fully supported on a ball $B$. Let $S_{B^+}$ be the stabilizer group generated by all stabilizers of $\sigma$ supported on balls $b\in\mu$, and let $S_{B^+}(B)$ be the elements of $S_{B^+}$ that are fully supported on $B$. Then
\begin{equation}\label{eq:EB_locally_generated_stabilizers}
S_r(B)=S_{B^+}(B)\subseteq S_{B^+}.
\end{equation}
\end{proposition}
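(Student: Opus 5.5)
We will prove the two inclusions $S_{B^+}(B)\subseteq S_r(B)$ and $S_r(B)\subseteq S_{B^+}(B)$ separately, using the identity $\sigma_B=\Tr_{B^+\backslash B}\pi_{B^+}$ derived just above. The ambient fact is that $\sigma_{B^+}$ is itself a stabilizer projection state with stabilizer group $S_r(B^+)$. For stabilizer projection states, the reduced state on a subregion $B$ is the stabilizer projection state of the subgroup of stabilizers supported on $B$. Concretely, tracing out $B^+\backslash B$ kills every Pauli with nontrivial support there, because non-identity Paulis are traceless. We will use this "restriction lemma" throughout.

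\emph{Easy inclusion.} Every generator of $S_{B^+}$ lies in some $S_r(b)$, and $S_r(b)$ consists of stabilizers of $\sigma_{B^+}$, i.e.\ elements $g$ with $g\sigma_{B^+}=\sigma_{B^+}$. Hence $S_{B^+}\subseteq S_r(B^+)$. Intersecting with the Paulis supported on $B$ gives $S_{B^+}(B)\subseteq S_r(B)$. The final inclusion $S_{B^+}(B)\subseteq S_{B^+}$ holds by definition.

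\emph{Reverse inclusion.} Take $g\in S_r(B)$, so that $g\sigma_B=\sigma_B$. By the restriction lemma applied to $\pi_{B^+}$, we have $\Tr_{B^+\backslash B}\pi_{B^+}\propto\sum_{h\in S_{B^+}(B)}h$, which is the stabilizer projection state of the group $S_{B^+}(B)$. Since this equals $\sigma_B\propto\sum_{h\in S_r(B)}h$, we compare Pauli expansions. Distinct Pauli operators, up to phase, are linearly independent. Equality of the two operators therefore forces equality of the sets of Paulis with their coefficients, which gives $S_r(B)=S_{B^+}(B)$.

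One subtlety is phases. An element $h\in S_{B^+}(B)$ may appear as $\omega^k$ times a Pauli, and a nonzero expectation value pins the phase, so each appears with a consistent phase. We must also make sure both states are normalized stabilizer projection states with identical expansions, rather than merely proportional ones. Stabilizer groups of a state exclude nontrivial multiples of the identity, so both groups contain the identity only with coefficient $1$, and the normalization matches.

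\emph{Main obstacle.} The substantive input is the identity $\sigma_B=\Tr_{B^+\backslash B}\pi_{B^+}$. This identity relies on the local extension channel $\Phi$ being determined by the marginals $\{\sigma_b\}$ and on $\pi_b=\sigma_b$ for all $b\in\mu$. Checking $\pi_b=\sigma_b$ itself needs the restriction lemma in both directions. By construction $S_{B^+}(b)\supseteq S_r(b)$, but we also need $S_{B^+}(b)\subseteq S_r(b)$. This follows from $S_{B^+}\subseteq S_r(B^+)$ as in the easy inclusion, so $S_{B^+}(b)=S_r(b)$. The restriction lemma then gives $\pi_b=\sigma_b$. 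With that verified, the proposition reduces to the Pauli-expansion comparison above, which is routine.
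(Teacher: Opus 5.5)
Your proposal is correct and follows essentially the same route as the paper. You build $\pi_{B^+}$ from the locally supported stabilizers, check $\pi_b=\sigma_b$, and use the local-extension/uniqueness argument to get $\sigma_B=\Tr_{B^+\backslash B}\pi_{B^+}$; from there you read off $S_r(B)=S_{B^+}(B)$ by comparing Pauli expansions. You also make explicit two steps the paper leaves implicit: the inclusion $S_{B^+}\subseteq S_r(B^+)$, which gives $S_{B^+}(b)=S_r(b)$ and hence $\pi_b=\sigma_b$, and the linear-independence and phase bookkeeping behind the final group equality.
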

%The state $\pi_{B^+}$ manifestly reproduces the correct reduced density matrices on all balls. For any $b\in\mu$, the reduced state of the reference state on $b$ is precisely the stabilizer projection state associated with $S_r(b)$. The same is true for $\pi_{B^+}$: by construction it is stabilized by $S_r(b)$ on $b$, and no additional stabilizers supported on $b$ are imposed. Therefore $\pi_{B^+}$ and $\sigma_{B^+}$ have the same reduced density matrix on every ball in $\mu$. Since the reduced state on $B$ is uniquely determined by the collection of reduced density matrices on these balls, it follows that $\pi_{B^+}$ and $\sigma_{B^+}$ must agree on $B$:
%\begin{equation}
%\sigma_B=\Tr_{B^+\backslash B}\pi_{B^+}.
%\end{equation}
%As a result, we have the following:
%\begin{proposition}[Stabilizer of balls can be locally generated]
%Let $S_r(B)$ be the stabilizers of reference state $\sigma$ that are fully supported on $B$. Let $S_{B^+}$ be the %stabilizer group generated by all stabilizers of $\sigma$ supported on balls $b\in\mu$. Then
%\begin{equation}\label{eq:EB_locally_generated_stabilizers}
%S_r(B)\subseteq S_{B^+}.
%\end{equation}
%\end{proposition}
\begin{figure}
    \centering
    \includegraphics[width=0.7\linewidth]{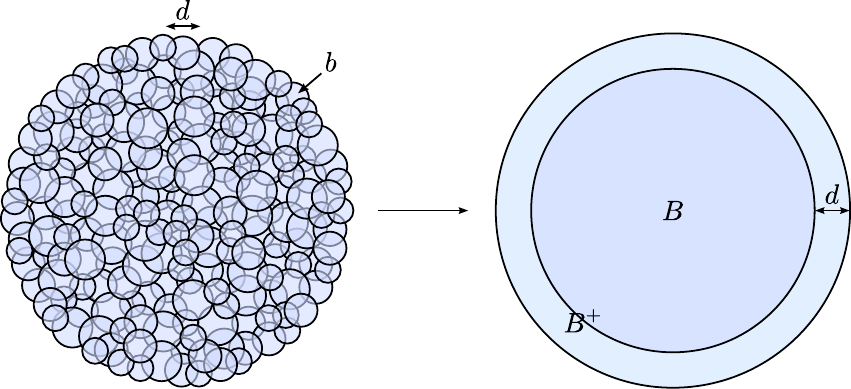}
    \caption{On a stabilizer projection state satisfying local extendibility, the stabilizers supported on a ball $B$ can be generated by the stabilizers on small balls $b$ supported on the enlarged ball $B^+$, see Eq~\eqref{eq:EB_locally_generated_stabilizers}.}
    \label{fig:EB_locally_generated_stabilizers}
\end{figure}

Now take the reduced density matrix of the reference state over $\Om^+$, denoted as  $\sigma_{\Om^+}$. 
%{\color{red} IK: Any reason for doing this? Naively, one would choose the reference state to be a stabilizer projection state over a disk that includes $\Omega^+$. }\YZ{I agree. At the beginning of this section, we should specify that we are working on a stabilizer projection state on a large enough ball.}  
Let $S_{\Om^+}$ be the local stabilizer generators of $\sigma_{\Om^+}$, i.e. the stabilizer group generated by stabilizers supported on local balls in $\Om^+$.  Any state $\rho_{\Om^+}\in\tilde{\Sigma}(\Om)$ is manifestly stabilized by these local generators, $g\rho_{\Om^+} = \rho_{\Om^+}$ $\forall$ $g\in S_{\Om^+}$, since $\rho_{\Om^+}$ agrees with the reference state on every ball. Thus $\rho_{\Om^+}$ lies in the codespace of the code defined by $S_{\Om^+}$, and can thus be expanded as follows:
\begin{equation}\label{eq:expansion}
\rho_{\Om^+}=\Pi_{S_{\Om^+}}\sum_{\ell\in L_{\Om^+}}c_{\ell}\ell,\qquad \Pi_{S_{\Om^+}}=\frac{1}{|S_{\Om^+}|}\sum_{s\in S_{\Om^+}}s.
\end{equation}
Here, $L_{\Om^+}$ is the set of all Pauli operators supported on $\Om^+$ that commute with $S_{\Om^+}$, and $\Pi_{S_{\Om^+}}$ is the projector on the code space. We refer to $\ell \in L_{\Om^+}$ as logical operators for the remainder of this section.  We note that for a fixed state $\rho_{\Om^+}$, the expansion in terms of logical operators in Eq. (\ref{eq:expansion}) is of course not unique, since different logical operators could be related by multiplying with a stabilizer $g\in S_{\Om^{+}}$. 

To get the states in $\Sigma(\Om)$, we need to trace over the region $\partial \Om:=\Om^+\backslash \Om$. Before doing that, let $S_{\Om^+}(\Om)$ be the stabilizers in $S_{\Om^+}$ that are only supported on $\Om$ and let $\Pi_{S_{\Om^+}(\Om)}$ be the corresponding projector. Now $\Pi_{S_{\Om^+}}$ is equal to $\Pi_{S_{\Om^+}(\Om)}$ multiplied with some projectors that has support on $\partial\Om$. We can absorb these boundary projectors into the sum $\ell\in L_{\Om^+}$ and write
\begin{equation}
\rho_{\Om^+}=\Pi_{S_{\Om^+}(\Om)}\sum_{\ell\in L_{\Om^+}}c_{\ell}\ell,\qquad \Pi_{S_{\Om^+}(\Om)}=\frac{1}{|S_{\Om^+}(\Om)|}\sum_{s\in S_{\Om^+}(\Om)}s.
\end{equation}
When we trace out the boundary $\partial\Om$, the terms that survive are those $s\ell$ that are supported entirely on $\Om$. Such an operator still belongs to $L_{\Om^+}$, so we can rewrite
\begin{equation}
\rho_\Om=\Tr_{\Om^+\backslash \Om}(\rho_{\Om^+})=\Pi_{S_{\Om^+}(\Om)}\sum_{\ell\in L_{\Om^+}(\Om)}c_{\ell}\ell,
\end{equation}
where $L_{\Om^+}(\Om)$ is the set of $\ell\in L_{\Omega^+}$ with $\supp(\ell)\subseteq\Omega$. The notations are summarized in Table~\ref{tab:notations_stabilizers}.

What $\ell$'s can appear in the above sum? Intuitively, the sum of $\ell$'s encodes the logical information on $\Om$. For example, on an annular region in the toric code, we expect them to be non-contractible $X$ loops and $Z$ loops. The contributing terms should mutually commute, since the loops can always be deformed away from each other. In fact, these loops are actually the stabilizers of the toric code, once we consider all stabilizers on a disk region that contains the annulus---although they cannot be locally generated on the annulus, they can be locally generated on the disk.

To make this clear, consider the reference state on the ball $B$ that contains $\Om$. Let $S_{B^+}$ be the stabilizers of $\sigma$ on the enlarged ball $B^+$, and $L_{B^+}$ be the Pauli's on $B$ that commute with $S_{B^+}$. Any $\ell\in L_{\Om^+}(\Om)$ not only commutes with generators on balls inside $\Om^+$, but also commutes with generators on balls outside $\Om$ because of non-overlapping support. This leads to the identification
\begin{equation}
L_{\Om^+}(\Om)=L_{B^+}(\Om)\subseteq L_{B^{+}}(B).
\end{equation}

We will proceed to show that $\ell\in S_{r}(\Omega)$, so that there are no non-trivial logical operators on the ball. The intuition is that the reduced state $\sigma_{B}$ can be purified by a thin region of width $\Theta(d)$ surrounding $B$, due to the local entanglement structure of the state (the local decoupling property). Thus $\ell$ commutes with all stabilizers of a pure state on $B^{+}$, and must be an element of $S_{r}(B^{+})$; the restricted support of $\ell$ further implies $\ell\in S_{r}(\Omega)$. %Intuitively, the fact that $\supp(\ell)\subseteq \Omega$, together with the decoupling condition forbids logicals which are not part of the stabilizer group $\ell\notin S_{r}(B^+)$, as such a logical would necessarily generate long-range-entanglement. We now prove these statements. 
%On the ball, the intuition is that there is no non-trivial element in $L_{B^+}$ supported on $B$ other than the stabilizers themselves---the stabilizers are ``dense'' enough to make sure that the maximally mixed state in the code space is only entangled along the boundary. Roughly speaking, if there is some $\ell\in L_{B^+}(B)$ but $\ell\notin S_{B^+}(B)$, the reference state on $B^+$ must contain a mixed degree of freedom supported on $B$. This degree of freedom cannot be purified with $\partial B$, otherwise $\ell$ wouldn't commute with $S_{B^+}$\footnote{For example, take an EPR pair stabilized by $Z\otimes Z$ and $X\otimes X$. Any Pauli operator acting on a single qubit would fail to commute with at least one of $Z\otimes Z$ and $X\otimes X$.}. However, this is not possible due to the fact that the system is short-range entangled: the mutual information between $B$ and the region outside $B^+$ must vanish. The precise statement is the following:
\begin{lemma}[No non-trivial logicals on balls]
\begin{equation}\label{eq:EB_trivial_logicals}
L_{B^+}(B)\subseteq \tilde{S}_r(B),
\end{equation}
where $\tilde{S}_r(B)$ is the phase-extended stabilizer group generated by $S_r(B)$ and global phases $\omega=\exp(2\pi i/q)$, $q$ is qudit dimension.
\end{lemma}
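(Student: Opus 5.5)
We argue by contradiction, using the decoupling property to say that $\sigma_B$ is purified within a thin collar around $B$. Fix $\ell\in L_{B^+}(B)$ and suppose $\ell\notin\tilde S_r(B)$.

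\emph{Step 1: purification.} Choose a collar $C$ of width $\Theta(d)$ with $B\subset B^+\subseteq BC$, and let $D$ be the complement of $BC$. Decoupling applied to the annular configuration gives $I_\sigma(B:D)=0$. For a stabilizer state this has a concrete meaning. Write the global pure stabilizer state with group $S$. Then
\[
S(B)+S(D)-S(BD)=0
\]
holds if and only if $S$ restricted to $BD$ factorizes, i.e. $S(BD)=S(B)\cdot S(D)$, where $S(X)$ denotes the elements of $S$ supported on $X$. Equivalently, by the standard stabilizer entropy formula $S(X)=|X|\log q-\log|S(X)|$, the group $S_r(BC)$ is ``pure relative to $B$'': the state on $BC$ restricted to the $B$-degrees of freedom has no correlations with $D$.

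\emph{Step 2: commutation with the full stabilizer group.} We show that $\ell$ commutes with every element of $S_r(BC)$, indeed with every element of $S$. By the previous Proposition, $S_r(B^+)$ is generated by stabilizers on small balls $b\subset B^+$, i.e. $S_r(B^+)=S_{B^+}$. Since $\ell\in L_{B^+}$, it commutes with all of them. For stabilizers straddling $B^+$ and the outside, we use Step 1. The standard stabilizer argument is that any $g\in S$ can be multiplied by elements of $S_r(B^+)$ and elements supported off $B$ so that its restriction to $B$ comes from the restriction of an element of $S_r(B^+)$. This uses $I(B:D)=0$ together with the locality of generators (balls of radius $\Theta(d)$ meeting $B$ lie in $B^+$), which we may arrange by taking the collar wide enough. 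Hence $\ell$ commutes with all of $S$.

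\emph{Step 3: conclusion.} The global state is pure, so $S$ is maximal abelian up to phases, and any Pauli commuting with all of $S$ lies in $\tilde S$. Thus $\ell\in\tilde S$ and $\supp\ell\subseteq B$, which gives $\ell\in\tilde S_r(B)$.

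If instead we only have the stabilizer projection state on a large ball rather than a global pure state, the same argument runs with $S$ replaced by a purification. It goes through the Pauli group on a doubled system, or equivalently uses the counting identity
\[
|L_{B^+}(B)|\cdot|S_{B^+}(B)|\ \text{vs.}\ q^{2|B|}/|S_r(\cdot)|
\]
and the entropy equalities from decoupling to show $|L_{B^+}(B)|=|\tilde S_r(B)|$. Together with the obvious inclusion $\tilde S_r(B)\subseteq L_{B^+}(B)$, this yields equality.

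The main obstacle is Step 2: translating $I(B:D)=0$ into the statement that the restriction of any stabilizer to $B$ is already matched by a stabilizer supported in $B^+$. I would try the counting route first. Use the formula $S(X)=|X|\log q-\log|S(X)|$ for all three entropies. Then compare $\log|S(BD)|-\log|S(B)|-\log|S(D)|$ with the dimension of the projection of $S$ onto $B$, via $|\mathrm{res}_B S|\cdot|S(\bar B)|=|S|$ and the symplectic complement relation
\[
\mathrm{res}_B S=(S(B))^{\perp}\cap\mathcal P_B .
\]
This shows that the centralizer of $S$ within $\mathcal P_B$ coincides with $\tilde S(B)$ exactly when the decoupling entropy vanishes. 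Non-prime $q$ requires care: one should use group orders rather than vector-space dimensions, but the $\mathbb Z_q$-module version of the symplectic duality still holds.
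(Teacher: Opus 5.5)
Your argument is correct in substance but takes a different route from the paper.

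\textbf{Comparison.} The paper works at the level of states. Decoupling gives $\rho_{BA}=\rho_B\otimes\rho_A$, with $A$ the outside. Uhlmann then supplies a unitary $U_{\partial B}$ on the collar with $U_{\partial B}\ket{\Phi}=\ket{\phi}_{B\partial B_1}\otimes\ket{\psi}_{\partial B_2A}$. Since $\ell$ lives on $B$, it passes through $U_{\partial B}$ and must act as a phase on the pure state $\ket{\phi}$.

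You instead never leave the Pauli group. You purify $\sigma$ to a pure stabilizer state with group $G$ (your $S$), show that $\ell$ commutes with all of $G$, and use maximality. Write $G(X)$ for the elements of $G$ supported on $X$, so $G(BC)=S_r(BC)$ and $G(C)=S_r(C)$, and let $\bar B$ be the complement of $B$ in the purified system. The step you flag as the obstacle then closes by a direct order count; you need neither the factorization $S(BD)=S(B)S(D)$ nor a double-centralizer argument. Clearly $\mathrm{res}_B S_r(BC)\subseteq \mathrm{res}_B G$, and the entropy formula gives
\begin{align*}
\log|\mathrm{res}_B G|&=\log|G|-\log|G(\bar B)|=|B|\log q+S_\sigma(B),\\
\log|\mathrm{res}_B S_r(BC)|&=\log|S_r(BC)|-\log|S_r(C)|=|B|\log q-S_\sigma(BC)+S_\sigma(C).
\end{align*}
So the log of the index is $S_\sigma(B)+S_\sigma(BC)-S_\sigma(C)$, which is exactly the decoupling quantity and therefore vanishes. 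Every restriction of $G$ to $B$ is thus the restriction of an element of $S_r(BC)$, and an $\ell$ on $B$ that commutes with $S_r(BC)$ commutes with $G$.

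\textbf{What each route buys.} Your count uses only group orders, so it is valid for composite $q$. It also shows that, for stabilizer states, decoupling is \emph{equivalent} to $\mathrm{res}_B S_r(BC)$ exhausting $\mathrm{res}_B G$. The paper's route is shorter, and its disentangling step holds for arbitrary states, with stabilizer structure entering only at the end. There, because $U_{\partial B}$ need not be Clifford, the conclusion is best read as $\ell\ket{\phi}\propto\ket{\phi}$, hence $\ell\sigma_B\propto\sigma_B$.

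\textbf{Two repairs.}
\begin{enumerate}
\item The earlier Proposition does not give $S_r(B^+)=S_{B^+}$, which need not hold near $\partial B^+$. It only gives buffered generation, $S_r(B)=S_{B^+}(B)$. To get that $\ell$ commutes with $S_r(BC)$, apply the Proposition to the ball $BC$: $S_r(BC)=S_{(BC)^+}(BC)$ is generated by generators on small balls in $(BC)^+$. Those on balls inside $B^+$ commute with $\ell$ because $\ell\in L_{B^+}$. The others are disjoint from $B$ once $B^+$ is at least one small-ball diameter thick. The paper's argument needs the same input, $\ell\sigma_{B^+}\ell^\dagger=\sigma_{B^+}$, and this is how to justify it there too.
\item Your closing sentence about ``the centralizer of $S$ within $\mathcal P_B$'' is misstated. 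For a pure $S$ that centralizer is always $\tilde S(B)$; the group whose centralizer matters is $S_r(BC)$. The vague mixed-state count can simply be dropped: purify $\sigma$ on the large ball to a pure stabilizer state, and note that the decoupling quantity depends only on $\sigma_{BC}$.
\end{enumerate}
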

\begin{proof}
Let $\ket{\Phi}$ be a purification of $B^+$ obtained by an auxiliary subsystem $A$, as shown in Fig.~\ref{fig:EB_trivial_logicals}. The $I(B:A)=0$ condition implies that the reduced state on $B$ and $A$ factorizes: $\rho_{BA}=\rho_B\otimes\rho_A$. Hence one can purify $\rho_B$ and $\rho_A$ separately into the state $\ket{\phi}_{B\partial B_1}\otimes\ket{\psi}_{\partial B_2A}$ and set $\partial B=\partial B_1\otimes \partial B_2$. Since $\ket{\Phi}$ is also a purification of $\rho_{BA}$, there must be a unitary $U_{\partial B}$ acting on the purifying space $\partial B$ such that $U_{\partial B}\ket{\Phi}=\ket{\phi}_{B\partial B_1}\otimes\ket{\psi}_{\partial B_2A}$. Under this unitary, the stabilizer group $S_r(B^+)$ is mapped to another group $S'_r(B^+)$.

Now take any $\ell\in L_{B^+}(B)$. $\ell$ must commute with all the stabilizers of the reference state, because $\ell$ commutes with $S_{B^+}$ and $S_r(B)\subseteq S_{B^+}$. Since $\ell$ is supported on $B$, it is unchanged by conjugation with $U_{\partial B}$. Hence, after applying $U_{\partial B}$ to the purification, $\ell$ still commutes with the transformed group $S'_r(B^+)$. In particular $\ell$ commutes with all the stabilizers of $\ket{\phi}_{B\partial B_1}$. For a pure state, any Pauli operator that commutes with its stabilizer group must itself be a stabilizer, up to a phase. Therefore $\ell$ is a stabilizer of $\ket{\phi}_{B\partial B_1}$ up to a phase. Moreover, since $\ell$ is supported on $B$, $\ell$ must be a stabilizer of the reference state on $B$ as claimed.
\end{proof}
\begin{figure}
    \centering
    \includegraphics[width=0.4\linewidth]{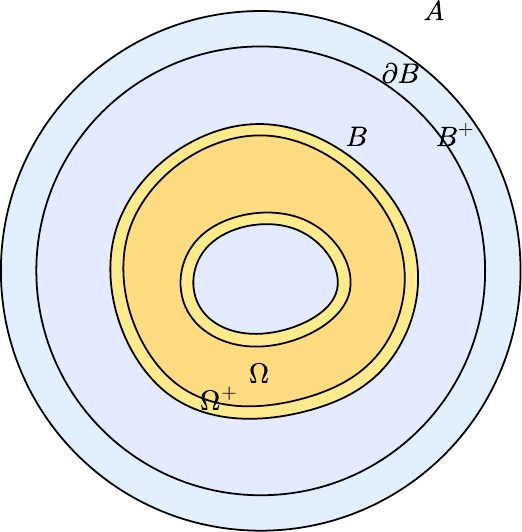}
    \caption{To show Eq.~\eqref{eq:EB_trivial_logicals}, we consider a ball $B$ that contains $\Omega^+$. Since $I(A:B)=0$, the inner ball can be disentangled with the outside region by applying a unitary on $\partial B$.}
    \label{fig:EB_trivial_logicals}
\end{figure}

The above lemma allows us to restrict the sum of $\ell$'s to the stabilizers of the reference state that are supported on $\Om$.
\begin{equation}\label{eq:EB_expansion_of_density_matrix}
\rho_\Om=\Pi_{S_{\Om^+}(\Om)}\sum_{\ell\in S_r(\Om)}c_{\ell}\ell.
\end{equation}
As a consequence, we find
\begin{equation}\label{eq:EB_trivial_fusion}
\wideboxed{
[\rho_\Om,\rho'_\Om]=0, \text{ for }\forall\rho_\Om,\rho'_\Om\in\Sigma(\Om)
}.
\end{equation}
We note that the argument up to Eq.~\eqref{eq:EB_trivial_fusion} is not specific to two spatial dimensions. 

\begin{theorem}\label{thm:EB_stabilizer_implies_Abelian}
Consider a non-Abelian string-net ground state evolved by a local unitary of depth $d$. Then a ball-like subsystem with diameter $O(d)$ cannot be realized by a stabilizer projection state.
\end{theorem}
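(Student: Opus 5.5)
The argument is by contradiction, combining Theorem~\ref{thm:nonabelian_noncommuting} with the boxed commutativity statement Eq.~\eqref{eq:EB_trivial_fusion}. Suppose $\sigma = U\sigma_0U^\dagger$ with $\sigma_0$ a non-Abelian string-net ground state and $U$ of depth $d$. Suppose further that on some ball $B$ of diameter $C d$ (with $C$ a sufficiently large constant) the reduced state $\sigma_{B}$ is a stabilizer projection state. I will choose, inside $B$, a three-holed disk $\Omega$ that is $r$-regular with $r \ge c d$, where $c$ is the constant of Theorem~\ref{thm:nonabelian_noncommuting}. I also require that its thickening $\Omega^+$, together with a further ball $B' \supset \Omega^+$ and its enlargement $B'^+$, all sit inside $B$. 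This is possible because all the relevant length scales ($r_1$, the $\Theta(d)$ ball radius in $\mu$, and the collar widths) are $O(d)$, so every nesting fits once $C$ is large. Within $B$, every reduced density matrix of $\sigma$ is a stabilizer projection state. Hence the whole stabilizer-side analysis of Section~\ref{subsec:stab_projection} runs on $\sigma$ restricted to $B$; it only ever uses marginals on $B'^+\subseteq B$.

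Next I verify the hypotheses of that analysis. First, $\sigma$ satisfies decoupling and local extendibility (the axioms of Fig.~\ref{fig:EB_axioms_new}) at scale $\Theta(d)$, because these are robust under depth-$d$ circuits~\cite{Kim:2024bsn}. This gives the local-generation proposition Eq.~\eqref{eq:EB_locally_generated_stabilizers} for $B'$. Second, with decoupling $I(B':A)=0$, it gives the no-nontrivial-logicals lemma Eq.~\eqref{eq:EB_trivial_logicals}. From these, every $\rho_{\Omega}\in\Sigma'(\Omega)$, the information convex set defined with respect to $\sigma$, has the form of Eq.~\eqref{eq:EB_expansion_of_density_matrix}. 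That is, it is the fixed projector $\Pi_{S_{\Omega^+}(\Omega)}$ times a linear combination of elements of the abelian group $S_r(\Omega)$. All such operators lie in the commutative algebra generated by $S_r(\Omega)$, since $\Pi_{S_{\Omega^+}(\Omega)}$ is itself a combination of elements of $S_{\Omega^+}(\Omega)\subseteq S_r(\Omega)$. Therefore $[\rho_\Omega,\rho'_\Omega]=0$ for all pairs in $\Sigma'(\Omega)$.

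On the other hand, Theorem~\ref{thm:nonabelian_noncommuting} states that $\Sigma'(\Omega)$ contains two non-commuting states. Behind it are Proposition~\ref{prop:info_convex_u_no_u}, the nontrivial fusion multiplicity Eq.~\eqref{eq:higher_fusion_multiplicity}, and Lemma~\ref{lemma:commutation_bijection}. This contradicts the commutativity just derived, so no such ball can be a stabilizer projection state.

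The main obstacle is consistency of definitions, namely making sure the same convex set is being compared in both arguments. Theorem~\ref{thm:nonabelian_noncommuting} concerns $\Sigma'(\Omega)$, defined by matching $\sigma$ on all balls $b\in\mu$ in $\Omega^+$. The stabilizer argument likewise uses exactly those marginals, so the two sets coincide provided every such $b$ lies in $B$. The second check is that the extension channel of the local-generation step depends only on marginals inside $B'^+$, so that knowing stabilizerness only on $B$ suffices. I would handle both by fixing the explicit nesting $\Omega\subset\Omega^+\subset B'\subset B'^+\subseteq B$, with each margin a fixed multiple of $d$, and taking the diameter constant in the theorem accordingly large.
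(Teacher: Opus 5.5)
Your proposal is correct and follows essentially the paper's own argument. Stabilizerness of the ball forces the commutativity Eq.~\eqref{eq:EB_trivial_fusion} on a three-holed disk of $O(d)$ thickness inside it. This contradicts the non-commuting pair supplied by Proposition~\ref{prop:info_convex_u_no_u}, Eq.~\eqref{eq:higher_fusion_multiplicity} and Lemma~\ref{lemma:commutation_bijection}, which together constitute Theorem~\ref{thm:nonabelian_noncommuting}. Your explicit nesting $\Omega\subset\Omega^+\subset B'\subset B'^+\subseteq B$ makes precise what the paper leaves as a ``sufficiently large $O(d)$ diameter.''
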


\begin{proof}
Assume for contradiction that some ball-like subsystem of sufficiently large $O(d)$ diameter is a stabilizer projection state. Choose a three-holed disk $Z$ contained in that ball and with the same $O(d)$ thickness. The argument above then applies to $\Sigma(Z)$ and implies that all states in $\Sigma(Z)$ mutually commute. 

On the other hand, if the state is taken from a subregion of a string-net ground state evolved by a local unitary of depth $d$, $\Sigma(Z)$ should be isomorphic to the information convex set defined with respect to the string-net ground state (Proposition~\ref{prop:info_convex_u_no_u}). But if the string-net model is non-Abelian, there exists at least one sector $\Sigma^a_{bcd}(Z)$ with $N_{bcd}^a>1$ given by Eq.~\eqref{eq:higher_fusion_multiplicity} and that sector is isomorphic to the full state space on a Hilbert space of dimension greater than one; in particular it contains non-commuting density matrices, inconsistent with Eq.~\eqref{eq:EB_trivial_fusion}. 
\end{proof}

%{\color{red} IK: Here is an attempt to show that EB anyon data coincides with that of string-net data. In fact, for our purpose, it suffices to show that the EB anyon data includes the string-net data. That is, if there is an anyon of quantum dimension $d_a>1$, there is an extreme point of the information convex set with entropy $(S(\rho_a) - S(\sigma))/2 = \log d_a$. Because the $d_a$ and $N_{ab}^c$ defined in EB satisfies the fusion rule axioms, this should be enough. Now the question is, is there a rigorous work that does this calculation on string-net? Two things need to be done: (i) calculation of entanglement entropy (ii) showing that energy over an annulus surrounding the charge is 0. (Zero energy is sufficient because then one can invoke LTQO to argue that over ball-like regions intersecting with the annulus they are indistinguishable with the ground state reduced density matrices.) These are surely ``known,'' but I don't know of a good reference...}

%\SV{Let's clearly state in Sec. 2.1 that we assume from any string-net state ethat the entanglement bootstrap can be used to extract topological data (e.g. fusion multiplicities, $S$-matrix elements) which matches that of the topological order; Proposition 1 (Eq. 9) then establishes that the bootstrap can extract the correct data for any string-net + FDLU.  Then, in this section, we can invoke this assumption, and also state in a footnote that even without this, the fusion multiplicity extracted by the bootstrap for a non-Abelian TQO on a three-hole-punctured disk is larger than one}

In higher spatial dimensions, any state with strict area law (i.e. there exist constants $c$ and $\gamma$ such that $S(A)=c|\partial A|-\gamma$ for all smooth ball-like regions) satisfy the higher-dimensional generalizations of the entanglement bootstrap axioms in Fig.~\ref{fig:EB_axioms_old}. When evolved by a constant-depth local unitary, they satisfy the weaker set of axioms---the higher dimensional generalization of Fig.~\ref{fig:EB_axioms_new}. On such states, one can analogously perform entanglement bootstrap analysis. The relevant structures are more diverse: one encounters a richer variety of topological types of excitations and of the probe regions that detect them~\cite{Huang:2021gsc,Shi:2023kwr}. Nevertheless, within the entanglement bootstrap framework the concept of non-trivial fusion space is defined in the same spirit as in two dimensions: whenever the information covex set contains a sector that is isomorphic to a state space with Hilbert space dimension greater than one, we say that there is a non-trivial fusion space. Again by Lemma~\ref{lemma:commutation_bijection}, a non-trivial fusion space implies the existence of non-commuting density matrices in the information convex set, inconsistent with stabilizerness. By our result, any state with strict area law that supports non-trivial fusion spaces cannot be realized by a stabilizer state, even up to constant-depth local unitaries. A concrete three-dimensional example is provided by the $S_3$ quantum double, where non-trivial fusion spaces arise for regions with nontrivial topology such as a ball minus a trefoil~\cite{Huang:2021gsc}.

\subsection{Quantized mutual braiding in stabilizer projection states}
In the following sections, we demonstrate that the mutual braiding phases in stabilizer projection states satisfying the axioms of Sec.~\ref{subsec:stab_projection} are constrained. We first show in Sec.~\ref{subsec:extreme_points} that given a stabilizer projection state, the extreme points of its information convex set are stabilizer projection states, whose stabilizer generators only differ by phases. Thus, Pauli operators can be used to relate these extreme points, as we show in Sec.~\ref{subsec:EB_anyon_string}. Properly constructing anyon strings using such operators, we show that the mutual braiding phases are quantized by the qudit dimension, demonstrated in Sec.~\ref{subsec:EB_S_matrix}.
\subsubsection{Extreme points of the information convex set}\label{subsec:extreme_points}
%In the following sections, we demonstrate that the $S$-matrix elements of stabilizer projection states satisfying the axioms of Sec.~\ref{subsec:stab_projection} are constrained. 
%We first show that given a stabilizer projection state, the extreme points of its information convex set are stabilizer projection states, whose stabilizer generators only differ by phases. 
We begin by characterizing the extreme points of the information convex set of stabilizer projection states. Going back to the sum over $\ell$ in Eq.~\eqref{eq:EB_expansion_of_density_matrix}, several $\ell$'s are equivalent when multiplied with $\Pi_{S_{\Om^+}(\Om)}$: $\Pi_{S_{\Om^+}(\Om)}\ell=\Pi_{S_{\Om^+}(\Om)}$ if $\ell\in S_{\Om^+}(\Om)$. To avoid redundancy, we can quotient over the subgroup $S_{\Om^+}(\Om)$, analogous to the quotient algebra in~\cite{Haah:2014iuw}. More specifically, let $\calG(S_{\Om^+}(\Om))$ be the generators of $S_{\Om^+}(\Om)$, and $\calG(S_r(\Om))$ be the generators of $S_r(\Om)$ that contains $\calG(S_{\Om^+}(\Om))$. Define $\calL$ as the stabilizer group generated by $\calG(\calL):=\calG(S_r(\Om))\backslash\calG(S_{\Om^+}(\Om))$. Then Eq.~\eqref{eq:EB_expansion_of_density_matrix} can be rewritten as
\begin{equation}\label{eq:EB_general_density_matrix}
\rho_\Om=\Pi_{S_{\Om^+}(\Om)}\sum_{\ell\in\calL}c_{\ell}\ell.
\end{equation}
The reduced density matrix of the reference state on $\Om$ is the stabilizer projection state associated with the stabilizer group $S_r(\Om)$, and therefore can be written as
\begin{equation}
\sigma_\Om\propto\Pi_{S_{\Om^+}(\Om)}\prod_{g\in\calG(\calL)}P(g).
\end{equation}
For each generator $g\in\calG(\calL)$, let $\delta_g$ denote its order (the smallest integer with $g^{\delta_g}\propto I$) and let $\varsigma_g=e^{2\pi i/\delta_g}$. The projector onto the $\varsigma_g^{u}$-eigenspace of $g$ is
\begin{equation}\label{eq:EB_projector}
P(\varsigma_g^{-u}g)=\frac{1}{\delta_g}\sum_{v=1}^{\delta_g}(\varsigma_g^{-u}g)^v.
\end{equation}
which can be viewed as the Fourier transform of the generators. Consider the following family of stabilizer projection states:
\begin{equation}
\rho_\Om(\boldu)\propto\Pi_{S_{\Om^+}(\Om)}\prod_{g\in\calG(\calL)}P(\varsigma_g^{-u(g)}g),
\end{equation}
parametrized by $u(g)$, which encodes a change of phase for the generator $g$. The $u(g)=0$ case is the reference state $\sigma_\Om$. We shall prove that these stabilizer projection states are the extreme points of $\Sigma(\Om)$. 

First, we show that every state of the form~\eqref{eq:EB_general_density_matrix} can be written as a convex mixture of the stabilizer projection states $\rho_\Om(\boldu)$. By inverting~\eqref{eq:EB_projector}, we can express any power of $g$ as a linear combination of the projectors:
\begin{equation}
g^u=\sum_{v=1}^{\delta_g}\varsigma_g^{uv}P(\varsigma_g^{-v}g).
\end{equation}
Slightly generalizing, we can express any element of the stabilizer group as
\begin{equation}
\prod_g g^{u(g)}=\prod_g\sum_{v(g)}\varsigma_g^{u(g)v(g)}P(\varsigma_g^{-v(g)}g).
\end{equation}
It follows that any density matrix in $\Sigma(\Om)$ admits an expansion of the form
\begin{equation}
\rho_\Om=\Pi_{S_{\Om^+}(\Om)}\sum_{\boldu}\alpha_{\boldu} \prod_gP(\varsigma_g^{-u(g)}g).
\end{equation}
The coefficients $\alpha_{\boldu}=\Tr\left[\rho_\Om\prod_gP(\varsigma_g^{-u(g)}g)\right]$ must be real numbers in the range of 0 and 1, which means $\rho_\Om$ is a convex mixture of the states $\rho_\Om(\boldu)$. 

We have shown that any element of $\Sigma(\Omega)$ is a convex mixture of the candidates $\rho_\Omega(\boldu)$. The second step is to show that each candidate $\rho_\Omega(\boldu)$ indeed belongs to $\Sigma(\Omega)$. By definition of $\Sigma(\Om)$, it suffices to construct a density matrix on the thickened region $\Om^+$ that agrees with $\sigma$ on all balls in $\Om^+$ and reduces to $\rho_\Om(\boldu)$ on $\Om$. For this purpose, we aim to show that starting from the reference state on $\Om^+$, the stabilizer subgroup $\calL$ is independent with $S_{\Om^+}$. If that is true, then re-phasing the generators of $\calL$ doesn't affect the stabilizers of $S_{\Om^+}$. The reference state on $\Om^+$ is
\begin{equation}
\sigma_{\Om^+}\propto \prod_{g\in \calG(S_r(\Om^+))}P(g).
\end{equation}
In the stabilizer group $S_r(\Om^+)$, the stabilizers $\calL$ and the stabilizers $S_{\Om^+}(\Om)$ are independent of each other by definition of $\calL$. Further, one can show that $\calL$ and $S_{\Om^+}$ are independent. Indeed, if some nontrivial element of $\calL$ could be trivialized by multiplying with an element $s\in S_{\Om^+}$, then $s$ would have to be supported on $\Om$, because every element of $\calL$ is supported on $\Om$. It would then follow that $s\in S_{\Om^+}(\Om)$, contradicting the independence of $\calL$ and $S_{\Om^+}(\Om)$. Therefore, we may choose an independent generating set of $S_r(\Om^+)$ that contains both $\calG(S_{\Om^+})$ and $\calG(\calL)$. With this choice of generators, the reference state can be written as
\begin{equation}
\sigma_{\Om^+}\propto \prod_{h\in\calG(S_{\Om^+})}P(h)\prod_{g\in\calG(\calL)}P(g)\prod_{k\in\calG(S_r(\Om^+))\backslash[\calG(S_{\Om^+})\cup\calG(\calL)]}P(k).
\end{equation}
Re-phasing the generators $\calG(\calL)$ gives a family of stabilizer projection states
\begin{equation}
\rho_{\Om^+}(\boldu)\propto \prod_{h\in\calG(S_{\Om^+})}P(h)\prod_{g\in\calG(\calL)}P(\varsigma_g^{-u(g)}g)\prod_{k\in\calG(S_r(\Om^+))\backslash[\calG(S_{\Om^+})\cup\calG(\calL)]}P(k).
\end{equation}
The reduced state on $\Om$ is a stabilizer projection state stabilized by $S_{\Om^+}(\Om)$ together with $\varsigma_g^{-u(g)}g$ for every $g\in\calG(\calL)$, which are the stabilizers of $\rho_\Om(\boldu)$. Moreover, $\Tr_{\Om^+\backslash\Om}\rho_{\Om^+}(\boldu)$ has no other stabilizers because the re-phasing process doesn't introduce extra stabilizers. Thus it must agree with $\rho_\Om(\boldu)$:
\begin{equation}
\Tr_{\Om^+\backslash\Om}\rho_{\Om^+}(\boldu)=\rho_\Om(\boldu).
\end{equation}
To conclude that $\rho_\Om(\boldu)\in\Sigma(\Om)$, it remains to verify that $\rho_{\Om^+}(\boldu)$ reproduces the correct reduced density matrices on all balls $b\subseteq\Om^+$. This is true because $\rho_{\Om^+}(\boldu)$ is stabilized by $S_{\Om^+}$, and on each ball $b$ there are no other stabilizers than $S_r(b)$.

We have therefore shown that the extreme points of $\Sigma(\Om)$ are precisely the stabilizer projection states obtained by re-phasing the generators of $\calL$.

\subsubsection{Excitations are Pauli operators}\label{subsec:EB_anyon_string}
We show that there must exist a Pauli operator that maps between two extreme points. This is a consequence of the general fact for stabilizer groups in arbitrary qudit dimensions: the independent set of stabilizer generators can be re-phased by conjugating a suitable Pauli operator. The proof of this statement is provided in appendix~\ref{sec:generators_rephased_by_paulis}.
\begin{lemma}\label{thm:EB_pauli_connects_extreme_points}
For any $\boldu$ and $\boldv$, there exists a Pauli operator $P_{\boldu\rightarrow\boldv}$ supported on $\Om$ such that
\begin{equation}
P_{\boldu\rightarrow\boldv}\rho_\Om(\boldu)P_{\boldu\rightarrow\boldv}^\dagger=\rho_\Om(\boldv).
\end{equation}
\end{lemma}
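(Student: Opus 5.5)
The plan is to reduce the lemma to a purely algebraic statement about qudit Pauli groups: given an independent generating set $\calG$ of a stabilizer group on $\Om$, here $\calG(S_{\Om^+}(\Om))\cup\calG(\calL)$, every prescribed assignment of phases to the generators can be produced by conjugating with a single Pauli supported on $\Om$. Conjugation by a Pauli $P$ multiplies each Pauli $g$ by a commutation phase, $PgP^\dagger=\omega^{\langle P,g\rangle}g$, where $\langle\cdot,\cdot\rangle$ is the symplectic form on the $\mathbb{Z}_q^{2|\Om|}$ representation. So it suffices to find $P$ with the following two properties:
\begin{itemize}
\item $P$ commutes with every $h\in\calG(S_{\Om^+}(\Om))$;
\item $P$ satisfies $PgP^\dagger=\varsigma_g^{v(g)-u(g)}g$ for each $g\in\calG(\calL)$.
\end{itemize}
Then $P\,P(\varsigma_g^{-u(g)}g)\,P^\dagger=P(\varsigma_g^{-v(g)}g)$, and $P\Pi_{S_{\Om^+}(\Om)}P^\dagger=\Pi_{S_{\Om^+}(\Om)}$, which gives $P\rho_\Om(\boldu)P^\dagger=\rho_\Om(\boldv)$.

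First, I will translate the phase $\varsigma_g^{k}=e^{2\pi ik/\delta_g}$ into a power of $\omega$. Since $g^{\delta_g}\propto I$, the order $\delta_g$ divides $q$ for the symplectic vector part, after possibly absorbing an overall phase into $g$ so that the eigenvalues are $\delta_g$-th roots of unity. Thus $\varsigma_g^k=\omega^{k q/\delta_g}$. The requirement becomes a system of linear equations over $\mathbb{Z}_q$:
\begin{equation}
\langle p, x_h\rangle=0,\qquad \langle p, x_g\rangle = (v(g)-u(g))\,q/\delta_g \pmod q,
\end{equation}
where $x_g$ is the symplectic vector of $g$ and $p$ is the vector of $P$. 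Because the symplectic form is nondegenerate on $\mathbb{Z}_q^{2|\Om|}$, the map $p\mapsto(\langle p,x\rangle)_{x}$ identifies Paulis with all $\mathbb{Z}_q$-linear functionals on the free module. The question is then whether the prescribed values extend to a functional on the submodule $M$ generated by the $x$'s.

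The main obstacle is that $\mathbb{Z}_q$ for composite $q$ is not a field, so "independent generators" do not form a basis of a free module. A functional on $M$ must respect relations such as $\delta_g x_g=0$ modulo combinations of other generators. I will handle this using the Smith normal form, or equivalently the structure theorem for finite abelian groups. The stabilizer group modulo phases is $M\cong\bigoplus_g\mathbb{Z}_{\delta_g}$, with the independent generators chosen as a direct-sum decomposition; I will choose generators this way, which the paper's "independent generating set" should permit. A homomorphism $M\to\mathbb{Z}_q$ sending $x_g\mapsto k_g q/\delta_g$ is well defined precisely because $\delta_g\cdot k_gq/\delta_g\equiv0$. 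Since $\mathbb{Z}_q$ is an injective $\mathbb{Z}_q$-module (it is self-injective, being a Frobenius ring), any homomorphism $M\to\mathbb{Z}_q$ extends to the whole free module $\mathbb{Z}_q^{2|\Om|}$. Nondegeneracy then realizes this extension as $\langle p,\cdot\rangle$ for some $p$ supported on $\Om$.

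Finally, I will check two consistency points. The relation $\langle p, x_h\rangle=0$ is compatible with the same direct-sum decomposition. The overall $\omega$-phase ambiguities of the generators do not affect conjugation, because conjugation only sees commutation phases.
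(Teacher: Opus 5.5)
Your argument is correct, but it proves the key step by a different route than the paper. Both proofs reduce the lemma to one claim: every character of the stabilizer group modulo phases is the commutation character $Q\mapsto\chi_P(Q)$ of some Pauli $P$ on $\Om$. Both then apply this to $\calG(S_{\Om^+}(\Om))\cup\calG(\calL)$ with trivial phases on the first part.

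The difference is how surjectivity is established. The paper (Appendix~\ref{sec:generators_rephased_by_paulis}) counts. Character orthogonality gives the commutant the size $|L|=|\calP|/|S|$, so the image of $\calP\to\widehat S$ has $|\calP/L|=|S|=|\widehat S|$ elements and the map is onto. You instead extend the prescribed homomorphism $M\to\mathbb{Z}_q$ to all of $\mathbb{Z}_q^{2|\Om|}$ using self-injectivity of $\mathbb{Z}_q$. You then represent the extension through the unimodular symplectic form.

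The two arguments are dual forms of the same fact about finite abelian groups (Pontryagin duality). The paper's version is self-contained and also shows that the Paulis realizing a given character form a coset of $L$. Yours is more structural and pinpoints where composite $q$ enters: the chosen generators must give $M\cong\bigoplus_g\mathbb{Z}_{\delta_g}$. The paper uses this only implicitly, when it asserts that any choice $u_i\in\mathbb{Z}_{\delta_i}$ defines a character.

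You can avoid choosing a direct-sum basis altogether. When $\rho_\Om(\boldu)$ and $\rho_\Om(\boldv)$ are both nonzero, the ratio of their eigenvalue assignments is automatically a homomorphism on $M$, so your extension step applies to it directly.

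There is one sign slip. The condition should be $PgP^\dagger=\varsigma_g^{u(g)-v(g)}g$, not $\varsigma_g^{v(g)-u(g)}g$. With your sign, $P\,P(\varsigma_g^{-u}g)\,P^\dagger=P(\varsigma_g^{v-2u}g)$. This does not affect the argument, because the phase assignment you realize is arbitrary.
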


We now focus on two spatial dimensions and properly define anyon string operators based on~\cite{Shi:2019mlt}. We use the lemma above to show that anyon strings may be chosen to be Pauli operators.

Suppose the reference state on the disk $D$ is equal to the reduced density matrix of the  string-net ground state $\ket{\psi}$ evolved by a depth-$d$ local unitary $U_{\LU_d}$. To simplify the analysis, we take a purifying state $\ket{\tilde{\varphi}}$ of $\sigma_D$ on a sphere $S^2$, such that the state satisfies the entanglement-bootstrap axioms stated in Sec.~\ref{subsec:stab_projection}, namely the decoupling property and local extendibility. It is easy to construct such a purification. Let $D^+$ denote the thickening of $D$, and let $D^{++}$ denote the thickening of $D^+$. Consider a string-net Hamiltonian defined on the sphere, whose local terms agree with those of the original Hamiltonian on an enlarged disk $D^+$ containing the lightcone of $D$. Let $|\varphi\rangle$ be the ground state of this sphere Hamiltonian. Since the string-net Hamiltonians satisfy the local topological order condition~\cite{Michalakis:2012kaj} as proved in~\cite{Jones:2023xew}, the ground-state reduced density matrix on $D^+$ depends only on the Hamiltonian terms supported on $D^{++}$---the thickening of $D^+$. Therefore the sphere ground state $|\varphi\rangle$ and the original string-net ground state $\ket{\psi}$ have the same reduced state on $D^+$. Let $U_{D^+}$ denote the restriction of the local unitary $U_{\LU_d}$ to the gates supported on $D^+$. Define
\begin{equation}
|\tilde\varphi\rangle := U_{D^+}\ket{\varphi}.
\end{equation}
Then $|\tilde\varphi\rangle$ is a purification of $\sigma_D$, because two density matrices over the inner disk $D$ still agrees with each other when acted upon by two finite depth unitaries that differ only by the gate content outside of $D^+$.

Inside the disk $D$, take two holes $A$ and $\bar{A}$. Their complement on the sphere is an annulus $X:=S^2\backslash(A\cup\bar{A})$:
\begin{equation*}
\centering
\includegraphics[width=7.4cm]{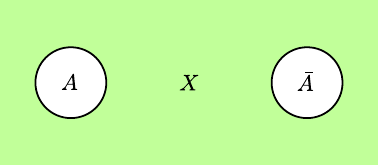}
\end{equation*}
Consider a strip region $V$ connecting the two holes:
\begin{equation*}
\centering
\includegraphics[width=6cm]{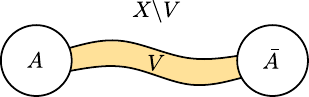}
\end{equation*}
On the annulus $X$, if we trace out the strip $V$ connecting the two holes, the remaining piece $X\backslash V$ has the topology of a disk. As shown in Sec.~\ref{subsec:stab_projection}, the density matrix on a disk-like region is uniquely fixed by the marginals. Therefore, the reduced state over $X\backslash V$ is the same as the reference state $\ket{\tilde{\varphi}}$. By Uhlmann's theorem, there is a unitary $U_{AV\bar{A}}^{(a,\bar{a})}$ supported on $AV\bar{A}$ that maps the reference state to the state with the anyons:
\begin{equation}
U_{AV\bar{A}}^{(a,\bar{a})}\ket{\tilde{\varphi}}=\ket{\tilde{\varphi}^{(a,\bar{a})}}.
\end{equation}
We take such unitaries to be our definition of anyon strings. They are deformable in the sense that they can be defined for any strip $V$ connecting the holes, and their action on the reference state are independent of the precise shape of the strip. Moreover, it doesn't depend on the purification $\ket{\tilde{\varphi}}$ of $\sigma_D$ on the sphere, because any two valid purifications are related by some unitary acting on the complement of $D$, not affecting the resulting $U_{AV\bar{A}}^{(a,\bar{a})}$. The anyon string does depend on the purification $\ket{\tilde{\varphi}^{(a,\bar{a})}}$ we chose for $\sigma^{(a)}_X$, but the $S$ matrix that we construct later is independent of the purification.

Under the assumption that the reference state is a stabilizer state, we show that there is a choice of the anyon strings such that they are Pauli's.

Take a two-holed disk $Y$ inside $AV\bar{A}$, shown in the figure below, colored in blue:
\begin{equation*}
\centering
\includegraphics[width=6cm]{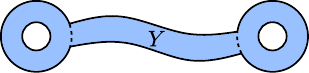}
\end{equation*}
The state with anyons $a$ and $\bar{a}$ inside the two holes is an extreme point $\rho_Y^{a\bar{a}1}$ of $\Sigma(Y)$. From Lemma~\ref{thm:EB_pauli_connects_extreme_points}, there exists a Pauli $P_{(111)\rightarrow(a\bar{a}1)}$ that maps the vacuum state on $Y$ to this extreme point:
\begin{equation}
P_{(111)\rightarrow(a\bar{a}1)}\sigma_YP_{(111)\rightarrow(a\bar{a}1)}^\dagger=\rho_Y^{a\bar{a}1}.
\end{equation}
If we apply $P_{(111)\rightarrow(a\bar{a}1)}$ to the global state $\ket{\tilde{\varphi}}$, it must be a valid purification of the extreme point $\rho_X^{(a)}$ because of the following. Consider the annulus $\tilde{X}$ that is the complement of the two inner holes of $Y$:
\begin{equation*}
\centering
\includegraphics[width=7.4cm]{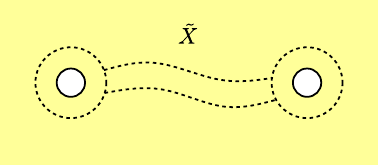}
\end{equation*}
Since $P_{(111)\rightarrow(a\bar{a}1)}$ commutes with all stabilizers of balls inside $Y^+$, it commutes with all stabilizers of balls in $\tilde{X}^+$. Thus, the resulting reduced state over $\tilde{X}$ is inside $\Sigma(\tilde{X})$. If we look at the small annulus $Y_1$ surrounding the inner hole of $Y$,
\begin{equation*}
\centering
\includegraphics[width=6.9cm]{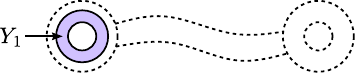}
\end{equation*}
the reduced density matrix will be the extreme point labeled by anyon $a$: $\Tr_{Y\backslash Y_1}\rho_Y^{a\bar{a}1}=\rho_X^{(a)}$. $Y_1$ is also an annulus inside $\tilde{X}$ that surrounds one of the boundaries of $\tilde{X}$. Knowing that the state over $Y_1$ is the extreme point labeled by $a$, we conclude that the reduced state on $\tilde X$ has charge $a$ around the corresponding boundary component. Since annulus sectors are labeled by this charge, the state on $\tilde X$ is the annulus extreme point $\rho_{\tilde X}^{(a)}$. Finally, $X$ and $\tilde X$ are isotopic annuli. By the isomorphism theorem for the information convex set, together with Proposition~\ref{prop:info_convex_u_no_u}, the corresponding information convex sets are isomorphic to each other. Hence the reduced state on $X$ is the extreme point $\rho_X^{(a)}$. By the definition of anyon strings, we conclude that  $P_{(111)\rightarrow(a\bar{a}1)}$ is a valid choice of anyon string.

\subsubsection{Mutual braiding phases are quantized by the qudit dimension}\label{subsec:EB_S_matrix}

Knowing that the excitations can be represented by Pauli strings, one naturally arrives at the conclusion that the mutual braiding phases are quantized: in the braiding experiment of Eq.~\eqref{eq:braiding_experiment}, crossing two Pauli strings can only produce an integer power of $\omega=e^{2\pi i/q}$. The following proposition formalizes this constraint in terms of the $S$-matrix.

\begin{proposition}\label{thm:EB_Smatrix_phases}
Stabilizer projection states cannot realize subsystems of an Abelian string-net ground state, even up to a constant-depth local unitary, unless all the phases of the $S$-matrix elements are integer powers of $\exp(2\pi i/q)$. In other words, the mutual braiding phases are $q$-th roots of unity.
\end{proposition}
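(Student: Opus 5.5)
The plan is to define the $S$-matrix operationally in the entanglement-bootstrap way, as the overlap produced by a braiding experiment on the purified reference state $\ket{\tilde\varphi}$. We then show that, when the anyon strings are Pauli operators, this overlap can only pick up a phase that is a power of $\omega$. Throughout, we assume for contradiction that some disk $D$ of sufficiently large $O(d)$ radius has a reduced state $\sigma_D$ that is a stabilizer projection state, while also being a marginal of an Abelian string-net ground state evolved by a depth-$d$ circuit. By Proposition~\ref{prop:info_convex_u_no_u}, the sectors of $\Sigma$ on annuli inside $D$ are then labeled by the anyons of the string-net model. By Sec.~\ref{subsec:EB_anyon_string}, each anyon pair $(a,\bar a)$ can be created from $\ket{\tilde\varphi}$ by a Pauli string $W_a$ supported on a strip $V$ together with the two holes.

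For the braiding experiment, place two pairs of holes inside $D$ and choose two strips: $V_a$ connecting the holes of $a,\bar a$, and $V_b$ connecting those of $b,\bar b$. Arrange them in the standard linked configuration, so that the closed loop $W_a$ (create, transport around, annihilate) encircles one endpoint of $W_b$. Concretely, take $W_a=P^{(a)}_{V_a}$ and $W_a'=P^{(a)}_{V_a'}$, where $V_a\cup V_a'$ encloses a hole of $b$. Then $W_a'^{\dagger}W_a$ is a closed Pauli loop operator. Following the bootstrap definition (as in Ref.~\cite{Shi:2019mlt}), the mutual braiding is
\begin{equation*}
\frac{S_{ab}S_{11}}{S_{1a}S_{1b}}=\bra{\tilde\varphi}W_b^\dagger\,(W_a'^{\dagger}W_a)\,W_b\ket{\tilde\varphi},
\end{equation*}
with the loop normalized so that it acts as the identity on the vacuum, i.e.\ $\bra{\tilde\varphi}W_a'^\dagger W_a\ket{\tilde\varphi}=1$ after fixing a phase. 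For Abelian theories, $d_a=1$, so $|S_{ab}|=S_{11}$ and the phase of $S_{ab}$ is exactly this braiding phase. Since all operators are Pauli, we can write $(W_a'^\dagger W_a)W_b=\omega^{k}\,W_b(W_a'^\dagger W_a)$ for some integer $k$ set by the symplectic pairing. It then remains to show that $W_a'^\dagger W_a\ket{\tilde\varphi}=\lambda\ket{\tilde\varphi}$ with $\lambda$ a fixed unit phase.

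Closing that gap is the key step. The loop $\ell=W_a'^\dagger W_a$ maps the reference state to a state which, on every ball and on the disk-like complement of the loop's support, agrees with the reference. It therefore lies in the information convex set of an annulus containing the loop, in the vacuum sector. I would use Lemma~\ref{thm:EB_pauli_connects_extreme_points} and the extreme-point description of Sec.~\ref{subsec:extreme_points}: a Pauli that maps the vacuum extreme point to itself commutes with all re-phased generators, so it lies in the phase-extended stabilizer group. The lemma on no nontrivial logicals on balls gives this containment inside the enclosing disk, so $\ell\in\tilde S_r(D)$ and $\ell\ket{\tilde\varphi}=\omega^{m}\ket{\tilde\varphi}$. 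Because $W_b$ changes only the charge in its holes and leaves the loop's annulus in the same vacuum sector up to the commutation phase, we get
\begin{equation*}
\bra{\tilde\varphi}W_b^\dagger\ell W_b\ket{\tilde\varphi}=\omega^{k}\omega^{m}.
\end{equation*}
Normalizing against the $b=1$ experiment removes $\omega^{m}$, leaving $\omega^{k}$, a $q$-th root of unity. Taking the Pauli orders into account, the order of any Pauli divides $q$ (or $2q$ for even $q$ with phases), and commutation phases are always powers of $\omega$.

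The main obstacle I expect is showing rigorously that this Pauli-string braiding expectation equals the bootstrap/string-net $S$-matrix phase. This requires arguing that the result is independent of the chosen purification $\ket{\tilde\varphi^{(a,\bar a)}}$ and the chosen Pauli representatives. I would handle this by noting that different valid Pauli strings differ by elements of $\tilde S_r$, which contribute only phases that cancel in the normalized ratio. The identification with the string-net $S$-matrix would come from Proposition~\ref{prop:info_convex_u_no_u} and Ref.~\cite{bols2025sector}. A contradiction follows whenever some $S_{ab}$ has a phase outside $\{\omega^k\}$.
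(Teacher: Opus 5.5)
Your strategy matches the paper's: Pauli representatives of the anyon strings, commutation phases that are powers of $\omega$, and independence of the representatives to identify the result with the native $S$-matrix via Ref.~\cite{bols2025sector}. The step you single out as key, however, fails as you argue it.

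The Pauli strings $W_a$, $W_a'$ supplied by Sec.~\ref{subsec:EB_anyon_string} are only guaranteed to reduce to the same annulus extreme point $\rho_X^{(a)}$ on $X=S^2\setminus(A\cup\bar A)$. Nothing makes $W_a\ket{\tilde\varphi}$ and $W_a'\ket{\tilde\varphi}$ the same purification. Hence $\ell=W_a'^\dagger W_a$ commutes with the local stabilizers away from the holes, but not necessarily with those on balls inside $A$ or $\bar A$. Your claim that $\ell\ket{\tilde\varphi}$ agrees with the reference on every ball is therefore false. The no-nontrivial-logicals lemma, which needs $\ell$ to commute with all of $S_{B^+}$, does not apply.

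A concrete failure: in the $\mathbb{Z}_2$ toric code, let the two $e$-strings end at different vertices of $A$. Then $\ell$ creates two vertex excitations in $A$, so $\bra{\tilde\varphi}\ell\ket{\tilde\varphi}=\bra{\tilde\varphi}W_b^\dagger\ell W_b\ket{\tilde\varphi}=0$, and normalizing by the $b=1$ experiment is meaningless. Separately, mapping the vacuum extreme point of an annulus to itself only makes a Pauli a logical of that mixed state, not an element of $\tilde S_r$. The same objection hits your remark that different valid Pauli strings differ by elements of $\tilde S_r$. That holds only for strings producing the same purification, not merely the same sector.

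The missing idea is a correction inside the holes.
\begin{itemize}
\item Set $E:=D\setminus(A\cup\bar A)$. Because $W_a$ and $W_a'$ give the same reduced state on $E$, the commutation character of $W_a'^\dagger W_a$ on $S_r(D)$ is trivial on $S_r(E)$.
\item $S_r(E)$ is exactly the kernel of restricting elements of $S_r(D)$ to $A\cup\bar A$, modulo phases. So the character factors through this restriction and extends to the full Pauli group modulo phases on $A\cup\bar A$.
\item By nondegeneracy of the symplectic form, it is realized by a Pauli $R$ supported on $A\cup\bar A$, as in Appendix~\ref{sec:generators_rephased_by_paulis}.
\item Replace $W_a'$ by $W_a'R$. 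It is still supported on $AV_a'\bar A$ and has the same commutation character as $W_a$.
\item The corrected loop $\ell'=(W_a'R)^\dagger W_a$ now commutes with all of $S_r(D)$. Applying the no-logicals lemma on a ball $B$ with $B^+\subseteq D$ gives $\ell'\in\tilde S_r(B)$, hence $\ell'\ket{\tilde\varphi}=\omega^m\ket{\tilde\varphi}$.
\end{itemize}
From there your commutation computation and normalization go through, and the phase is a power of $\omega$.

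This correction is exactly what the paper's proof takes for granted. It fixes a single purification $\ket{\tilde\varphi^{(a,\bar a)}}$ for all strips and asserts that the upper and lower Pauli strings in Eq.~\eqref{eq:EB_Smatrix} ``act identically'' on $\ket{\tilde\varphi}$. With the correction in place, your normalized three-string braiding experiment is legitimate, and the rest of your argument parallels the paper's.
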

To prove the proposition, we first define the $S$-matrix intrinsically from the state, following~\cite{Shi:2019ngw}. We then prove that this definition is independent of the choices of the excitation operators. This allows us to evaluate the same $S$-matrix using Pauli representatives of the anyon strings, for which the braiding phases are necessarily $q$-th root of unity.

To define the $S$-matrix, we introduce another pair of holes $B$, $\bar{B}$. We also define strip regions $V_u$, $V_d$, $W_u$, $W_d$ connecting the holes, see figure below. Analogously, we choose a purification $|\tilde{\varphi}^{(b,\bar{b})}\rangle$ for the $b$ anyons and define the anyon strings accordingly. Now we have defined four anyon strings: $U_{AV_u\bar{A}}^{(a,\bar{a})}$ and $U_{AV_d\bar{A}}^{(a,\bar{a})}$, $U_{BW_u\bar{B}}^{(b,\bar{b})}$ and $U_{BW_d\bar{B}}^{(b,\bar{b})}$, acting on $AV_u\bar{A}$, $AV_d\bar{A}$, $BW_u\bar{B}$ and $BW_d\bar{B}$ respectively. 
\begin{equation*}
\centering
\includegraphics[width=8cm]{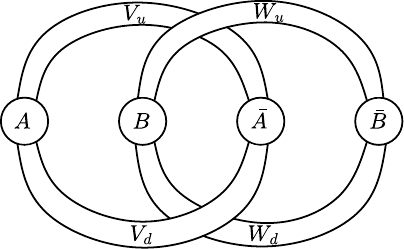}
\end{equation*}
The $S$-matrix element is defined as
\begin{equation}\label{eq:EB_Smatrix}
S_{ab}=\bra{\tilde{\varphi}}U_{BW_u\bar{B}}^{(b,\bar{b})\dagger}U_{AV_u\bar{A}}^{(a,\bar{a})\dagger}U_{AV_d\bar{A}}^{(a,\bar{a})}U_{BW_d\bar{B}}^{(b,\bar{b})}\ket{\tilde{\varphi}}.
\end{equation}
\begin{proposition}\label{thm:EB_Smatrix_uniqueness}
The $S$-matrix defined above is unique in the following sense:
\begin{itemize}
\item It doesn't depend on the location of the strips, as long as $V_u$ and $W_u$ has no overlap with $V_d$ and $W_d$. 
\item It doesn't depend on the purifications $\ket{\tilde{\varphi}^{(a,\bar{a})}}$ and $|\tilde{\varphi}^{(b,\bar{b})}\rangle$.
\end{itemize}
\end{proposition}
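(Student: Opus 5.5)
The plan is to use two structural facts about the anyon strings. First, each string $U^{(a,\bar a)}_{AV\bar A}$ acting on $\ket{\tilde\varphi}$ produces a state whose reduced density matrix on $X=S^2\setminus(A\cup\bar A)$ is the extreme point $\rho^{(a)}_X$. Second, any two strings with the same endpoints but supported on different strips produce states that agree on $X$, and so, by Uhlmann's theorem, they differ only by a unitary on $A\cup\bar A$.

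\emph{Independence of purifications.} Let $\ket{\tilde\varphi^{(a,\bar a)}}$ and $\ket{\tilde\varphi'^{(a,\bar a)}}$ be two purifications of $\rho^{(a)}_X$. They are related by a unitary $W_{A\bar A}$ supported on the holes. For a fixed $W$, the corresponding strings on both strips then satisfy $U'_{AV_u\bar A}=W_{A\bar A}U_{AV_u\bar A}$ and $U'_{AV_d\bar A}=W_{A\bar A}U_{AV_d\bar A}$. In Eq.~\eqref{eq:EB_Smatrix} the combination $U'^\dagger_{AV_u\bar A}U'_{AV_d\bar A}$ then equals $U^\dagger_{AV_u\bar A}W^\dagger W U_{AV_d\bar A}$, so $W$ cancels. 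The same argument applies to the $b$ strings. The one subtlety is that the $a$ strings appear sandwiched between the $b$ strings. Since $W_{A\bar A}$ is supported on $A\cup\bar A$, which is disjoint from $BW\bar B$, it commutes with the $b$ strings, and this is what allows the cancellation.

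\emph{Independence of strip location.} Suppose $V_d$ is deformed to another strip $V_d'$ that still does not overlap $V_u$ or $W_u$. The key step is to show that $U_{AV_d\bar A}\ket{\tilde\varphi}$ and $U_{AV'_d\bar A}\ket{\tilde\varphi}$ agree not merely up to a unitary on the holes, but exactly, once both are defined with respect to the same chosen purification $\ket{\tilde\varphi^{(a,\bar a)}}$. By definition both map $\ket{\tilde\varphi}$ to that same state, so they agree on $\ket{\tilde\varphi}$. The real issue is that in Eq.~\eqref{eq:EB_Smatrix} the string $U_{AV_d\bar A}$ acts on $U_{BW_d\bar B}\ket{\tilde\varphi}$, not on $\ket{\tilde\varphi}$ itself.

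To handle this, I would use the deformability argument of~\cite{Shi:2019ngw}. Suppose the region swept between $V_d$ and $V_d'$ avoids $BW_d\bar B$. Then $U_{BW_d\bar B}$ commutes past everything, and the identity holds. Otherwise, the strips can be deformed through a sequence of small moves, each of which avoids the other strings. At each step, one uses that the states restricted to the relevant disk-like region, namely the complement of the holes and the other strings, are fixed by local marginals. This follows from the uniqueness of information convex sets on disks established in Sec.~\ref{subsec:stab_projection}, together with Proposition~\ref{prop:info_convex_u_no_u}. Uhlmann's theorem then shows that the two strings differ only by a unitary on the complement of that disk, and that unitary acts trivially in the matrix element. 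I expect this bookkeeping to be the main obstacle: one must ensure that, at every stage, the region used for Uhlmann's theorem is disk-like and does not intersect the other strings. This is exactly why the hypothesis that $V_u,W_u$ do not overlap $V_d,W_d$ is required.

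Once both items are established, the matrix element can be evaluated using any admissible representatives. In particular, it can be evaluated with the Pauli strings supplied by Lemma~\ref{thm:EB_pauli_connects_extreme_points}, which is what Proposition~\ref{thm:EB_Smatrix_phases} needs.
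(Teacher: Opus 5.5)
Your argument for the second bullet is essentially the paper's. The first bullet, however, has a genuine gap.

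You correctly locate the difficulty: $U^{(a,\bar a)}_{AV_d\bar A}$ is fixed only by its action on $\ket{\tilde\varphi}$, yet in Eq.~\eqref{eq:EB_Smatrix} it acts on $U^{(b,\bar b)}_{BW_d\bar B}\ket{\tilde\varphi}$. Your proposed resolution does not close this. The hypothesis only forbids overlaps between a $u$-strip and a $d$-strip, so same-level strips ($V_d$ and $W_d$, or $V_u$ and $W_u$) may cross, and some such pair must cross for the braiding to be nontrivial. When $V_d$ crosses $W_d$, no sequence of small moves of $V_d$ avoids $W_d$. Both the old and new $a$-strings act on the crossing, where $U_{BW_d\bar B}$ has already altered the state, and the definition says nothing about how either string acts there. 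The Uhlmann step does not rescue this either. The vectors $U_{AV_d\bar A}U_{BW_d\bar B}\ket{\tilde\varphi}$ and $U_{AV'_d\bar A}U_{BW_d\bar B}\ket{\tilde\varphi}$ are only known to agree off the holes and all the strips involved. Uhlmann therefore gives a unitary on holes plus strips, and nothing forces it to drop out of the matrix element. Even agreement on the whole complement of the four holes (which is not disk-like) would only yield a hole unitary multiplying the ket alone, not the bra. That unitary would not cancel the way your purification unitary does.

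The missing idea is to apply the defining property to the \emph{other} anyon, using the opposite strip as an intermediary. All $b$-strings send $\ket{\tilde\varphi}$ to the same vector, and so do all $a$-strings. Moreover, $W_u$ is disjoint from both $V_d$ and $V_d'$. Hence, with superscripts suppressed,
\begin{align*}
U_{AV_d'\bar A}U_{BW_d'\bar B}\ket{\tilde\varphi}
&=U_{AV_d'\bar A}U_{BW_u\bar B}\ket{\tilde\varphi}
=U_{BW_u\bar B}U_{AV_d'\bar A}\ket{\tilde\varphi}
=U_{BW_u\bar B}U_{AV_d\bar A}\ket{\tilde\varphi}\\
&=U_{AV_d\bar A}U_{BW_u\bar B}\ket{\tilde\varphi}
=U_{AV_d\bar A}U_{BW_d\bar B}\ket{\tilde\varphi}.
\end{align*}
This is an exact identity of vectors. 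It needs no disk-like regions and no Uhlmann, and it uses precisely the non-overlap hypothesis. The bra is handled the same way, with $W_d$ as the intermediary.

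The same chain also shows that the $S$-matrix depends only on the vectors $U\ket{\tilde\varphi}$. That is what actually licenses your choice $U'=W_{A\bar A}U$ in the purification step, and the later substitution of Pauli strings.

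One small correction there. Your hole unitary $W_{A\bar A}$ cancels simply because $U^\dagger_{AV_u\bar A}$ and $U_{AV_d\bar A}$ are adjacent in Eq.~\eqref{eq:EB_Smatrix}. The unitary that must be commuted through something is the one on the $b$-holes. It passes $U^\dagger_{AV_u\bar A}U_{AV_d\bar A}$ because $B\cup\bar B$ is disjoint from the $a$-strips.
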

\begin{proof}
First, we show that for fixed purifications $\ket{\tilde{\varphi}^{(a,\bar{a})}}$ and $|\tilde{\varphi}^{(b,\bar{b})}\rangle$, $S_{ab}$ is independent of the location of the strips. In fact, one can show that the state
\begin{equation}
U_{AV_d\bar{A}}^{(a,\bar{a})}U_{BW_d\bar{B}}^{(b,\bar{b})}\ket{\tilde{\varphi}}
\end{equation}
is invariant under deformations of the strips. Let's deform the strips $V_d$ and $W_d$ into $V_d'$ and $W_d'$ in such a way that they don't intersect with $V_u$ and $W_u$.
\begin{equation}\begin{aligned}
U_{AV_d'\bar{A}}^{(a,\bar{a})}U_{BW_d'\bar{B}}^{(b,\bar{b})}\ket{\tilde{\varphi}}&=U_{AV_d'\bar{A}}^{(a,\bar{a})}U_{BW_u\bar{B}}^{(b,\bar{b})}\ket{\tilde{\varphi}}=U_{BW_u\bar{B}}^{(b,\bar{b})}U_{AV_d'\bar{A}}^{(a,\bar{a})}\ket{\tilde{\varphi}}=U_{BW_u\bar{B}}^{(b,\bar{b})}U_{AV_d\bar{A}}^{(a,\bar{a})}\ket{\tilde{\varphi}}\\
&=U_{AV_d\bar{A}}^{(a,\bar{a})}U_{BW_u\bar{B}}^{(b,\bar{b})}\ket{\tilde{\varphi}}=U_{AV_d\bar{A}}^{(a,\bar{a})}U_{BW_d\bar{B}}^{(b,\bar{b})}\ket{\tilde{\varphi}}.
\end{aligned}\end{equation}
It follows from $U_{AV_d\bar{A}}^{(a,\bar{a})}\ket{\tilde{\varphi}}=U_{AV_d'\bar{A}}^{(a,\bar{a})}\ket{\tilde{\varphi}}=\ket{\tilde{\varphi}^{(a,\bar{a})}}$, $U_{BW_d'\bar{B}}^{(b,\bar{b})}\ket{\tilde{\varphi}}=U_{BW_u\bar{B}}^{(b,\bar{b})}\ket{\tilde{\varphi}}=|\tilde{\varphi}^{(b,\bar{b})}\rangle$ by definition, and $[U_{AV_d'\bar{A}}^{(a,\bar{a})},U_{BW_u\bar{B}}^{(b,\bar{b})}]=[U_{AV_d\bar{A}}^{(a,\bar{a})},U_{BW_u\bar{B}}^{(b,\bar{b})}]=0$ because of non-overlapping support.

The second step is to show that $S_{ab}$ is independent of the purifications $\ket{\tilde{\varphi}^{(a,\bar{a})}}$ and $|\tilde{\varphi}^{(b,\bar{b})}\rangle$. For different purifications, because of Uhlmann's theorem, the purifying states are multiplied by some unitary acting on $A\bar{A}$ and $B\bar{B}$ respectively:
\begin{equation}
\ket{\tilde{\varphi}^{(a,\bar{a})}}\rightarrow U_{A\bar{A}}\ket{\tilde{\varphi}^{(a,\bar{a})}},\qquad |\tilde{\varphi}^{(b,\bar{b})}\rangle\rightarrow U_{B\bar{B}}|\tilde{\varphi}^{(b,\bar{b})}\rangle.
\end{equation}
The anyon strings becomes
\begin{equation}
U_{AV_u\bar{A}}^{(a,\bar{a})}\rightarrow U_{A\bar{A}}U_{AV_u\bar{A}}^{(a,\bar{a})},\quad
U_{AV_d\bar{A}}^{(a,\bar{a})}\rightarrow U_{A\bar{A}}U_{AV_d\bar{A}}^{(a,\bar{a})},\quad
U_{BW_u\bar{B}}^{(b,\bar{b})}\rightarrow U_{B\bar{B}}U_{BW_u\bar{B}}^{(b,\bar{b})},\quad
U_{BW_d\bar{B}}^{(b,\bar{b})}\rightarrow U_{B\bar{B}}U_{BW_d\bar{B}}^{(b,\bar{b})}.
\end{equation}
Plugging in the definition of the $S$-matrix, the $U_{A\bar{A}}$ cancels with $U_{A\bar{A}}^\dagger$ and $U_{B\bar{B}}$ cancels with $U_{B\bar{B}}^\dagger$.
\end{proof}

Because of the uniqueness of the $S$-matrix, we can choose the anyon strings to be the native anyon strings of the underlying topological order. Hence the $S$-matrix we defined here must agree with the $S$-matrix of the topological order.
\begin{proposition}
If the reference state is a subsystem of a string-net ground state evolved by finite-depth local unitary, then the $S$-matrix defined by Eq.~\eqref{eq:EB_Smatrix} must agree with the native $S$-matrix of the underlying topological order.
\end{proposition}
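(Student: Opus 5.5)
The plan is to use the freedom established in Proposition~\ref{thm:EB_Smatrix_uniqueness}. It lets us evaluate Eq.~\eqref{eq:EB_Smatrix} with any admissible anyon strings. We choose strings built from the native string operators of the string-net model, conjugated by the local unitary, and then compare the result with the standard expression for $S_{ab}$ of the string-net (Drinfeld center) theory.

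\textbf{Step 1: transport the native strings.} Let $W^{(a,\bar a)}_{AV\bar A}$ denote the native ribbon (string) operator of the string-net model on the sphere state $\ket{\varphi}$. It is supported on a thickened strip and creates the pair $a,\bar a$ in the two holes. Since $\ket{\tilde\varphi}=U_{D^+}\ket{\varphi}$, the conjugated operator $U_{D^+}W^{(a,\bar a)}_{AV\bar A}U_{D^+}^\dagger$ is supported on the $O(d)$-lightcone of the strip. We enlarge $A$, $\bar A$ and $V$ by width $O(d)$ so that it is supported on $AV\bar A$; this is harmless because the holes and strips have size $r\ge cd$.

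We must check that this operator is a valid anyon string in the sense of Sec.~\ref{subsec:EB_anyon_string}. Concretely, the state it produces should reduce on the annulus $X$ to the extreme point $\rho^{(a)}_X$ of $\Sigma'(X)$. This follows from two facts. First, the native state with charge $a$ is the extreme point of $\Sigma(X)$ for the string-net, by the identification in~\cite{bols2025sector}. Second, the bijection of Proposition~\ref{prop:info_convex_u_no_u} is implemented precisely by applying the lightcone of $U$. It therefore maps extreme points labeled $a$ to extreme points labeled $a$, because it preserves the annulus charge labeling.

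\textbf{Step 2: evaluate the braiding expression.} With these choices, every $U_{D^+}$ in Eq.~\eqref{eq:EB_Smatrix} cancels against its adjoint. The expression reduces to
\[
\bra{\varphi}W^{(b,\bar b)\dagger}_{u}W^{(a,\bar a)\dagger}_{u}W^{(a,\bar a)}_{d}W^{(b,\bar b)}_{d}\ket{\varphi}.
\]
This is the standard interferometric (linked-loop) expression for the string-net. Deforming the strings as in the proof of Proposition~\ref{thm:EB_Smatrix_uniqueness} turns it into the expectation value of an $a$-loop linked with a $b$-loop. By the known ribbon-operator algebra of string-nets~\cite{Levin_2006}, this value equals the normalized $S_{ab}$ of the Drinfeld center.

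\textbf{Main obstacle.} The delicate step is matching the labels and normalizations between the entanglement-bootstrap charges and the native anyon types. I expect two difficulties. The first is showing that the label map induced by Proposition~\ref{prop:info_convex_u_no_u} is the identity on anyon types, rather than merely some bijection. The second is fixing the convention, $S_{ab}$ versus $S_{ab}/S_{11}$, so that the two definitions coincide exactly. I would handle the first by tracking the lightcone channel explicitly on the native string-net extreme points, and the second by noting that both definitions are invariant under rephasing the strings.
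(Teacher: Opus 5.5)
Your strategy is the paper's: use Proposition~\ref{thm:EB_Smatrix_uniqueness} to swap in the circuit-conjugated native string operators, certify through the identification of~\cite{bols2025sector,bols2026sector} and Proposition~\ref{prop:info_convex_u_no_u} that they create the annulus extreme points, and then read off the native braiding. There is one step the paper states explicitly and you skip: unitarity. Anyon strings are by definition \emph{unitaries} on $AV\bar A$, obtained from Uhlmann's theorem. The purification-independence half of Proposition~\ref{thm:EB_Smatrix_uniqueness} also uses $U_{A\bar A}^\dagger U_{A\bar A}=I$.

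For non-Abelian $a$, the native ribbon operator $W^{(a,\bar a)}$ is in general not unitary, and $W\ket{\varphi}$ need not be normalized. So $\hat W:=U_{D^+}WU_{D^+}^\dagger$ is not an admissible string, and the cancellation in your Step~2 does not yet evaluate Eq.~\eqref{eq:EB_Smatrix}. The repair is the paper's remark. The complement of $AV\bar A$ is in the vacuum sector, and the state on that disk is fixed by its marginals. Hence $\hat W\ket{\tilde\varphi}/\lVert\hat W\ket{\tilde\varphi}\rVert$ has the same marginal there as $\ket{\tilde\varphi}$, and Uhlmann supplies a unitary on $AV\bar A$ acting this way on $\ket{\tilde\varphi}$.

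That unitary is pinned down only on $\ket{\tilde\varphi}$, so you cannot substitute it directly into Eq.~\eqref{eq:EB_Smatrix}. First use $U^{(b,\bar b)}_{BW_d\bar B}\ket{\tilde\varphi}=U^{(b,\bar b)}_{BW_u\bar B}\ket{\tilde\varphi}$ (and likewise for $a$). Combine this with the disjointness of $V_d$ from $W_u$ and of $V_u$ from $W_d$ to bring each unitary next to $\ket{\tilde\varphi}$, and only then substitute. The outcome is your Step~2 expression divided by $\lVert\hat W^{(a,\bar a)}\ket{\tilde\varphi}\rVert^{2}\lVert\hat W^{(b,\bar b)}\ket{\tilde\varphi}\rVert^{2}$. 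For Abelian string-nets, the case fed into Proposition~\ref{thm:EB_Smatrix_phases}, the native strings are already unitary and your argument works as written.

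Your proposed fix for the normalization also does not work, because invariance under rephasing constrains phases, not moduli. With unit-norm unitary strings, Eq.~\eqref{eq:EB_Smatrix} is the normalized linked-loop amplitude. This is the monodromy $S_{ab}S_{11}/(S_{1a}S_{1b})$, up to complex conjugation depending on orientation conventions. For an Abelian theory it is the pure phase $S_{ab}/S_{11}$. The paper's ``native $S$-matrix'' has to be read in this normalization; it later finds $S_{ab}=\omega^r$, which has modulus one.

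Your other worry, that the label correspondence might be a nontrivial permutation, is not a real obstacle. Charges on $\Sigma'(X)$ acquire native names only through the bijection of Proposition~\ref{prop:info_convex_u_no_u}. Your lightcone argument in Step~1 already shows that the conjugated native $a$-string produces exactly the transported $a$ extreme point.
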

\begin{proof}
We show that the native anyon strings in the anyon theory are valid candidates of anyon strings. Take any anyon string operator of the string-net model (conjugated by the finite-depth circuit of course) that creates $a$ somewhere inside $A$ and $\bar{a}$ somewhere inside $\bar{A}$.
\begin{equation*}
\centering
\includegraphics[width=6cm]{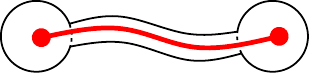}
\end{equation*}
As we discussed in Sec.~\ref{subsec:nonabelian_eb}, Refs.~\cite{bols2025sector,bols2026sector} demonstrate that the topological data extracted from entanglement bootstrap agrees with the intrinsic topological data of the underlying string-net model. In particular, applying the native anyon string creates the extreme point $\sigma_X^{(a)}$ on the annulus $X=S^2\backslash(A\cup\bar{A})$. The location of the endpoints of the string is not important---this merely changes the purification $\ket{\tilde{\varphi}^{(a,\bar{a})}}$, but does not affect the $S$-matrix (Proposition~\ref{thm:EB_Smatrix_uniqueness}). Moreover, since the complement of $AV\bar{A}$ is in the vacuum sector, the action of the string on the ground state can be implemented by a unitary supported on $AV\bar{A}$.

Under this choice, the $S$-matrix defined by~\eqref{eq:EB_Smatrix} recovers the $S$-matrix of the underlying topological order. Since the $S$-matrix is uniquely defined, this completes the proof.
\end{proof}

We can now prove Proposition~\ref{thm:EB_Smatrix_phases}. 
\begin{proof}[Proof of Proposition~\ref{thm:EB_Smatrix_phases}]
By the previous subsection and the purification-independence just proved, we are free to choose convenient representatives of the anyon strings. In particular, we may choose all four strings to be Pauli operators:
\begin{equation}
P_{AV_u\bar A}^{(a,\bar a)},\quad P_{AV_d\bar A}^{(a,\bar a)},\quad P_{BW_u\bar B}^{(b,\bar b)},\quad P_{BW_d\bar B}^{(b,\bar b)}.
\end{equation}
Since Pauli operators commute up to integer powers of $\omega=e^{2\pi i/q}$, there exists an integer $r$ such that
\begin{equation}
P_{BW_u\bar B}^{(b,\bar b)\dagger} P_{AV_u\bar A}^{(a,\bar a)\dagger} P_{AV_d\bar A}^{(a,\bar a)} P_{BW_d\bar B}^{(b,\bar b)}
= \omega^r\, P_{AV_u\bar A}^{(a,\bar a)\dagger} P_{AV_d\bar A}^{(a,\bar a)} P_{BW_u\bar B}^{(b,\bar b)\dagger} P_{BW_d\bar B}^{(b,\bar b)}.
\end{equation}
In other words, the $S$-matrix can be unlinked by a phase. Acting on $\ket{\tilde{\varphi}}$, the products are the identity because the upper and lower strings of the same anyon type act identically. Therefore
\begin{equation}
S_{ab} = \bra{\tilde{\varphi}} P_{BW_u\bar B}^{(b,\bar b)\dagger} P_{AV_u\bar A}^{(a,\bar a)\dagger} P_{AV_d\bar A}^{(a,\bar a)} P_{BW_d\bar B}^{(b,\bar b)} \ket{\tilde{\varphi}} = \omega^r.
\end{equation}
Thus every $S$-matrix phase is an integer power of $\exp(2\pi i/q)$.
\end{proof}

\section{Extensive long-range magic}
In this section, we show that the results of Sec.~\ref{sec:EB} lead to lower bound on long-range magic in various topological states. 
In particular, we prove that non-abelian topologically ordered states (at fixed points, e.g. string-net states) have extensive long-range magic.

\subsection{Magic measures}\label{subsec:LRM_magic_measures}

Here, we review the magic measures that we consider in this work.
These measures are magic monotones, which are well-motivated from the resource theory as proper measures of magic~\cite{veitch2014resource}.
In the proofs, we mainly focus on the log-stabilizer fidelity and show that this quantity is extensive in various setups.
This implies that other magic measures also exhibit extensive scalings, as the log-stabilizer fidelity is the lower bound on other magic measures as in Proposition~\ref{thm:LRM_magic_measures_order}.
We use $\calS$ to denote the set of states with no magic, i.e. mixtures (convex sums) of pure stabilizer states. 

% Here we list the magic measures that we will use. These magic measures are magic monotones that are well motivated in resource theory. 

\begin{itemize}
\item Log-stabilizer fidelity~\cite{Bravyi:2018ugg}. 

The stabilizer fidelity of state $\rho$ is defined as its maximum fidelity with a stabilizer mixture $\sigma \in \mathcal{S}$,
\begin{equation}
\begin{aligned}
&\calF(\rho,\calS):=\max_{\sigma\in\calS}\calF(\rho,\sigma),\qquad \calF(\rho,\sigma)=\lVert\sqrt{\rho}\sqrt{\sigma}\rVert_1^2= F(\rho,\sigma)^2.
\end{aligned}
\end{equation}
We consider the extensive quantity given by the logarithm of stabilizer fidelity
\begin{equation}
\LF(\rho):=-\log\calF(\rho,\calS).
\end{equation}

\item Relative entropy of magic~\cite{Emerson:2013zse}.

An alternative measure of magic is the relative entropy of magic, given by the minimum relative entropy between the state $\rho$ and a mixture $\sigma$ of stabilizer states,
\begin{equation}
S(\rho\|\calS):=\min_{\sigma\in\calS}S(\rho\|\sigma),\qquad S(\rho\|\sigma)=\Tr(\rho\log\rho)-\Tr(\rho\log\sigma).
\end{equation}

One can also quantify magic using R\'enyi relative entropies. In particular, we consider the max-relative entropy of magic~\cite{Wang:2020nyx},
\begin{equation}
S_{\max}(\rho\|\calS):=\min_{\sigma\in\calS}S_{\max}(\rho\|\sigma),\qquad S_{\max}(\rho\|\sigma)=\log \min\{\lambda\ge 0:\rho\preceq2^\lambda\sigma\}.
\end{equation}

\item Log-robustness of magic, and log-generalized robustness of magic~\cite{Howard:2017maw,Seddon:2021rij,Liu:2020yso}.

Another set of magic measures is the robustness of magic. 
We consider the mixture of the state $\rho$ and another state $\sigma \in \calS$ (a mixture of stabilizer states), i.e. $\sigma' = (\rho + s\sigma)/(1+s)$.
The robustness of magic is defined as the smallest $s$ such that the mixture $\sigma'$ has no magic.

Formally, the robustness of magic $\R(\rho)$ and its logarithmic version $\Lr(\rho)$ are defined as
\begin{equation}\begin{aligned}
\Lr(\rho)&=\log[1+2\R(\rho)],\; \operatorname{R}(\rho):=\min\{s\ge0:\exists\sigma,\sigma'\in\calS\ \st\ \rho=(1+s)\sigma'-s\sigma\}, 
\end{aligned}\end{equation}
There also exists a generalized version of the robustness $\GR(\rho)$, where $\sigma \in \calD$ is allowed to be chosen from all density matrices $\calD$.
Specifically,
\begin{align}
\Lgr(\rho)&=\log[1+2\GR(\rho)],\; \GR(\rho):=\min\{s\ge0:\exists\sigma'\in\calS,\sigma\in \calD \ \st\ \rho=(1+s)\sigma'-s\sigma\}.
\end{align}
\end{itemize}

\begin{proposition}\label{thm:LRM_magic_measures_order}
\begin{equation}
\LF(\rho)\le S(\rho\|\calS)\le S_{\max}(\rho\|\calS)\le\Lgr(\rho)\le\Lr(\rho)
\end{equation}
The magic measures above are at most extensive. Consider a system of $n$ qudits of local dimension $q$. For an arbitrary state, these magic measures are upper bounded by $(n+2^{-n-1})\log q$.

\end{proposition}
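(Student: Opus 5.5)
The plan is to prove the chain of inequalities link by link, each by a standard comparison between divergences, and then to bound the largest quantity, $\Lr(\rho)$, by an explicit stabilizer decomposition. The chain then gives the extensivity bound for every measure at once.

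\emph{First link, $\LF\le S(\cdot\|\calS)$.} For any $\sigma\in\calS$ we use the monotonicity of sandwiched R\'enyi divergences in the order $\alpha$. At $\alpha=1/2$ the sandwiched divergence equals $-\log F(\rho,\sigma)^2=-\log\calF(\rho,\sigma)$. It is bounded above by the $\alpha=1$ value $S(\rho\|\sigma)$. Minimizing over $\sigma\in\calS$ gives $-\log\max_\sigma\calF(\rho,\sigma)\le\min_\sigma S(\rho\|\sigma)$.

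\emph{Second link, $S(\cdot\|\calS)\le S_{\max}(\cdot\|\calS)$.} The same monotonicity with $\alpha\to\infty$ gives $S(\rho\|\sigma)\le S_{\max}(\rho\|\sigma)$ for every $\sigma$, and again we minimize over $\sigma\in\calS$.

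\emph{Third link, $S_{\max}\le\Lgr$.} If $\rho=(1+s)\sigma'-s\sigma$ with $\sigma'\in\calS$ and $\sigma\ge0$, then $\rho\preceq(1+s)\sigma'$. Hence $S_{\max}(\rho\|\calS)\le\log(1+s)$, and $\log(1+s)\le\log(1+2s)$.

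\emph{Fourth link, $\Lgr\le\Lr$.} This is immediate, since the feasible set for $\GR$ contains the feasible set for $\R$ (there $\sigma\in\calS\subset\calD$). So $\GR\le\R$, and the logarithm is monotone.

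\emph{Upper bound.} It suffices to bound $\Lr$. The plan is to write $\rho$ in the generalized Pauli basis,
\[
\rho=q^{-n}\sum_P \Tr(P^\dagger\rho)\,P ,
\]
and to express each non-identity Pauli operator as a difference of stabilizer projection states:
\begin{itemize}
\item Each $P\neq I$ is diagonalized by a stabilizer basis, with eigenprojectors $\Pi_j$ that are (unnormalized) stabilizer projection states.
\item Grouping the Hermitian combinations $\Tr(P^\dagger\rho)P+\text{h.c.}$ writes $\rho$ as $\sum_k c_k\tau_k$ with $\tau_k\in\calS$ and real $c_k$.
\item The robustness is then at most $(\sum_k|c_k|-1)/2$, because $\rho=(1+s)\sigma'-s\sigma$ with $1+s=\sum_{c_k>0}c_k$ and $s=\sum_{c_k<0}|c_k|$.
\item Since $|\Tr(P^\dagger\rho)|\le1$, the Pauli-norm bound gives $\sum_k|c_k|\lesssim q^{-n}\cdot q^{2n}\cdot O(1)$. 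Therefore $1+2\R\le O(q^{n})$ up to the trace-normalization constant.
\end{itemize}
Taking logarithms yields $\Lr(\rho)\le n\log q+O(1)$.

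The main obstacle is matching the exact constant $(n+2^{-n-1})\log q$. To get it, I would first try a sharper route: use $\Lr\le\log$ of the stabilizer $1$-norm, then Cauchy--Schwarz on the Pauli coefficients, $\sum_P|\Tr(P\rho)|\le q^n\sqrt{\Tr\rho^2}\le q^{n}$, which removes the crude constant. If the leftover term does not come out to exactly $2^{-n-1}\log q$, I would accept $n\log q+O(2^{-n})$. That weaker form still establishes extensivity, which is what the statement needs.
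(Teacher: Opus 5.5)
\textbf{Gap in the upper bound on $\Lr$: your decomposition only gives about $\tfrac32 n\log q$.}

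Your four links of the chain match the paper's proof, and so does reducing everything to a bound on $\Lr$. Your third link, $\rho\preceq(1+s)\sigma'$, is even slightly sharper than the paper's $\rho\preceq(1+2s)\sigma'$. The problem is the bound on $\Lr$ itself.

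In your decomposition the normalized states are $\tau=\Pi_j/\Tr\Pi_j$. The coefficient each receives is therefore $q^{-n}\Tr(P^\dagger\rho)\lambda_j\Tr\Pi_j$, not $q^{-n}\Tr(P^\dagger\rho)\lambda_j$. The ``trace-normalization constant'' you set aside is not $O(1)$: it is $\Tr\Pi_j=q^n/\delta_P$. Since $\sum_j\Tr\Pi_j=q^n$, each Pauli contributes up to $|\Tr(P^\dagger\rho)|$ to $\sum_k|c_k|$. What your construction actually proves is $1+2\R(\rho)\le\sum_P|\Tr(P^\dagger\rho)|$.

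Your Cauchy--Schwarz step then miscounts. There are $q^{2n}$ Paulis modulo phases and $\sum_P|\Tr(P^\dagger\rho)|^2=q^n\Tr\rho^2$. The correct bound is therefore $\sum_P|\Tr(P^\dagger\rho)|\le q^{3n/2}\sqrt{\Tr\rho^2}$, not $q^n\sqrt{\Tr\rho^2}$.

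This bound is essentially attained. For a Haar-typical pure state, $|\Tr(P\rho)|\sim q^{-n/2}$ for all $P\neq I$, so the sum is of order $q^{3n/2}$. The single-Pauli expansion therefore certifies at best $\Lr(\rho)\lesssim\tfrac32 n\log q$. That still shows the measures are at most extensive. But it gets the leading coefficient wrong, so neither $(n+2^{-n-1})\log q$ nor even $n\log q+O(1)$ can come out of this route, however you handle the subleading terms.

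\textbf{The missing idea: group the $q^{2n}$ Paulis into roughly $q^n$ commuting families, each contributing $O(1)$.}

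The paper does this on the dual side, using $1+2\R(\rho)=\max\{|\Tr(A\rho)|:|\Tr(A\sigma)|\le 1\ \forall\sigma\in\calS\}$.
\begin{itemize}
\item Write $A=\sum_P c_P P$ and fix a maximal stabilizer group $S$ with $|S|=q^n$.
\item Apply the constraint to the $q^n$ rephased pure stabilizer states $\sigma_{S,\chi}=q^{-n}\sum_{P\in S}\chi(P)P$, $\chi\in\widehat S$. Unitarity of the character table of $S$ then gives $\sum_{P\in S}|c_P|^2\le 1$.
\item Lemma~\ref{thm:cover_pauli_group} covers $\calP$ by $q^n\prod_j(1+p_j^{-n})$ such groups. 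This uses a Galois-ring construction after a Chinese-remainder reduction to prime powers.
\item Hence $\lVert A\rVert_2^2=q^n\sum_P|c_P|^2\le q^{2n}\prod_j(1+p_j^{-n})$.
\item Finally $|\Tr(A\rho)|\le\lVert A\rVert_\infty\le\lVert A\rVert_2$ gives $\Lr(\rho)\le n\log q+\tfrac12\sum_j\log(1+p_j^{-n})=n\log q+O(2^{-n}\log q)$.
\end{itemize}
This is exactly the fallback form you were willing to accept. Note that the paper's last simplification to precisely $2^{-n-1}\log q$ does not quite follow: for $q=2$, $\tfrac12\log(1+2^{-n})>2^{-n-1}\log 2$.

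If you want to stay on the primal side, the same grouping is what you need. For prime $q$, a complete set of $q^n+1$ stabilizer MUBs satisfies the 2-design identity $\rho=\sum_{b,j}\langle b,j|\rho|b,j\rangle\,|b,j\rangle\langle b,j|-I$. This gives $1+2\R(\rho)\le 2q^n+1$, i.e.\ $\Lr(\rho)\le n\log q+O(1)$.
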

\begin{proof}
The first two inequalities follow from the monotonicity of R\'enyi relative entropy with respect to the R\'enyi index~\cite{Muller-Lennert:2013liu,Beigi:2013teh}. The $\alpha$-R\'enyi relative entropy is defined as~\cite{Wilde:2013bdg,Muller-Lennert:2013liu}:
\begin{equation}
S_\alpha(\rho\|\sigma):=\begin{cases}
\frac{1}{\alpha-1}\log\Tr\left(\sigma^{\frac{1-\alpha}{2\alpha}}\rho\sigma^{\frac{1-\alpha}{2\alpha}}\right)^\alpha,& \text{if }\rho\preceq\sigma \\
\infty,& \text{otherwise}
\end{cases}
\end{equation}
for $\alpha\in[\frac{1}{2},1)\cup(1,\infty)$. The $\alpha=\frac{1}{2}$ case corresponds to fidelity: $D_{\frac{1}{2}}(\rho,\sigma)=-\log\lVert\sqrt{\rho}\sqrt{\sigma}\lVert_1^2=-\log \calF(\rho,\sigma)$. It is monotonically increasing in $\alpha$~\cite{Muller-Lennert:2013liu,Beigi:2013teh}. The $\alpha\rightarrow\infty$ limit is the max-relative entropy:
\begin{equation}
\lim_{\alpha\rightarrow\infty}S_\alpha(\rho\|\sigma)=\log\left\lVert\sigma^{-\frac{1}{2}}\rho\sigma^{-\frac{1}{2}}\right\rVert_\infty
=\log\min\{\lambda\ge 0: \rho\preceq\lambda\sigma\}=S_{\max}(\rho\|\sigma)
\end{equation}
The second equality is because $\sigma^{-\frac{1}{2}}\rho\sigma^{-\frac{1}{2}}\preceq\lambda I\Leftrightarrow \rho\preceq\lambda\sigma$.

For the third inequality, if $s=\GR(\rho)$ and $\rho=(1+s)\sigma_1-s\sigma_2$ with $\sigma_1\in\calS$ and $\sigma_2\in\calD$, then $\rho\preceq (1+2s)\sigma_1$. Hence
\begin{equation}
S_{\max}(\rho\|\calS)\le \log(1+2s)=\Lgr(\rho).
\end{equation}
Finally, the definition of generalized robustness optimizes over a larger set of auxiliary states than the definition of robustness, so $\GR(\rho)\le \R(\rho)$, and therefore $\Lgr(\rho)\le \Lr(\rho)$.
The log-robustness cannot scale faster than extensive and has an upper bound in an $n$-qudit system, given by Proposition~\ref{thm:Rom_upper_bound} in Appendix~\ref{sec:Rom_upper_bound}, generalizing the qubit case proved in~\cite{Liu:2020yso}.
\end{proof}
%\YZ{are there lower bounds to distillable magic?}

\begin{comment}
\YZ{As an example, Haar random states have $O(n)$ magic: for a fixed stabilizer state $\phi$,}
\begin{equation}
\Pr[|\braket{\phi}{\psi}|^2\ge q^{-(1-\alpha)n}]=\Pr[|\braket{\phi}{\psi}|^{2k}\ge q^{-(1-\alpha)nk}]\le q^{(1-\alpha)nk}\Ens_{\ket{\psi}\sim\H}|\braket{\phi}{\psi}|^{2k}\le k!q^{-\alpha nk}
\end{equation}
for $k\ll q^n$. The total number of stabilizer states is... so 
\begin{equation}
\Pr[\LF(\ketbra{\psi})\le (1-\alpha)n\log q]=\Pr[\exists \ket{\psi}\in\calS, \st |\braket{\phi}{\psi}|^2\ge q^{-(1-\alpha)n}]\le
\end{equation}
\end{comment}

\subsection{Long-range magic in non-Abelian topologically ordered states}\label{subsec:LRM_nA}
In this section, we show that the ground states of non-Abelian string-net models have extensive long-range magic. This conclusion holds for any qudit dimension.

\begin{theorem}\label{thm:LRM_non-Abelian}
The ground state $\ket{\psi_{\nA}}$ of a non-Abelian string-net model has long-range magic
\begin{equation}
\LR M_d(\ketbra{\psi_{\nA}})\ge n\cdot \calC_{\nA}(\UFC,q,d),
\end{equation}
in a system of $n$ qudits with qudit dimension $q$, where $\calC_{\nA}(\UFC,q,d)$ only depends on the unitary fusion category, $q$ and circuit depth $d$. The magic measure $M$ is any magic measure listed in Sec.~\ref{subsec:LRM_magic_measures}.
\end{theorem}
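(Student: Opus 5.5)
By Proposition~\ref{thm:LRM_magic_measures_order}, it suffices to lower bound the log-stabilizer fidelity $\LF$. Fix any $U\in\LU_d$ and any stabilizer mixture $\tau\in\calS$, and set $\rho=U\ketbra{\psi_{\nA}}U^\dagger$. Since the fidelity is concave in its second argument and attains its maximum over $\calS$ at an extreme point, we may take $\tau=\ketbra{\phi}$ to be a pure stabilizer state. The goal is to show $\calF(\rho,\ketbra{\phi})\le (1-\eta)^{m}$ with $m=\Theta(n/d^2)$ and a constant $\eta=\eta(\UFC,q,d)>0$. This yields $\LF(\rho)\ge m\log\frac{1}{1-\eta}=n\cdot\calC_{\nA}(\UFC,q,d)$.

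\textbf{Step 1 (patches).} Tile the lattice by disjoint balls $B_1,\dots,B_m$ of radius $R=cd$, with $c$ large enough that Theorem~\ref{thm:EB_stabilizer_implies_Abelian} applies to each of them, and with mutual separation $\ge R$. This gives $m=\Theta(n/(cd)^2)$. By translation invariance of the string-net fixed point, up to finitely many local patterns depending on the lattice and the UFC, each reduced state $\rho_{B_j}$ belongs to a finite family.

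\textbf{Step 2 (local gap).} Theorem~\ref{thm:EB_stabilizer_implies_Abelian} says $\rho_{B_j}$ is not a stabilizer projection state. Reduced states of pure stabilizer states on $B_j$ are exactly stabilizer projection states, and there are finitely many of them for a fixed number of qudits. Hence
\[
\eta_j:=1-\max_{\pi}\calF(\rho_{B_j},\pi)>0 .
\]
Moreover, $\eta_j$ depends only on the local pattern, $q$, and $d$. Minimizing over the finitely many possible $U$-restrictions is not available, since the set of depth-$d$ circuits is continuous. We instead use compactness: the set of states reachable by depth-$d$ lightcones is compact and, by the theorem, disjoint from the finite set of stabilizer projection states. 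Thus $\eta:=\inf\eta_j>0$.

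\textbf{Step 3 (multiplicativity; main obstacle).} The difficulty is that $\ket{\phi}$ may be entangled across patches, so the bound $\calF(\rho,\phi)\le\prod_j\calF(\rho_{B_j},\phi_{B_j})$ does not follow directly. The plan is to exploit stabilizer structure as follows. Measure in sequence, patch by patch, the local stabilizer group generators of $\phi$ restricted to $B_j$; these are a stabilizer measurement. Conditioned on outcomes, the post-measurement stabilizer state restricted to $B_{j+1}$ is still a stabilizer projection state, only rephased; by Lemma~\ref{thm:EB_pauli_connects_extreme_points}-type arguments it is Pauli-equivalent. Meanwhile $\rho$, because of the separation of the balls and the decoupling property (mutual information vanishing between separated balls after the depth-$d$ circuit, applied to $\rho_{B_1\cdots B_m}$, which factorizes for the string-net fixed point up to the circuit's lightcone), has product marginal $\bigotimes_j\rho_{B_j}$.

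Fidelity is monotone under the stabilizer-measurement channel, and also under partial trace to $\bigcup_j B_j$. Using the factorized $\rho$ and the chain rule for fidelity of a product state against a classically-conditioned stabilizer state, we get $\calF\le\prod_j\max_{\pi}\calF(\rho_{B_j},\pi)\le(1-\eta)^m$. Justifying the factorization of $\rho$ over separated balls, and controlling the conditioned stabilizer states so that each conditional marginal remains a stabilizer projection state, is the step I expect to require the most care. If the decoupling is only approximate, I would instead enlarge the buffers and absorb the error into $\eta$.
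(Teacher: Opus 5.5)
Your Steps 1 and 2 essentially match the paper's setup. You take $m=\Theta(n/d^2)$ patches of size and separation $\Theta(d)$, use exact factorization of the patch marginals from string-net decoupling plus disjoint backward lightcones, reduce to a pure stabilizer state $\ket{\phi}$ whose patch marginal lies in $\SP$, and invoke Theorem~\ref{thm:EB_stabilizer_implies_Abelian} on each patch. Your compactness argument is a reasonable way to justify the uniform constant that the paper only asserts. One small slip: the maximum over $\calS$ sits at a pure stabilizer state because $\calF(\rho,\tau)=\bra{\tilde\psi}\tau\ket{\tilde\psi}$ is \emph{linear} in $\tau$ for pure $\rho$. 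Concavity alone would not put a maximum at an extreme point.

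The genuine gap is Step 3. The bound you aim for uses the unrestricted fidelity $\max_\pi\calF(\rho_{B_j},\pi)$ on each patch, and that bound is false in general, because stabilizer fidelity is not multiplicative once the stabilizer state may entangle the patches. Take a Haar-random $k$-qubit $\ket{\psi}$. Typically $\max_\pi\calF(\ket{\psi}\bra{\psi},\pi)=O(k^2 2^{-k})$, yet $\ket{\psi}\otimes\ket{\bar\psi}$ has fidelity $2^{-k}$ with the maximally entangled stabilizer state across the two patches. For a measurement $\calM$ on $B_1$, the chain rule actually gives, in root fidelity $F=\sqrt{\calF}$,
\[
F\bigl(\rho_{B_1}\otimes\rho_{\mathrm{rest}},\sigma\bigr)\le\sum_x\sqrt{p_xq_x}\,F\bigl(\rho_{\mathrm{rest}},\sigma_{\mathrm{rest},x}\bigr)\le F\bigl(\calM(\rho_{B_1}),\calM(\sigma_{B_1})\bigr)\max_x F\bigl(\rho_{\mathrm{rest}},\sigma_{\mathrm{rest},x}\bigr).
\]
So the per-patch factor is the \emph{classical} fidelity of the outcome statistics. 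That quantity is $\ge F(\rho_{B_1},\sigma_{B_1})$, so your $\eta$ does not control it. Your specific measurement, the stabilizers of $\phi$ supported in $B_j$, can also be empty. If $\phi$ maximally entangles each patch with the rest of the system, every $\phi_{B_j}$ is maximally mixed and every measurement is trivial. Iterating then yields one factor, not $m$.

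The missing idea has two parts. On each patch you must choose a stabilizer measurement adapted to the marginal $\sigma_{B_j}$ that actually distinguishes it from $\rho_{B_j}$. You must also show the local obstruction survives the restriction to such measurements. The paper does this with Lemma~\ref{thm:SM_distinguishability}. Pauli measurements are informationally complete, so the spectral measurement of a single Pauli achieves $\|\calM(\rho)-\calM(\tau)\|_1\ge\|\rho-\tau\|_1/\mathrm{D}$. By Fuchs--van de Graaf this gives $F_{\SM}(\rho_{B_j},\SP_j)\le\sqrt{1-\epsilon^2/(4\mathrm{D}^2)}$. Iterating Lemma~\ref{thm:fidelity_two_party}, whose conditional states stay in $\SP$ because the measurement is a stabilizer measurement, then yields Proposition~\ref{thm:exp_small_fidelity_bound}. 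Redefine $\eta$ through this stabilizer-measured fidelity, with $\epsilon$ the trace-distance gap; your compactness argument still gives uniformity, and the rest of your outline goes through.
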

\begin{proof}
Consider the state $|\tilde{\psi}_{\nA}\rangle$, obtained by evolving the non-Abelian string-net ground state by a local unitary of depth at most $d$. We choose a family of disjoint ball-like patches $A_1,\dots,A_m$ as in Fig.~\ref{fig:patches_with_magic_ground_state}. 
\begin{figure}
    \centering
    \includegraphics[width=0.6\linewidth]{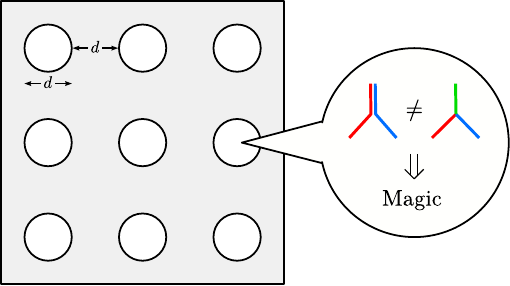}
    \caption{Patch construction for the ground state lower bound. 
After evolving a non-Abelian string-net ground state by a depth-$d$ local unitary, one can choose $O(n/d^2)$ disjoint patches of size $O(d)$ separated by distance $O(d)$. The reduced state on the union of these patches factorizes. Zooming in on each patch, the entanglement bootstrap analysis detects a nontrivial fusion space, which is incompatible with stabilizerness.}
    \label{fig:patches_with_magic_ground_state}
\end{figure}
The patches are separated by a distance of order $d$, with a sufficiently large constant prefactor, so that their backward lightcones remain disjoint. We also choose each patch to have diameter $O(d)$, such that Theorem~\ref{thm:EB_stabilizer_implies_Abelian} in Sec. \ref{subsec:stab_projection} is applicable.
In two spatial dimensions, one can find $m=O(n/d^2)$ such patches.

Because of the $O(d)$ distance between the patches, the reduced state on these patches factorizes: $\rho_{A_1\cup\cdots\cup A_m}=\otimes_{i=1}^m\rho_i$, where $\rho_i$ is the density matrix of the $i$'th patch $A_i$. 
To see why it factorizes, consider the thickening of $A_i$ by the lightcone radius of the depth-$d$ circuit. Before applying the circuit, the mutual information between disjoint disk-like regions in the string-net ground state $|\psi_{\nA}\rangle$ vanishes. Thus, the density matrix over the thickened patches factorizes as a tensor product. Now consider the reduced density matrix of the evolved state on $A_1\cup\cdots\cup A_m$. When evaluating this reduced density matrix, every gate outside the backward lightcone of $A_1\cup\cdots\cup A_m$ cancels against its adjoint after tracing out the complement region. Since the backward lightcones of the patches are disjoint, the circuit acts independently on each thickened patch. It follows that the patches factorize.

In the next step, we make use of the tensor product structure on the patches to show that the root fidelity 
\begin{equation}
F(\ketbra*{\tilde{\psi}_{\nA}},\calS):=\max_{\sigma\in\calS}F(\ketbra*{\tilde{\psi}_{\nA}},\sigma) 
\end{equation}
is exponentially small in the number of patches $m$. Since $|\tilde{\psi}_{\nA}\rangle$ is a pure state, the fidelity $\calF(\ketbra*{\tilde{\psi}_{\nA}},\sigma)=\bra*{\tilde{\psi}_{\nA}}\sigma\ket*{\tilde{\psi}_{\nA}}$ is linear in $\sigma$. Hence the maximum over the convex set $\calS$ is attained at an extreme point, namely at some pure stabilizer state $\ket{\phi}$.\footnote{Equivalently for $F(\ketbra*{\tilde{\psi}_{\nA}},\sigma)=\sqrt{\calF(\ketbra*{\tilde{\psi}_{\nA}},\sigma)}$.} Tracing out the complement of the patches, $\ketbra*{\tilde{\psi}_{\nA}}$ reduces to the product state $\otimes_{i=1}^m\rho_i$, while $\ketbra{\phi}$ reduces to a stabilizer projection state $\sigma\in\SP$. By monotonicity of fidelity under partial trace,
\begin{equation}
F(\ketbra*{\tilde{\psi}_{\nA}},\ketbra{\phi})\le F\!\left(\otimes_{i=1}^m\rho_i,\sigma\right).
\end{equation}
On each patch, we can now apply the entanglement bootstrap analysis. Since the underlying anyon theory is non-Abelian, Theorem~\ref{thm:EB_stabilizer_implies_Abelian} implies that $\rho_{i}$ cannot be a stabilizer projection state.  
% Since the anyons are non-Abelian, we conclude that $\rho_{i}$ cannot be a stabilizer projection state according to Theorem~\ref{thm:EB_stabilizer_implies_Abelian}. 
The minimum trace distance between $\rho_{A_i}$ and stabilizer projection states is lower bounded by a constant that depends on the unitary fusion category, qudit dimension $q$ and the patch size depending on $d$. It is not obvious at this point that the fidelity is exponentially small in $m$, because $\sigma$ can be entangled across the patches and the fidelity isn't simply multiplicative. Nonetheless, in Appendix~\ref{sec:multiplicative_fidelity_bound} we prove Proposition~\ref{thm:multiplicative_fidelity_bound} and~\ref{thm:exp_small_fidelity_bound} by making use of the notion of stabilizer measurement fidelity, showing that the fidelity between the product state $\otimes_{i=1}^m\rho_i$ and any stabilizer projection state is exponentially small in $m$. Therefore $F(\ketbra*{\tilde{\psi}_{\nA}},\calS)$ is exponentially small in the number of patches, and hence exponentially small in system size for fixed $d$.
Finally, Proposition~\ref{thm:LRM_magic_measures_order} implies that every magic measure listed in Sec.~\ref{subsec:LRM_magic_measures} is lower bounded by the same quantity. Since the depth-$d$ local unitary was arbitrary, taking the minimum over all depth-$d$ local unitaries proves the claimed lower bound on $\LR M_d$.
\end{proof}

\subsection{Long-range magic in Abelian topological states depends on qudit dimension}\label{subsec:LRM_abelian}
We now consider Abelian string-net models. We show that if the qudit dimension cannot resolve all braiding phases, even Abelian string-net ground states must have extensive long-range magic. The proof for extensiveness is parallel to Sec.~\ref{subsec:LRM_nA}, but here the local obstruction comes from the quantization of mutual braiding phases established in Sec.~\ref{subsec:EB_S_matrix}. This again highlights the significance of the unconditional (qudit-dimension-independent) existence of extensive long-range magic in the non-Abelian case.
\begin{theorem}
Stabilizer states can only realize Abelian string-net models where mutual braiding phases are $q$-th root of unity. Moreover, if there exist mutual braiding phases that are not $q$-th root of unity, then the Abelian string-net ground state has long-range magic
\begin{equation}
\LR M_d(\ketbra{\psi_{\A}})\ge n\cdot \calC_{\A}(\UFC,q,d),
\end{equation}
where $\calC_{\A}(\UFC,q,d)$ only depends on the unitary fusion category, $q$ and circuit depth $d$. The magic measure $M$ is any magic measure listed in Sec.~\ref{subsec:LRM_magic_measures}.
\end{theorem}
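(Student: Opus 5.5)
The proof runs in parallel with Theorem~\ref{thm:LRM_non-Abelian}; only the local obstruction on each patch changes. The first claim, that stabilizer states realize only Abelian string-nets with $q$-th-root braiding phases, combines two earlier results. Theorem~\ref{thm:EB_stabilizer_implies_Abelian} rules out the non-Abelian case. Proposition~\ref{thm:EB_Smatrix_phases} shows that if a ball-like region of diameter $O(d)$ of $U|\psi_{\A}\rangle$ is a stabilizer projection state, then every $S$-matrix phase is a power of $\omega=e^{2\pi i/q}$. The intrinsic $S$-matrix can be evaluated on that patch: the purification construction of Sec.~\ref{subsec:EB_anyon_string} needs only the reduced state on a disk $D$ together with local topological order. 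By the preceding proposition, this $S$-matrix agrees with the native one. So if some mutual braiding phase is not a $q$-th root of unity, no such patch can be a stabilizer projection state.

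For the extensive bound, I would fix an arbitrary depth-$d$ circuit $U$ and set $|\tilde\psi_{\A}\rangle=U|\psi_{\A}\rangle$. I would then choose $m=\Theta(n/d^2)$ disjoint ball-like patches $A_1,\dots,A_m$ of diameter $\Theta(d)$, with pairwise separations $\Theta(d)$ chosen large enough that their backward lightcones are disjoint. Each patch must be large enough to contain the disk $D$ with two pairs of holes and the crossing strips used to define $S_{ab}$. Two facts give factorization across patches: disjoint disk-like regions of the string-net ground state have vanishing mutual information, and the lightcones are disjoint. Hence $\rho_{A_1\cdots A_m}=\bigotimes_i\rho_i$.

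Translation invariance of the string-net model (or a compactness argument over the finitely many local configurations of gates) gives each $\rho_i$ trace distance at least $\epsilon(\UFC,q,d)>0$ from the set $\SP$ of stabilizer projection states on $A_i$. The set of stabilizer projection states on a fixed finite patch is finite. Meanwhile, by the first part, $\rho_i$ lies outside this set for every choice of depth-$d$ gates, and the set of depth-$d$ circuits on a bounded region is compact. So the infimum of the distance is a positive constant depending only on the UFC, $q$ and $d$.

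The remaining step reduces to pure stabilizer states and then multiplies the per-patch gaps. The fidelity $\langle\tilde\psi_{\A}|\sigma|\tilde\psi_{\A}\rangle$ is linear in $\sigma$, so the maximum over $\calS$ is attained at a pure stabilizer state $|\phi\rangle$. Its marginal on $\bigcup_i A_i$ is a stabilizer projection state, and monotonicity of fidelity under partial trace gives $F\le F(\bigotimes_i\rho_i,\sigma)$. I would then apply Propositions~\ref{thm:multiplicative_fidelity_bound} and~\ref{thm:exp_small_fidelity_bound} from Appendix~\ref{sec:multiplicative_fidelity_bound}. They bound the fidelity of a product state against a possibly cross-patch-entangled stabilizer projection state by $(1-c\,\epsilon^2)^{m}$, or a similar constant power. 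This yields $\LF\ge m\cdot c'(\UFC,q,d)=n\cdot\calC_{\A}(\UFC,q,d)$. Proposition~\ref{thm:LRM_magic_measures_order} then transfers the bound to all the listed measures, and minimizing over $U$ gives the bound on $\LR M_d$.

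I expect the main obstacle to be the uniform positive gap $\epsilon$ on each patch. Proposition~\ref{thm:EB_Smatrix_phases} is a purely qualitative statement, so making it quantitative requires the compactness argument to be carried out carefully. It also requires checking that the $S$-matrix extraction is stable, so that states merely close to stabilizer projection states cannot evade the obstruction. The finiteness of $\SP$ on a patch makes this manageable.
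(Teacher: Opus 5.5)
Your proposal is correct and follows the paper's proof. Proposition~\ref{thm:EB_Smatrix_phases} (together with Theorem~\ref{thm:EB_stabilizer_implies_Abelian}) supplies the per-patch obstruction, and the patch, factorization and multiplicative-fidelity argument of Theorem~\ref{thm:LRM_non-Abelian} amplifies it to an extensive bound; your compactness argument (finite $\SP$ on a patch, compact set of lightcone gates) fills in the uniform trace-distance gap that the paper only asserts, and it already makes a separate stability check of the $S$-matrix extraction unnecessary.
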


\begin{proof}
Take any disk-like subregion with sufficiently large $O(d)$ radius of the state $\ketbra{\psi_{\A}}$ evolved by a depth-$d$ local unitary. By Proposition~\ref{thm:EB_Smatrix_phases}, if the reduced state over the subregion is a stabilizer projection state, then the mutual braiding phases of the topological order can only be $q$-th root of unity. If not, then the subregion cannot be a stabilizer projection state. The same procedure as in Theorem~\ref{thm:LRM_non-Abelian} gives the claimed lower bound on long-range magic.
\end{proof}

\subsection{Extensive long-range magic in low-energy states}\label{subsec:low_energy_states}

In this section, we show that in non-Abelian string-net models, the extensiveness of long-range magic persists at low energies, as long as the circuit depth $d$ does not scale badly with the energy density. Our statements hold true for any local qudit dimension.
\begin{theorem}
For a non-Abelian string-net Hamiltonian $H$ on $n$ qudits, where each term in the Hamiltonian has $O(1)$ operator norm, any state $\ket{\psi}$ with energy $\bra{\psi}H\ket{\psi}$ less than $\epsilon n$ above the ground state has long-range magic
\begin{equation}
\LR\Lr_d(\ketbra{\psi})=\Omega(nd^{-2}-n\epsilon^{1-\alpha}\log(1/\epsilon))\cdot\calC_{\nA}(\UFC,q,d),
\end{equation}
for any fixed constant $\alpha\in(0,1)$ and sufficiently small $\epsilon$.
\end{theorem}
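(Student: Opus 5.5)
The plan is to reduce the low-energy case to the ground-state argument of Theorem~\ref{thm:LRM_non-Abelian}, using the fact that log-robustness behaves well under convex mixtures. Write $H=\sum_j h_j$ with commuting projector-like string-net terms, so the excitation eigenbasis $\{\ket{E_{\bolda}}\}$ is labelled by anyon (vertex/plaquette violation) configurations $\bolda$, with energy at least a constant times the number of violated terms. Expand $\ket{\psi}=\sum_{\bolda}c_{\bolda}\ket{E_{\bolda}}$. Since $\bra{\psi}H\ket{\psi}\le \epsilon n$, Markov's inequality shows that the weight on configurations with more than $\epsilon^{\alpha}n$ violations is at most $O(\epsilon^{1-\alpha})$. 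Accordingly, split $\ket{\psi}=\sqrt{1-p}\ket{\psi_{\rm good}}+\sqrt{p}\ket{\psi_{\rm bad}}$ with $p=O(\epsilon^{1-\alpha})$, where $\ket{\psi_{\rm good}}$ is supported on configurations with at most $\epsilon^\alpha n$ anyons.

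For the good part, take a tiling into $O(n/d^2)$ candidate patches of size $O(d)$ separated by $O(d)$, as in Fig.~\ref{fig:patches_with_magic_ground_state}. Each anyon, after the depth-$d$ circuit $U$, spoils only $O(1)$ patches, namely those whose $O(d)$-neighborhood contains it. Hence at least $m\ge c\,n/d^2-O(\epsilon^\alpha n)$ patches are clean for \emph{every} configuration in the support. The subtlety is that the clean set depends on $\bolda$. I would handle this by fixing instead a random or shifted patch family, and averaging so that a single family of $m=\Omega(n/d^2)$ patches is anyon-free with high weight. Alternatively, I would coarsen by grouping configurations according to their clean set, which costs only an entropy term of order $\log\binom{n}{\epsilon^\alpha n}\approx n\epsilon^\alpha\log(1/\epsilon)$; this is where the $n\epsilon^{1-\alpha}\log(1/\epsilon)$-type loss enters.

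On a clean patch, local topological order of string-nets (\cite{Jones:2023xew}) gives that the reduced state of $U\ket{E_{\bolda}}$ equals the ground-state marginal $\rho_i$. Since the patches' lightcones are disjoint and string-net eigenstates have vanishing mutual information between anyon-free separated disks, the reduced state on the union is $\otimes_i\rho_i$. This also holds for superpositions of configurations sharing the clean set, because the cross terms between different anyon configurations are orthogonal on the complement. Propositions~\ref{thm:multiplicative_fidelity_bound} and~\ref{thm:exp_small_fidelity_bound} together with Theorem~\ref{thm:EB_stabilizer_implies_Abelian} then give $\LF\ge m\cdot\calC_{\nA}$ for each grouped component, and hence $\Lr\ge m\,\calC_{\nA}$.

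To pass to $\ket{\psi}$ itself, I would use the robustness structure. Robustness is convex, and $\R(\rho)\ge$ overlap-based bounds, so I would write $U\ketbra{\psi}U^\dagger$ as a combination of the good component and a remainder. The contributions from the bad weight and from the number $K\le 2^{O(n\epsilon^\alpha\log(1/\epsilon))}$ of groups are then absorbed by subadditivity of $\log$: a mixture of $K$ components shifts $\Lr$ by at most $\log K$ plus terms controlled by $p$.

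The main obstacle I expect is precisely this last step, since log-robustness is not simply convex-monotone in the needed direction for coherent superpositions. My first attempt would be to use the dual (witness) characterization of robustness. I would construct a single operator witness $W=\bigotimes_i W_i$ on the fixed clean patches, with $|\Tr(W\sigma)|\le 1$ for all stabilizer states and $\Tr(W\rho)$ exponentially large in $m$. The components with dirty patches or bad weight then perturb $\Tr(W\rho)$ by at most $p\,\|W\|$, and this yields $\Lr\ge \Omega(n d^{-2}-n\epsilon^{1-\alpha}\log(1/\epsilon))\,\calC_{\nA}$.
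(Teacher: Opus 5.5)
Your setup matches the paper: the Markov split, clean patches for each anyon configuration, local topological order on clean patches, the product structure on the patch union, and Propositions~\ref{thm:multiplicative_fidelity_bound}--\ref{thm:exp_small_fidelity_bound}. Grouping configurations by their set of dirty patches is a workable, coarser version of the paper's count of low-energy eigenstates. (Your $\alpha$ plays the role of the paper's $1-\alpha$, which is harmless.)

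The gap is the final step, which you flag but do not close. Convexity of robustness only gives \emph{upper} bounds on $\Lr$ of a combination. In any case $U\ket{\psi}$ is a coherent superposition, not a mixture. The witness route as proposed fails for three separate reasons.
(i) No fixed family of $\Omega(n/d^2)$ patches is anyon-free with high weight. If the $\Theta(\epsilon n)$ anyons of a low-energy state are spread out, the weight on configurations avoiding a fixed family covering a constant fraction of the area is $e^{-\Omega(\epsilon n)}$. A single dirty patch can make its factor $\Tr(W_i\,\cdot\,)$ small or negative, so averaging over shifts does not rescue a product witness.
(ii) A product $W=\bigotimes_i W_i$ of single-patch witnesses is not automatically dual-feasible, because stabilizer states can be entangled across patches. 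This is exactly the non-multiplicativity the paper circumvents with the $\SM$-fidelity, and robustness of magic is in general strictly submultiplicative.
(iii) $\lVert W\rVert_\infty\ge\Tr(W\rho)$ is exponentially large, and for a product witness $\lVert W\rVert_\infty/\Tr(W\rho_{\rm good})$ is typically exponentially large too. Once cross terms are included, the error from the bad component is of order $\sqrt{p}\,\lVert W\rVert_\infty$ (not $p\lVert W\rVert_\infty$), and it swamps the signal.

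The missing idea is to take the union over components at the level of overlaps with a single pure stabilizer state, and only then pass to robustness with a rank-one witness.
\begin{itemize}
\item Each low-energy eigenstate $U\ket{E_i}$ (or each of your group components) has its \emph{own} clean family. So $|\langle\phi|\tilde E_i\rangle|^2\le 2^{-\Omega(nd^{-2}-n\epsilon^{1-\alpha})\calC_{\nA}}$ for every pure stabilizer state $\ket{\phi}$.
\item Since the components are orthogonal, Cauchy--Schwarz gives $|\langle\phi|\tilde\psi_l\rangle|^2\le\sum_i|\langle\phi|\tilde E_i\rangle|^2$. This costs exactly the $\log K$ you anticipated, with $K\le 2^{O(n\epsilon^{1-\alpha}\log(1/\epsilon))}$, and it bounds $\calF(|\tilde\psi_l\rangle\langle\tilde\psi_l|,\calS)$.
\item Then take $A=|\tilde\psi_l\rangle\langle\tilde\psi_l|/\calF(|\tilde\psi_l\rangle\langle\tilde\psi_l|,\calS)$ in the dual program. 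It is feasible by the definition of the stabilizer fidelity. Because it is positive and rank one, $\Tr(A\,|\tilde\psi\rangle\langle\tilde\psi|)=|\langle\tilde\psi_l|\tilde\psi\rangle|^2/\calF\ge(1-\epsilon^\alpha)/\calF$, so the high-energy part costs only a constant factor.
\end{itemize}
With these two replacements your outline goes through.
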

The idea is that for any pure state with energy bounded by $\epsilon n$ above the ground state, the state has support mostly on low-energy eigenstates. In the anyon theory, low-energy eigenstates contain low density of anyon excitations, which means that there is an extensive number of clean (anyon-free) patches. This persists upon the application of local unitaries, as long as the circuit depth is sufficiently short compared to the typical separation of the anyons. %the circuit depth doesn't scale too badly with the density of anyons.

\begin{proof}
Let us shift the spectrum to set the ground state energy to $0$. Consider a pure state $\ket{\psi}$ with energy bounded by $\epsilon n$, and expand the state in the energy basis: $\ket{\psi}=\sum_ic_i\ket{E_i}$ where $E_i$ label the energies. The amplitudes $|c_i|^2$ define a probability distribution. The probability support on high-energy states is bounded by Markov's inequality: for any fixed $\alpha\in(0,1)$, 
\begin{equation}
\Pr[E\ge\epsilon^{1-\alpha}n]\le\frac{\bar{E}}{\epsilon^{1-\alpha} n}\le\epsilon^\alpha,
\end{equation}
where $\bar{E}=\bra{\psi}H\ket{\psi}\le\epsilon n$. We may therefore decompose
\begin{equation}\label{eq:low_energy_state_decomposition}
\ket{\psi}=c_l\ket{\psi_l}+c_h\ket{\psi_h},\qquad\text{where}\qquad \ket{\psi_l}=\sum_{E_i\le\epsilon^{1-\alpha}n}c_i'\ket{E_i},\qquad|c_h|^2\le\epsilon^\alpha.
\end{equation}

In the anyon theory, energy eigenstates can be labeled by the anyon charges and the fusion tree. Since each anyon excitation has $O(1)$ energy, an eigenstate with energy bounded by $\epsilon^{1-\alpha}n$ has at most $O(\epsilon^{1-\alpha}n)$ anyon endpoints. Let $|\tilde{E}_i\rangle$ be the energy eigenstate evolved by the depth-$d$ local unitary. After applying the local unitary, each endpoint expands to a region of diameter $O(d)$. Therefore one can still find $\Omega(nd^{-2}-n\epsilon^{1-\alpha})$ clean disks of radius $O(d)$ with no anyons and remain separated by distance $O(d)$. The situation is illustrated in Fig.~\ref{fig:LRM_anyons_and_patches}. On each such disk, the reduced density matrix agrees with the vacuum reduced state, so the argument of Sec.~\ref{subsec:LRM_nA} applies exactly as before. 
\begin{figure}
    \centering
    \includegraphics[width=0.6\linewidth]{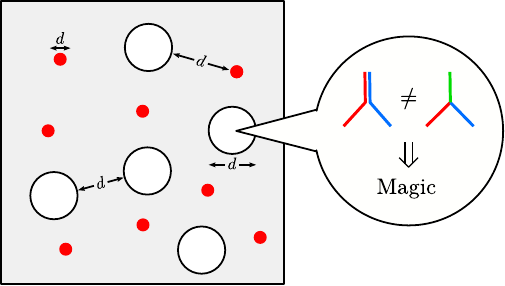}
    \caption{Patch construction for a low-energy eigenstate. An eigenstate with energy at most $\epsilon^{1-\alpha}n$ above the ground state contains $O(\epsilon^{1-\alpha}n)$ anyon endpoints. 
After a depth-$d$ local unitary, each endpoint expands to a region of diameter $O(d)$, shown in red. 
Away from these enlarged excitation regions, one can still choose $\Om(nd^{-2}-n\epsilon^{1-\alpha})$ clean patches of size $O(d)$, mutually separated by distance $O(d)$. 
On each clean patch, the reduced density matrix agrees with the ground state, so the same non-Abelian fusion-space obstruction applies.}
    \label{fig:LRM_anyons_and_patches}
\end{figure}
As a result, the eigenstate has long-range magic: 
\begin{equation}
\LR M_d(\ketbra{E_i})=\Omega(nd^{-2}-n\epsilon^{1-\alpha})\cdot\calC_{\nA}(\UFC,q,d).
\end{equation}
and in particular, the overlap between the evolved state $|\tilde{E}_i\rangle$ and any stabilizer state $\ket{\phi}$ is exponentially small:
\begin{equation}
|\langle\phi|\tilde{E}_i\rangle|^2\le 2^{-\Omega(nd^{-2}-n\epsilon^{1-\alpha})\cdot\calC_{\nA}(\UFC,q,d)}.
\end{equation}
The above bounds are meaningful for $d\le O(\epsilon^{-\frac{1-\alpha}{2}})$.

Next, we use this fact for low-energy eigenstates to lower bound the long-range magic of the low energy piece $\ket{\psi_l}$ in Eq.~\eqref{eq:low_energy_state_decomposition}. Let $|\tilde{\psi}\rangle$, $|\tilde{\psi_l}\rangle$ $|\tilde{E_i}\rangle$ and $|\tilde{\psi_h}\rangle$ be the states in Eq~\eqref{eq:low_energy_state_decomposition} acted upon by the same depth-$d$ local unitary. The number of eigenstates with energy at most $\epsilon^{1-\alpha}n$ above the ground state is at most $2^{O(n\epsilon^{1-\alpha}\log(1/\epsilon))}$. So by Cauchy--Schwarz, 
\begin{equation}\begin{aligned}
|\langle\phi|\tilde{\psi_l}\rangle|&\le\sqrt{\sum_{E_i\le\epsilon^{1-\alpha}n}|c_i'|^2}\sqrt{\sum_{E_i\le\epsilon^{1-\alpha}n}|\braket{\phi}{E_i}|^2} \\
&\le 1\times\sqrt{2^{O(n\epsilon^{1-\alpha}\log(1/\epsilon))}\times 2^{-\Om(nd^{-2}-n\epsilon^{1-\alpha})\cdot\calC_{\nA}(\UFC,q,d)}}\\
&=2^{-\Om(nd^{-2}-n\epsilon^{1-\alpha}\log(1/\epsilon))\cdot\calC_{\nA}(\UFC,q,d)}.
\end{aligned}\end{equation}
Thus $\LF(|\tilde{\psi}_l\rangle\langle\tilde{\psi}_l|,\calS)=\Om(nd^{-2}-n\epsilon^{1-\alpha}\log(1/\epsilon))\cdot\calC_{\nA}(\UFC,q,d)$.

To conclude extensive magic of the full state $|\tilde{\psi}\rangle$, we note that it is a perturbation on the low-energy piece---the strength of the perturbation scales as $\epsilon^\alpha$. We need to be careful with the choice of magic measures here, because it is not generally true that extensiveness of the magic measure is robust under perturbations. Take the log-stabilizer fidelity as an example, if we perturb the state $|\tilde{\psi_l}\rangle$ with a stabilizer state, then the stabilizer fidelity is now controlled by the perturbation strength, which is not exponentially small. To this end, we consider the log-robustness of magic, which can be formulated as a linear program:
\begin{equation}
\Lr(\rho)=\log[1+2\R(\rho)],\qquad 1+2\R(\rho)=\min\bigg\{\sum_\sigma|c_\sigma|\ \st\ \rho=\sum_{\sigma\in\calS}c_\sigma\sigma,\ c_\sigma\in\mathbb{R}\bigg\}.
\end{equation}
The dual program is
\begin{equation}
1+2\R(\rho)=\max_{A}\bigg\{|\Tr(A\rho)|\ \st\ |\Tr(A\sigma)|\le 1,\ \forall\sigma\in\calS\bigg\}.
\end{equation}
In evaluating $\R(|\tilde{\psi}\rangle\langle\tilde{\psi}|)$, let us choose the operator $A$ to be
\begin{equation}
A=\frac{|\tilde{\psi}_l\rangle\langle\tilde{\psi}_l|}{\calF(|\tilde{\psi}_l\rangle\langle\tilde{\psi}_l|,\calS)}.
\end{equation}
It is a valid choice because $|\Tr(A\sigma)|\le 1$, $\forall\sigma\in S$ by definition of the stabilizer fidelity. Plugging in the dual program, we get
\begin{equation}
\begin{aligned}
1+2\R(|\tilde{\psi}\rangle\langle\tilde{\psi}|)&
\ge \Tr(A|\tilde{\psi}\rangle\langle\tilde{\psi}|)=\frac{|\langle\tilde{\psi}_l|\tilde{\psi}\rangle|^2}{\calF(|\tilde{\psi}_l\rangle\langle\tilde{\psi}_l|,\calS)}\\
&\ge (1-\epsilon^\alpha)2^{\Om(nd^{-2}-n\epsilon^{1-\alpha}\log(1/\epsilon))\cdot\calC_{\nA}(\UFC,q,d)},
\end{aligned}
\end{equation}
\begin{equation}
\Lr(|\tilde{\psi}\rangle\langle\tilde{\psi}|)=\Omega(nd^{-2}-n\epsilon^{1-\alpha}\log(1/\epsilon))\cdot\calC_{\nA}(\UFC,q,d)
\end{equation}
for sufficiently small $\epsilon$. Since the depth-$d$ local unitary is arbitrary, this completes the proof.
\end{proof}

\subsection{Upper bounds on magic from entanglement renormalization}\label{subsec:LRM_MERA}
The previous subsections established lower bounds. We now give an intuitive renormalization-based picture, which also yields a constructive upper bound on long-range magic.

Long-range properties are naturally probed by renormalization: one coarse-grains the system to obtain an effective description at a large scale by successively applying unitary gates to disentangle local degrees of freedom. From this viewpoint, long-range magic is the magic that survives repeated coarse-graining. If coarse-graining eventually removes the magic, then it is short-ranged; if the magic persists to the infrared (IR), then it is long-ranged.

In qudit systems, such coarse-graining can be implemented by a MERA circuit~\cite{Aguado:2007oza}. The MERA maps ultraviolet (UV) degrees of freedom to IR degrees of freedom by applying local disentangling that reduces the number of active qudits at each scale. Conversely, it reconstructs the many-body state from an IR state $\ket{\psi_{\IR}}$ by a sequence of isometries. A state is a fixed point of entanglement renormalization if the same local isometries appear at every scale--this is the case for string-net ground states\cite{Koenig:2009koo}.

For topological codes, these isometries define the encoding circuit, while $\ket{\psi_{\IR}}$ is the logical state. The isometries grow a small code into a large one without changing the number of logical qudits. In the $\mathbb{Z}_q$ toric code, for example, the MERA takes the form~\cite{Aguado:2007oza} 
\begin{equation*}
\centering\includegraphics[width=17cm]{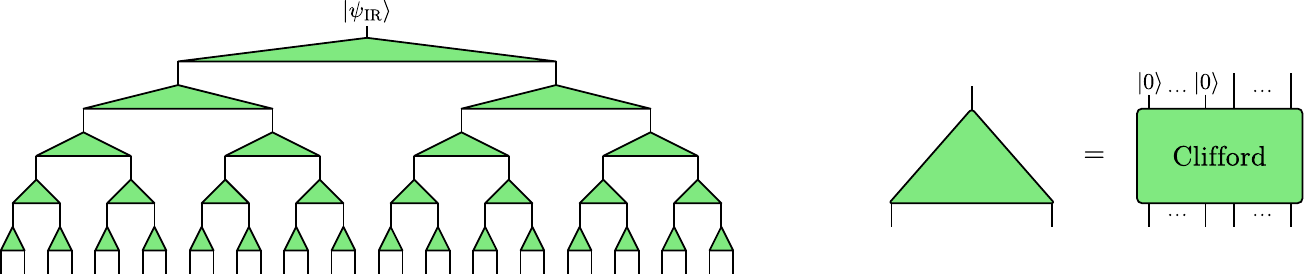}.
\end{equation*}
Each isometry can be implemented by adjoining $\ket{0}$ ancillas and applying a Clifford unitary. Hence if $\ket{\psi_{\IR}}$ is a stabilizer state, the full state has no magic, while if $\ket{\psi_{\IR}}$ is a logical $T$ state, the magic resides in the deep IR and is therefore long-ranged, as shown in~\cite{Wei:2025irp}.

For non-Abelian anyons, the situation is qualitatively different. The MERA unitaries for non-Abelian string-net models~\cite{Koenig:2009koo} are non-Clifford. For example, in quantum doubles, the unitaries involves controlled group multiplications~\cite{Aguado:2007oza}, which are non-Clifford when the group is non-Abelian. Thus non-Abelian string-net states carry magic at every scale:
\begin{equation*}
\centering\includegraphics[width=17cm]{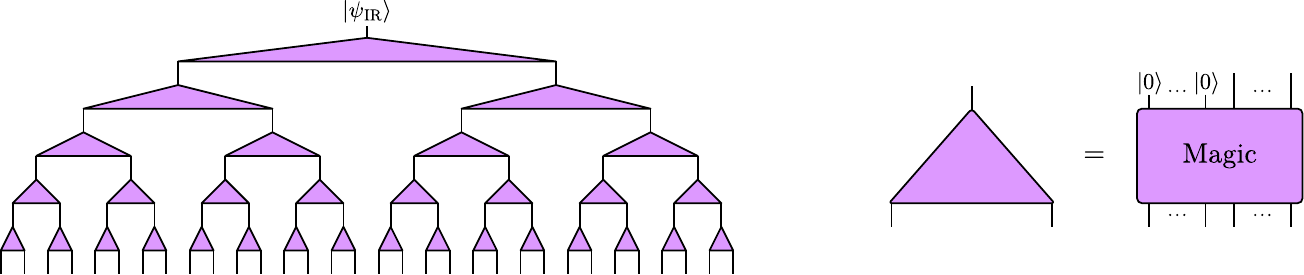}.
\end{equation*}
In particular, unlike the Abelian case, long-range magic in the non-Abelian case does not rely on the choice of state within the space of degenerate ground states.

We emphasize that magic in the IR state does not by itself imply long-range magic in the UV. For example, consider the GHZ state represented as $V_{\MERA}\ket +$, where the MERA isometries are implemented by adjoining $\ket 0$ ancillas and applying CNOTs. If we apply a $T$-gate to one of the qubits, the resulting state $\frac{1}{\sqrt{2}}(\ket{0\cdots 0}+e^{i\pi/4}\ket{1\cdots 1})$ clearly has no long-range magic. But this state can also be realized by replacing the IR state with the $T$ state\footnote{One way to see this is to observe that the MERA is the encoding isometry for a repetition code. The $T$-gate applied to one qubit in the IR state exactly produces the logical $T$-state for this code.}. 

The MERA also gives a natural upper bound on long-range magic, since it provides an explicit coarse-graining procedure that reduces the number of degrees of freedom at each step. For a fixed-point state on a $D$-dimensional lattice, suppose each renormalization step rescales lengths by a factor $a>1$. After $t$ steps, the disentangling gates act on UV degrees of freedom separated by distance $O(a^t)$, so implementing such a step by geometrically local gates requires depth $O(a^t)$. Thus a depth-$d$ local circuit corresponds to coarse-graining down to scale $a^t\sim d$. After $t$ steps, the number of remaining degrees of freedom is $O(n/a^{Dt})=O(n/d^D)$. 
\begin{equation*} 
\centering \includegraphics[width=14cm]{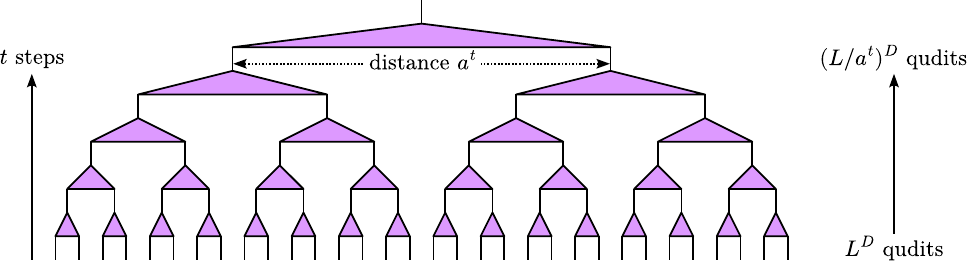} 
\end{equation*} 
Since all the magic measures we consider are at most extensive with respect to the number of degrees of freedom (Proposition~\ref{thm:LRM_magic_measures_order}), this gives an upper bound to the long-range magic at depth $d$. For a fixed-point state $\ket{\psi}$ in $D$ spatial dimensions: 
\begin{equation}
\LR M_d(\ketbra{\psi})\le O(n/d^D),
\end{equation}
where the implicit constant doesn't depend on $n$ and $d$.

\section{Discussion}

Our results suggest that long-range magic provides a useful diagnostic of intrinsic non-stabilizerness in topologically ordered phases, and in particular in non-Abelian topological quantum order. Several directions naturally follow from this work.

First, it is useful to view \emph{no low-energy trivial magic} (NLTM) as a strengthening of the NLTS paradigm, in which every sufficiently low energy-density state of a Hamiltonian necessarily has long-range magic that is parametrically large in system size, and cannot be removed by constant-depth circuits. In this work, we established such a conclusion for the non-Abelian string-net Hamiltonian under an additional restriction: the preparing circuit depth must not scale too unfavorably with the energy density. Because of this condition, our result does not yet prove NLTM in the strongest sense. This limitation is unsurprising, since geometrically-local Hamiltonians cannot satisfy NLTS~\cite{Hastings:2012xdc}, which is necessary for NLTM. This suggests broadening the search beyond conventional Euclidean lattices. Recent constructions of non-Abelian topological order in non-Euclidean geometries~\cite{Zhu:2026vec,Christos:2026yyd,McDonough:2026ngv} provide an intriguing starting point.

A second question concerns how long-range magic depends on the underlying anyon data, particularly beyond the non-chiral topological orders considered here. While our results establish extensive long-range magic in a broad class of non-Abelian phases, they do not yet determine which topological features control its magnitude. It is natural to ask whether this dependence is governed by quantities such as quantum dimensions, fusion multiplicities, or modular data, and whether different non-Abelian phases can be partially ordered by the amount of long-range magic they support. Our present approach is most directly applicable to non-chiral fixed-point phases, where the entanglement-bootstrap framework provides access to fusion spaces through local reduced density matrices, so extending this perspective more broadly remains an important open direction. Clarifying these issues could place long-range magic on a footing similar to that of more familiar universal characteristics of topological order.

%A second question is how long-range magic depends on the underlying anyon data. Our results establish the presence of extensive long-range magic in a broad class of non-Abelian phases, but they do not yet identify which topological characteristics control its magnitude. It is natural to ask whether long-range magic is governed by quantities such as quantum dimensions, fusion multiplicities, or modular data, and whether different non-Abelian phases can be partially ordered according to the amount of long-range magic they support. Clarifying this dependence could help place long-range magic on similar footing to more familiar universal data in topological order. \YZ{Another natural direction is to go beyond the non-chiral topological orders considered here. Our argument applies most directly to non-chiral fixed-point topological orders, where the entanglement-bootstrap framework gives access to fusion spaces through local reduced density matrices.}

A related issue is the operational meaning of the long-range magic detected here. In particular, it would be valuable to understand its relation to \emph{distillable} magic. Our results concern the intrinsic non-stabilizer structure of the state, as quantified by information-theoretic measures, but they do not directly address how much of this magic can be converted into a useful resource for fault-tolerant quantum computation. Establishing a connection between long-range magic and distillation protocols would sharpen the computational significance of our findings.

Another important question is what survives away from exactly solvable fixed-point wavefunctions. How generic is parametrically large long-range magic in many-body quantum states? In Appendix~\ref{sec:LRM_from_mutual_information}, we showed that long-range magic follows quite generally whenever the mutual information obeys an exponentially bounded continuous decay, more precisely under the assumptions of Theorem~\ref{thm:LRM_from_mutual_information}. The proof combines the idea, developed in Refs.~\cite{Parham:2025sxj,Korbany:2025noe}, that non-integer mutual information signals the presence of long-range magic, with the multiplicative fidelity upper bound derived in Appendix~\ref{sec:multiplicative_fidelity_bound}. This suggests that extensive long-range magic may be a rather generic feature of strongly correlated phases, not merely an artifact of fixed-point constructions.

A further open question is the robustness of long-range magic under decoherence and thermal noise. One expects some degree of robustness under weak noise, at least in topological phases where anyonic excitations remain well defined below a threshold~\cite{Fan:2023rvp}. However, the fate of long-range magic at finite temperature or under dissipative dynamics is far from clear. In particular, it is natural to ask whether long-range magic undergoes a sharp transition, and whether such a transition is related to the breakdown of quantum error correction or to known error-correctability thresholds in topological systems.

It would also be interesting to seek an analogue of long-range magic in continuous-variable systems and quantum field theories. Since Gaussian states and free theories are the natural free objects in these settings, long-distance interactions may play a similar role. This suggests a possible link to holographic duality, where semiclassical gravity in AdS is dual to strongly interacting CFTs~\cite{Maldacena:1997re,Gubser:1998bc,Witten:1998qj}. One may speculate whether the emergence of bulk gravity is tied to large long-range magic in the boundary theory~\cite{Cao:2024nrx,Cao:2026uoq}.

Finally, one may ask about the learnability of states with only short-range magic. At present there are two known settings in which efficient learning is possible, both relying on additional structure. In the first, the state is obtained by applying $T$ gates to a stabilizer state; in that case, an efficient learning algorithm exists based on a state hidden subgroup problem~\cite{Lee:2025ldt}. In the second, the state is prepared by applying a constant-depth local unitary to a stabilizer state satisfying the axioms of Ref.~\cite{Kim:2024bsn}. These examples suggest that short-range magic may be tractable when accompanied by sufficient algebraic or geometric control. By contrast, it is tempting to speculate that states with genuinely long-range magic are generically much harder to learn, providing a possible complexity-theoretic interpretation of the phenomenon.

\section*{Acknowledgments}

We thank Jeongwan Haah, Tyler D. Ellison, Nathanan Tantivasadakarn, Zi-Wen Liu, Yingfei Gu, Sarang Gopalakrishnan, Ruochen Ma and Bowen Shi for discussions. YZ and SV acknowledge support from the National Science Foundation under Grant No. DMR–2441671. IK acknowledges support from NSF under award number PHY-2337931. This research was supported in part by grant NSF PHY-2309135 to the Kavli Institute for Theoretical Physics (KITP). 
Y.B. is supported in part by the Gordon and Betty Moore Foundation Grant No. GBMF7392 to the Kavli Institute for Theoretical Physics (KITP).\\

\emph{Note added:} During completion of this work, we became aware of forthcoming work~\cite{Korbany:2026toappear}, which studies long-range non-stabilizerness of topologically encoded states on a torus geometry, focusing on certain string-net models. They introduce a method, based on the evaluation of the mutual information, to diagnose long-range non-stabilizerness. %Our work employs different methods, identifying a local obstruction to stabilizer realizability and amplifying it to extensive long-range magic. %We also show extensive long-range magic in low-energy non-Abelian states and in Abelian string-net states with unfavorable qudit dimensions.

\appendix

\section{Re-phasing stabilizer generators by Pauli conjugation}\label{sec:generators_rephased_by_paulis}
In this appendix we show that in general qudit dimensions, the stabilizer generators can be re-phased by conjugating with Pauli operators. Lemma~\ref{thm:EB_pauli_connects_extreme_points} in the main text then follows.

\begin{lemma}[Re-phasing stabilizer generators by Pauli conjugation]
Let $S$ be a stabilizer group, and choose an independent generating set
\begin{equation}
S=\langle g_1,\dots,g_k\rangle ,
\end{equation}
where $g_i$ has order $\delta_i$ modulo phases. Let $\zeta_i=e^{2\pi i/\delta_i}$. Then for any choice of integers $u_i\in\mathbb Z_{\delta_i}$, there exists a Pauli operator $P$ such that
\begin{equation}
P g_i P^\dagger=\zeta_i^{u_i}g_i,
\qquad i=1,\dots,k.
\end{equation}
\end{lemma}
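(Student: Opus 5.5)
The plan is to translate the statement into linear algebra over the symplectic module of Pauli operators modulo phases. On $N$ qudits of dimension $q$, each Pauli $P$ is, up to phase, labeled by a vector $x_P\in\mathbb Z_q^{2N}$. Conjugation satisfies $PgP^\dagger=\omega^{\langle x_P,x_g\rangle}g$, where $\langle\cdot,\cdot\rangle$ is the standard symplectic form, which is nondegenerate on $\mathbb Z_q^{2N}$. The statement then becomes the following: given $x_{g_1},\dots,x_{g_k}$, find $x\in\mathbb Z_q^{2N}$ with $\omega^{\langle x,x_{g_i}\rangle}=\zeta_i^{u_i}$ for all $i$. Since $g_i$ has order $\delta_i$ modulo phases, $\delta_i$ divides $q$ and $\zeta_i^{u_i}=\omega^{u_i q/\delta_i}$. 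So we need to solve the congruences $\langle x,x_{g_i}\rangle\equiv u_i q/\delta_i \pmod q$.

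The key step is to show that the map $\Phi:\mathbb Z_q^{2N}\to\mathbb Z_q$-valued functionals on $M:=\langle x_{g_1},\dots,x_{g_k}\rangle$, given by $x\mapsto\langle x,\cdot\rangle|_M$, is surjective onto $\mathrm{Hom}(M,\mathbb Z_q)$. Once this holds, we reduce to showing that the assignment $x_{g_i}\mapsto u_iq/\delta_i$ extends to a well-defined homomorphism $M\to\mathbb Z_q$. This is where independence is used. Independence of the generating set means $M\cong\bigoplus_i\mathbb Z_{\delta_i}$ via the $x_{g_i}$, since no nontrivial product $\prod g_i^{a_i}$ with $0\le a_i<\delta_i$ is proportional to the identity. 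On this direct sum, the map $a\mapsto\sum_i a_iu_iq/\delta_i$ is well defined modulo $q$, because shifting $a_i$ by $\delta_i$ changes the sum by a multiple of $q$.

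For surjectivity, I will use that $\mathbb Z_q$ is an injective module over itself: $\mathbb Z_q$ is self-injective, being a quotient of a PID by a nonzero ideal, equivalently a Frobenius ring. Any homomorphism $M\to\mathbb Z_q$ therefore extends to all of $\mathbb Z_q^{2N}$. Nondegeneracy of the symplectic form over $\mathbb Z_q$ identifies every functional on $\mathbb Z_q^{2N}$ with $\langle x,\cdot\rangle$ for some $x$, since the Gram matrix is invertible over $\mathbb Z_q$. Restricting that $x$ to $M$ gives the desired functional. Taking $P$ to be any Pauli with label $x$ then completes the proof, because phases of $P$ cancel in conjugation.

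I expect the main obstacle to be the non-prime case, where $M$ need not be a free submodule. Naive linear algebra, such as extending to a symplectic basis, fails there, and the injectivity of $\mathbb Z_q$ is exactly what replaces it. As a fallback I would give an explicit proof by Chinese remaindering to prime powers $p^e$, and then use a Smith normal form of the matrix whose rows are the $x_{g_i}$ over $\mathbb Z_{p^e}$. With $A=UDV$ and $D$ diagonal, the system $Ax=b$ is solvable iff each equation $D_{jj}y_j=(U^{-1}b)_j$ is solvable. I would then check that the compatibility condition reduces to the order constraints encoded by the $\delta_i$. Finally, Lemma~\ref{thm:EB_pauli_connects_extreme_points} follows by applying this to $\calG(\calL)$ together with $\calG(S_{\Om^+}(\Om))$, re-phasing the former by $u(g)-v(g)$ and the latter trivially. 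Supporting the Pauli on $\Om$ is possible because all generators are supported on $\Om$, so the symplectic form can be restricted to the qudits of $\Om$.
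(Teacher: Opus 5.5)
Your proof is correct. Its skeleton matches the paper's: both reduce the lemma to showing that every character of $S$ is the commutation phase with some Pauli. Since the label module $M$ is $q$-torsion, a character of $S$ is the same as an element of $\mathrm{Hom}(M,\mathbb Z_q)$.

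The two proofs differ in how they establish this surjectivity.
\begin{itemize}
\item The paper counts. The kernel of $P\mapsto\chi_P$ is the commutant $L$. Character orthogonality, together with the fact that every nontrivial Pauli fails to commute with some single-qudit $X_j$ or $Z_j$, gives $|L|=|\calP|/|S|$. So the image has exactly $|\widehat S|$ elements.
\item You extend the functional from $M$ to all of $\mathbb Z_q^{2N}$ using self-injectivity of $\mathbb Z_q$. This is a one-line Baer check: a homomorphism $d\mathbb Z_q\to\mathbb Z_q$ with $d\mid q$ must send $d$ to a multiple of $d$, so it extends. You then represent the extended functional through the invertible symplectic Gram matrix.
\end{itemize}
The paper's route is entirely elementary and never mentions module structure. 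Yours is shorter and shows exactly why composite $q$ causes no trouble, even though $M$ need not be free and symplectic-basis extension fails there.

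You also make explicit a step the paper's final sentence uses tacitly. The prescribed phases $x_{g_i}\mapsto u_iq/\delta_i$ define a character only because independence means $M\cong\bigoplus_i\mathbb Z_{\delta_i}$. Mere minimality of the generating set would not suffice for composite $q$. For example, with $q=4$, take $g_1=Z_1$ and $g_2=Z_1Z_2^2$. This is a minimal generating set, but no Pauli re-phases it to $(i\,g_1,g_2)$, since that would require $\chi_P(Z_2)^2=-i$.
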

\begin{proof}
Let $\calP$ denote the Pauli group modulo phases. For every Pauli operator $P\in\calP$, define $\chi_P:S\to U(1)$ by the commutation relation
\begin{equation}
PQ=\chi_P(Q)QP,\qquad Q\in S.
\end{equation}
Since commutation phases multiply under products of elements in $S$, $\chi_P$ is a character of $S$. Explicitly, if $P=\otimes_iZ^{a_i}X^{b_i}$ and $Q=\otimes_iZ^{c_i}X^{d_i}$, then $\chi_P(Q)=\omega^{\sum_i(a_id_i-b_ic_i)}$. 

Let $\widehat{S}$ be the character group of $S$. Our goal is to show that for every character in $\chi\in \widehat{S}$, there exists a Pauli $P$ such that $\chi=\chi_P$, that is, the map 
\begin{equation}
\calP\longrightarrow \widehat{S},\qquad P\mapsto \chi_P
\end{equation}
is onto. The map is not injective: let $L$ be the subgroup of $\calP$ that commutes with $S$, then for $\forall P'\in L$, $\chi_P=\chi_{PP'}$. Conversely, if $\chi_P=\chi_{P'}$, then $\chi_{P^{-1}P'}=1$ on all of $S$, so $P^{-1}P'\in L$. Thus the distinct characters obtained from Pauli operators are labeled by the quotient $\calP/L$. Therefore, to show that $P\mapsto\chi_P$ is onto, it suffices to show
\begin{equation}
|\calP/L|=|\widehat{S}|.
\end{equation}

Since $S$ is a finite Abelian group, its character group has the same size: $|\widehat{S}|=|S|$. It remains to compute $|L|$. We use the character-orthogonality identity
\begin{equation}
\frac{1}{|S|}\sum_{Q\in S}\chi_P(Q)=\begin{cases}1,& P\in L\\0,&\text{otherwise}\end{cases}
\end{equation}
Summing over $P$ gives
\begin{equation}
|L|=\frac{1}{|S|}\sum_{Q\in S}\sum_{P\in \calP}\chi_P(Q).
\end{equation}
Now evaluate the sum over $P\in\calP$. If $Q\neq I$, then there must exists Pauli's that doesn't commute with $Q$, because if $Q=\otimes_i Z^{a_i}X^{b_i}$ and $a_i\neq 0$, then $Q$ doesn't commute with $X_i$; if $b_i\neq 0$, then $Q$ doesn't commute with $Z_i$. Thus $\sum_{P\in \calP}\chi_P(Q)=|\calP|\delta_{Q,I}$, and
\begin{equation}
|L|=\frac{|\calP|}{|S|}.
\end{equation}
Consequently, $|\calP/L|=|\calP|/|L|=|\widehat{S}|$. Therefore the image of the map $P\mapsto\chi_P$ has the same size as $\widehat{S}$, and hence the map is onto. Thus for every character $\chi\in\widehat{S}$, there exists a Pauli operator $P$ such that $\chi=\chi_P$.
\end{proof}

\section{Upper bound on magic in general qudit dimensions}\label{sec:Rom_upper_bound}
In this appendix we review the upper bound on robustness of magic from Ref.~\cite{Liu:2020yso} and extend the argument to arbitrary composite qudit dimensions. The additional ingredient is a cover of the composite-dimensional Pauli group by maximal commuting Pauli subgroups, given in Lemma~\ref{thm:cover_pauli_group} below.

\begin{proposition}\label{thm:Rom_upper_bound}
Consider a system of $n$ qudits of local dimension $q$. For an arbitrary state, the log-robustness of magic is upper bounded by
\begin{equation}
\Lr(\rho)\le (n+2^{-n-1})\log q.
\end{equation}
\end{proposition}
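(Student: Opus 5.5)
The plan is to adapt the Pauli-expansion argument of Ref.~\cite{Liu:2020yso} to composite $q$, replacing the complete set of mutually unbiased stabilizer bases (which need not exist when $q$ is not prime) by a cover of the Pauli group by maximal commuting subgroups.

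First, I would expand an arbitrary $n$-qudit state in the Pauli basis,
\begin{equation}
\rho=\frac{1}{q^{n}}\sum_{P\in\calP}c_P P,\qquad c_P=\Tr(P^\dagger\rho),\qquad c_I=1,
\end{equation}
and note the Parseval identity $\sum_P|c_P|^2=q^n\Tr\rho^2\le q^n$.

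Second, I would invoke Lemma~\ref{thm:cover_pauli_group}. It should provide maximal commuting subgroups $G_1,\dots,G_K$, each of size $q^n$ modulo phases, whose union is all of $\calP$. Each $G_j$ has $q^n$ joint rank-one eigenprojectors, which are pure stabilizer states. I would then assign each non-identity Pauli to a single group containing it, and split the identity as $I=\sum_j w_j I$ with $\sum_j w_j=1$. This gives $\rho=\sum_j\rho_j$, where each $\rho_j$ is diagonal in the eigenbasis of $G_j$:
\begin{equation}
\rho_j=\sum_{k=1}^{q^n}\lambda^{(j)}_k\,\ketbra{\phi^{(j)}_k},
\end{equation}
and the coefficients $\lambda^{(j)}_k$ are a character (Fourier) transform of the assigned $c_P$.

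Third, every $\ketbra{\phi^{(j)}_k}$ lies in $\calS$, so this is a feasible point of the linear program for $1+2\R(\rho)$. It therefore suffices to bound
\begin{equation}
\sum_{j,k}\bigl|\lambda^{(j)}_k\bigr| .
\end{equation}
Within one group, Cauchy--Schwarz followed by Parseval for the characters of $G_j$ gives
\begin{equation}
\sum_k\bigl|\lambda^{(j)}_k\bigr|\le q^{n/2}\Bigl(\sum_k\bigl|\lambda^{(j)}_k\bigr|^2\Bigr)^{1/2}=\Bigl(w_j^2+\sum_{P\text{ assigned to }j}|c_P|^2\Bigr)^{1/2}.
\end{equation}
A second Cauchy--Schwarz over $j$ then bounds the total by $\sqrt{K}\bigl(\sum_j w_j^2+q^n-1\bigr)^{1/2}$, using that each non-identity Pauli is assigned exactly once. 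Choosing the identity weights uniformly, $w_j=1/K$, gives
\begin{equation}
\sum_{j,k}\bigl|\lambda^{(j)}_k\bigr|\le\bigl(K(q^n-1)+1\bigr)^{1/2}.
\end{equation}
I would then convert this to the stated form $q^{n}\,q^{2^{-n-1}}$ using elementary estimates such as $\log(1+x)\le x$.

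The main obstacle is the counting. For the bound to come out as $n\log q$ plus a correction of size $2^{-n-1}\log q$, the number of groups $K$ must be close to the prime-dimension value $q^n+1$. Non-prime $q$ allows overlaps between groups, and overlaps inflate $K$. I would therefore first try to build the cover qudit-by-qudit from a tensor-product structure and check how $K$ scales with $n$. If $K$ grows too fast, I would instead exploit the overlaps by distributing each shared $c_P$ fractionally among the groups that contain it, which should tighten the Parseval step. Pinning down the exact constant $2^{-n-1}$ in front of $\log q$ is where I expect most of the work to lie.
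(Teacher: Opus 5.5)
Your decomposition argument is sound. It is the primal counterpart of the paper's dual argument. The paper bounds a dual-feasible witness $A=\sum_P c_P P$ through $\lVert A\rVert_2^2=q^n\sum_P|c_P|^2$, using character orthogonality on each maximal stabilizer group ($\sum_{P\in S}|c_P|^2\le 1$) and a union bound over a cover of $\calP$ by $K$ such groups. That gives $1+2\R(\rho)\le\sqrt{Kq^n}$.

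Your Parseval-plus-Cauchy--Schwarz bound $1+2\R(\rho)\le\sqrt{1+K(q^n-1)}$ uses the same mechanism but is slightly sharper, because you keep the identity coefficient separate. This matters at $q=2$. There $K=2^n+1$ makes your expression exactly $2^n$, whereas $\sqrt{Kq^n}$ leaves an excess $\tfrac12\log(1+2^{-n})$, which by concavity exceeds $2^{-n-1}\log 2$. One small repair: the linear program needs real coefficients, so assign $P$ and $P^{-1}$ to the same group, or take real parts.

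The gap is the step you flag yourself. Everything hinges on a cover with $K=q^n\prod_j(1+p_j^{-n})$ groups, where $q=\prod_j p_j^{r_j}$; with this $K$ your expression does yield $(n+2^{-n-1})\log q$ for every $q$. You neither state nor construct such a cover, and for composite $q$ it is the only genuinely new ingredient.

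The routes you suggest for getting the cover would fail. Any cover by site-wise products $G^{(1)}\otimes\cdots\otimes G^{(n)}$ has $K\ge(q+1)^n$, because each such group contains at most $(q-1)^n$ of the $(q^2-1)^n$ full-support Paulis. That yields only $\Lr(\rho)\le\tfrac12\log\bigl(1+(q+1)^n(q^n-1)\bigr)\approx n\log q+\tfrac n2\log(1+1/q)$, an extensive overshoot.

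Fractional splitting does not rescue this. The $\sqrt K$ comes from Cauchy--Schwarz over the groups, and uniform splitting merely replaces $K$ Pauli-by-Pauli by $K/m_P$. For the product of single-site covers at prime $q$, $K/m_P=(q+1)^{\mathrm{wt}(P)}$, and the Haar average of $\sum_P|c_P|^2(q+1)^{\mathrm{wt}(P)}$ is $\approx(q^2+q-1)^n\gg q^{2n}$.

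The missing idea, which is the content of Lemma~\ref{thm:cover_pauli_group}, has two parts. First, tensor over the prime-power factors of $q$ rather than over sites: by the Chinese remainder theorem, a fixed single-qudit unitary maps $\calP$ to $\bigotimes_j\calP_j$ on $p_j^{r_j}$-dimensional qudits. Second, treat all $n$ qudits of each prime-power factor jointly. Identify $\mathbb Z_{p^r}^{2n}$ with $R\times R$, $R=GR(p^r,n)$, via a trace-dual pair of bases, so the symplectic form becomes $\Tr(xw-yz)$. Then take the Lagrangian graphs $E_t=\{(x,tx)\}$ for $t\in R$ and $F_s=\{(sy,y)\}$ for $s\in pR$. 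Writing $(x,y)=p^k(x_0,y_0)$ with $x_0$ or $y_0$ a unit shows that these sets cover $R\times R$. This gives $p^{nr}+p^{n(r-1)}$ groups per factor and $K=q^n\prod_j(1+p_j^{-n})$ in total.
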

\begin{proof}
Let $\calS$ denote the convex hull of pure stabilizer states. Recall that the robustness of magic $\R(\rho)$ and its logarithmic version $\LR(\rho)$ are defined as
\begin{equation}\begin{aligned}
\Lr(\rho)&=\log[1+2\R(\rho)],\; \operatorname{R}(\rho):=\min\{s\ge0:\exists\sigma,\sigma'\in\calS\ \st\ \rho=(1+s)\sigma'-s\sigma\}, 
\end{aligned}\end{equation}
Equivalently,
\begin{equation}
1+2\R(\rho)=\min\bigg\{\sum_\sigma|c_\sigma|\ \st\ \rho=\sum_{\sigma\in\calS}c_\sigma\sigma,\ c_\sigma\in\mathbb{R}\bigg\}.
\end{equation}
This is a linear program, and its dual program is
\begin{equation}
1+2\R(\rho)=\max_{A}\bigg\{|\Tr(A\rho)|\ \st\ |\Tr(A\sigma)|\le 1,\ \forall\sigma\in\calS\bigg\}.
\end{equation}

In the next step, we give an upper bound to $\lVert A\rVert_2$, which upper bounds $|\Tr(A\rho)|$. Let $\calP$ denote the Pauli group modulo phases. Expand $A$ in the Pauli basis:
\begin{equation}
A=\sum_{P\in\calP}c_PP,\qquad c_P=q^{-n}\Tr(P^\dagger A).
\end{equation}
Then $\lVert A\rVert_2^2=q^n\sum_{P\in\calP}|c_P|^2$. Let $S\subset\calP$ be the stabilizer group of a pure stabilizer state. Thus $S$ is a maximal commuting Pauli subgroup of size $|S|=q^n$. One can define a family of pure stabilizer states (corresponding to re-phasing the stabilizer generators) using all $\chi\in\widehat{S}$, where $\widehat{S}$ is the character group of $S$:
\begin{equation}
\sigma_{S,\chi}:=q^{-n}\sum_{P\in S}\chi(P)P.
\end{equation}
The condition of $|\Tr(A\sigma_{S,\chi})|\le 1$ imposes
\begin{equation}
\bigg|\sum_{P\in S}\chi^*(P)c_P\bigg|\le 1,\quad\forall\chi\in\widehat{S}.
\end{equation}
Since the character table $\big[q^{-n/2}\chi(P)\big]_{P\in S,\chi\in\widehat{S}}$ is unitary for any finite Abelian group, we have
\begin{equation}
\sum_{P\in S}|c_P|^2\le q^{-n}\sum_{\chi\in\widehat{S}}\bigg|\sum_{P\in S}\chi^*(P)c_P\bigg|^2\le q^{-n}|\widehat{S}|=1,
\end{equation}
where we used $|\widehat{S}|=|S|=q^n$. Let $q=\prod_jp_j^{r_j}$, where $p_j$ are distinct primes. By Lemma~\ref{thm:cover_pauli_group}, there exist $q^n\prod_j(1+p_j^{-n})$ maximal stabilizer groups whose union covers $\calP$. Summing over these stabilizer groups give
\begin{equation}
\sum_{P\in\calP}|c_P|^2\le q^n\prod_j(1+p_j^{-n}).
\end{equation}
And thus
\begin{equation}
|\Tr(A\rho)|\le\lVert A\rVert_\infty\le\lVert A\rVert_2\le q^n\prod_j\sqrt{1+p_j^{-n}}.
\end{equation}
Taking the logarithm,
\begin{equation}
\Lr(\rho)\le n\log q+\frac{1}{2}\sum_j\log(1+p_j^{-n})\le n\log q+\frac{1}{2}\sum_j 2^{-n}\le (n+2^{-n-1})\log q.
\end{equation}
\end{proof}

We now prove the Pauli covering lemma used above. This generalizes the standard construction for prime-dimensional qudits~\cite{Bandyopadhyay:2001zqr} to arbitrary composite qudit dimension.

\begin{lemma}\label{thm:cover_pauli_group}
In an $n$-qudit system, let $q=\prod_jp_j^{r_j}$ be the qudit dimension, where $p_j$ are distinct primes. There exist a collection of $q^n\prod_j(1+p_j^{-n})$ maximal stabilizer groups, each corresponding to a pure stabilizer state, whose union covers $\calP$.
\end{lemma}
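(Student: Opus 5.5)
Since the Pauli group modulo phases factorizes over the prime-power components of $\mathbb{Z}_q$, my plan is to reduce to the prime-power case by the Chinese remainder theorem. Writing $\mathbb{Z}_q\cong\prod_j\mathbb{Z}_{p_j^{r_j}}$, the $n$-qudit Pauli group modulo phases is $\mathcal{P}\cong\mathbb{Z}_q^{2n}\cong\prod_j\mathbb{Z}_{p_j^{r_j}}^{2n}$. The symplectic commutation form $\omega^{\sum_i(a_id_i-b_ic_i)}$ splits as a product of the forms on the components; equivalently, each qudit factorizes as $\mathbb{C}^q\cong\bigotimes_j\mathbb{C}^{p_j^{r_j}}$, with $X,Z$ becoming tensor products of powers of the component clock and shift operators. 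Two Paulis commute iff each of their components commutes. Hence if $\{S^{(j)}_\alpha\}$ covers the component group $\mathcal{P}_j\cong\mathbb{Z}_{p_j^{r_j}}^{2n}$ by maximal isotropic (Lagrangian) subgroups of size $p_j^{r_jn}$, then the products $\prod_j S^{(j)}_{\alpha_j}$ are maximal commuting subgroups of size $q^n$ of $\mathcal{P}$. These stabilize pure stabilizer states, and they cover $\mathcal{P}$: any $P=(P_j)_j$ lies in some $S^{(j)}_{\alpha_j}$ in every component. The number of groups is the product of the component counts. It therefore suffices to cover $\mathbb{Z}_{p^r}^{2n}$ with $p^{rn}(1+p^{-n})=p^{(r-1)n}(p^n+1)$ Lagrangian subgroups.

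For $r=1$ this is the standard mutually-unbiased-bases construction~\cite{Bandyopadhyay:2001zqr}. Identify $\mathbb{Z}_p^n$ with $\mathbb{F}_{p^n}$ and use the trace form to write the symplectic space as $\mathbb{F}_{p^n}^2$. The $p^n+1$ lines $\{(x,\lambda x)\}$ for $\lambda\in\mathbb{F}_{p^n}$, together with $\{(0,x)\}$, are Lagrangian. In the Galois-ring picture, the graph of $\lambda$ is isotropic because $\lambda$ induces a symmetric form under the trace pairing. Every nonzero vector lies on exactly one of these lines, and all of them contain $0$, so they cover $\mathbb{Z}_p^{2n}$.

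For general $r$, I would lift to the Galois ring $R=GR(p^r,n)$, whose additive group is $\mathbb{Z}_{p^r}^n$, with the trace form $\mathrm{Tr}_{R/\mathbb{Z}_{p^r}}$, which is nondegenerate. The candidate Lagrangians are the graphs $\{(x,\lambda x):x\in R\}$ for $\lambda\in R$, numbering $p^{rn}$, together with the graphs $\{(\mu y,y):y\in R\}$ for $\mu\in pR$, numbering $p^{(r-1)n}$. The total is $p^{(r-1)n}(p^n+1)$, as needed. Each of these subgroups has size $p^{rn}$ and is isotropic because multiplication by a ring element is self-adjoint for the trace form, so each is Lagrangian.

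The main obstacle is covering. Take $(x,y)\in R^2$. If $x$ is a unit, then $\lambda=yx^{-1}$ works. Otherwise the natural case split is by $p$-adic valuation. If $v(x)\ge v(y)$ with $y=p^{k}y_0$ and $y_0$ a unit, I would write $x=p^ky_0\mu'$ and try $\mu=\mu'$, taking the vector $(\mu(p^ky_0),p^ky_0)$; this requires $\mu\in pR$, which holds when $v(x)>v(y)$. The remaining case $v(x)\le v(y)$ is handled by $\lambda=(y/p^{v(x)})(x/p^{v(x)})^{-1}$ applied to the vector $x$, with $y=\lambda x$ holding after the division. Both divisions are well defined up to elements killed by $p^{v}$, so the equation $y=\lambda x$ can be satisfied exactly. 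I expect the valuation bookkeeping, together with the requirement that the $\mu$-family is restricted to $pR$, to be the delicate part. If a gap appears, the fallback is to pad with extra Lagrangians, since the count in Lemma~\ref{thm:cover_pauli_group} only needs to serve as an upper bound in Proposition~\ref{thm:Rom_upper_bound}, and that bound is unaffected to leading order.
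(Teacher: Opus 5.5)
Your proposal is correct and matches the paper's proof essentially step for step. It uses the same CRT reduction to prime-power qudits, the same Galois ring $R=GR(p^r,n)$ with the families $\{(x,\lambda x)\}_{\lambda\in R}$ and $\{(\mu y,y)\}_{\mu\in pR}$, and a covering split ($v(x)\le v(y)$ versus $v(x)>v(y)$) that is exactly the paper's case distinction on whether $x_0$ is a unit after factoring out the largest common power $p^k$; so the padding fallback is never needed, and it would in any case change the count stated in the lemma. The one step the paper makes more explicit is identifying the standard symplectic form with $\mathrm{Tr}(xw-yz)$: it encodes the $Z$-exponents in a basis $\{e_i\}$ of $R$ and the $X$-exponents in the trace-dual basis $\{e_i^*\}$, which is what your appeal to nondegeneracy of the trace form implicitly supplies.
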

\begin{proof}
In composite qudit dimensions, there is a canonical local basis transformation that maps the Pauli operators to a tensor product of Pauli operators on prime-power qudits~\cite{Looi:2011jrm} based on the Chinese remainder theorem. Specifically, there is a unitary $\calU$ acting on a single $q$-dimensional qudit such that $\calU^{\otimes n}\calP\calU^{\dagger\otimes n}=\otimes_j\calP_j$, where $\calP_j$ is the Pauli group of qudits of dimension $p_j^{r_j}$. A stabilizer group $S$ gets mapped to a tensor product of stabilizer groups over these lower-dimensional qudits. Thus it suffices to construct, for each prime-power dimension $p_j^{r_j}$, a cover of $\calP_j$ by $p_j^{nr_j}+p_j^{n(r_j-1)}$ maximal stabilizer groups, and then take the tensor product of such groups.

We therefore fix a prime power $p^r$ and work on $n$ qudits of dimension $p^r$. Pauli operators are labeled by 
\begin{equation}
P=Z^{\bolda}X^{\boldb}:=\prod_{i=1}^n Z^{a_i}X^{b_i},\qquad (\bolda, \boldb)\in\mathbb{Z}_{p^r}^{2n}.
\end{equation}
The commutation relation between two Pauli's $P=Z^{\bolda}X^{\boldb}$, $Q=Z^{\boldc}X^{\boldd}$ is given by $PQ=\omega^{\bolda\cdot\boldd-\boldb\cdot\boldc}QP$, where $\bolda\cdot\boldd-\boldb\cdot\boldc$ is the symplectic inner product between $(\bolda,\boldb)$ and $(\boldc,\boldd)$. Thus any additive subgroup of $\mathbb Z_{p^r}^{2n}$ of size $p^{nr}$
with vanishing pairwise symplectic form defines a maximal commuting
Pauli subgroup. Now the problem reduces to construct a collection of such sets that covers the entire $\mathbb{Z}_{p^r}^{2n}$.

To this end, we map $\mathbb{Z}_{p^r}^{n}$ to the Galois ring $R:=GR(p^r,n)$. Take a basis $\{e_1,\cdots,e_n\}$ of $R$ and take the dual basis $\{e_1^*,\cdots,e_n^*\}$ satisfying $\Tr(e_ie_j^*)=\delta_{ij}$, where $\Tr$ is the ring trace~\cite{sison2014basesgaloisringgrprm,wan2003lectures} that maps $R$ to $Z_{p^r}$. Such a dual basis can be explicitly found in~\cite{sison2014basesgaloisringgrprm}. Define the maps
\begin{equation}
f,g:\mathbb{Z}_{p^r}^{n}\rightarrow R,\qquad f(\bolda)=\sum_ia_ie_i,\quad g(\boldb)=\sum_ib_ie_i^*.
\end{equation}
By linearity of the trace:
\begin{equation}
\bolda\cdot\boldb=\Tr[f(\bolda)g(\boldb)].
\end{equation}
Thus, a maximal set in $\mathbb{Z}_{p^r}^{2n}$ with vanishing mutual symplectic inner products corresponds to a set of $(x,y)\in R\times R$ with $p^{nr}$ elements, such that any two elements $(x,y)$, $(z,w)$ satisfy $\Tr(xw-yz)=0$. We need to construct a collection of these sets whose union cover $R\times R$. The construction is as follows. Consider the sets
\begin{equation}\begin{aligned}
E_t&:=\{(x,tx):x\in R\}\quad\text{for}\quad t\in R,\\
F_s&:=\{(sy,y):y\in R\}\quad\text{for}\quad s\in pR.
\end{aligned}\end{equation}
parametrized by $t\in R$ and $s\in pR$. These sets are additive, and each set has size $|R|=p^{nr}$ as desired. Moreover, if $(x,tx),(z,tz)\in E_t$, then
\begin{equation}
\Tr(x(tz)-(tx)z)=0,
\end{equation}
and similarly for $F_s$.

There are $|R|+|pR|=p^{nr}+p^{n(r-1)}$ sets in total. We claim that they cover $R\times R$. For any $(x,y)$, take the largest $k$ such that $x,y\in p^kR$. Thus writing $(x,y)=p^k(x_0,y_0)$, either $x_0$ or $y_0$ is not in $pR$. Not being in $pR$ means that the element is a unit---it has an inverse. If $x_0\notin pR$, then one can take $t=y_0x_0^{-1}$ and thus $(x,y)\in E_t$. Instead if $x_0\in pR$ and $y_0\notin pR$, then one can take $s=x_0y_0^{-1}$ and find $(x,y)\in F_s$. Mapping these sets back to Pauli labels gives a cover of the Pauli group by $p^{nr}+p^{n(r-1)}$ maximal stabilizer groups in dimension $p^r$. Tensoring the covers over the distinct prime-power factors of $q$ proves the lemma.

\end{proof}

\section{Multiplicative fidelity upper bound with stabilizer projection states}\label{sec:multiplicative_fidelity_bound}

In this appendix we show that the fidelity between a product state and a stabilizer projection state is exponentially small in the number of components, provided that the distance between each component and stabilizer projection states is uniformly lower bounded. This result is tailored to the pure state argument in the main text: after tracing a pure stabilizer state to disjoint patches, one obtains a stabilizer projection state, which may still be entangled across the patches. The bound below controls precisely this situation.

A complementary question is whether local magic on many disjoint subsystems implies extensive magic for a general, not necessarily product, many-body state. We address this more general formulation in Appendix~\ref{sec:extensive_magic_from_magical_subsystems}.

We make use of the notion of distinguishability under restricted measurements~\cite{Matthews:2008tci}. In particular, we consider stabilizer measurements ($\SM$), POVMs where every element is proportional to some stabilizer projection state. For example, Pauli measurements are stabilizer measurements, because the spectral projectors of a Pauli operator defines stabilizer projection states (Definition~\ref{def:stabilizer_projection_states}).

We define the $\SM$-fidelity by
\begin{equation}
F_{\SM}(\rho,\sigma)=\min_{\calM\in\SM}F\big(\calM(\rho),\calM(\sigma)\big),\qquad\calM(\cdot):=\sum_x\Tr[M_x(\cdot)]\ketbra{x},
\end{equation}
Equivalently, the $\SM$-fidelity is the smallest classical fidelity between the outcome distributions of $\rho$ and $\sigma$ that can be achieved using stabilizer measurements. Thus $F_{\SM}(\rho,\sigma)$ measures how well $\rho$ and $\sigma$ can be distinguished by stabilizer measurements. Since restricting the allowed measurements only makes two states harder to distinguish, the $\SM$-fidelity gives an upper bound to fidelity:
\begin{equation}
F(\rho,\sigma)\le F_{\SM}(\rho,\sigma),
\end{equation}
which follows from the monotonicity of fidelity under quantum channels.

For our purposes, we need the corresponding notions of closeness to the set $\SP$ of stabilizer projection states. We define
\begin{equation}
F(\rho,\SP):=\max_{\tau\in\SP}F(\rho,\tau),
\end{equation}
and similarly
\begin{equation}
F_{\SM}(\rho,\SP):=\max_{\tau\in\SP}F_{\SM}(\rho,\tau).
\end{equation}
The monotonicity inequality above immediately implies
\begin{equation}
F(\rho,\SP)\le F_{\SM}(\rho,\SP).
\end{equation}

The key ingredient for showing extensive magic in the main text is to give a multiplicative fidelity upper bound on the fidelity between a product state and a stabilizer projection state. The advantage of $F_{\SM}$ is that it interacts well with stabilizer measurements and allows a multiplicative upper bound even when the stabilizer projection state used for comparison is entangled across different subsystems. We start with the case of two subsystems and the product state $\rho_1\otimes\rho_2$. Its distance from stabilizer projection states $\SP_{12}$ can be lower bounded by the following Lemma based on~\cite{Lancien_2017}.
\begin{lemma}\label{thm:fidelity_two_party}
\begin{equation}
F(\rho_1\otimes\rho_2,\SP_{12})\le F_{\SM}(\rho_1,\SP_1)F(\rho_2,\SP_2)
\end{equation}    
\end{lemma}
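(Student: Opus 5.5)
The plan is to mimic the Lancien--Winter style argument for restricted-measurement fidelities. Fix an arbitrary stabilizer projection state $\tau_{12}\in\SP_{12}$; it suffices to show $F(\rho_1\otimes\rho_2,\tau_{12})\le F_{\SM}(\rho_1,\SP_1)\,F(\rho_2,\SP_2)$. The idea is to first reduce the first subsystem to classical data by a stabilizer measurement. Then use the structure of stabilizer projection states to show that, conditioned on each outcome, the post-measurement state of $\tau_{12}$ on subsystem 2 is again a stabilizer projection state. This lets the two factors separate.

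Concretely, choose a stabilizer measurement $\calM=\{M_x\}$ on subsystem 1. By monotonicity of fidelity under the channel $\calM\otimes\calI_2$,
\begin{equation}
F(\rho_1\otimes\rho_2,\tau_{12})\le F\Big(\sum_x p_x\ketbra{x}\otimes\rho_2,\ \sum_x q_x\ketbra{x}\otimes\tau_2^{(x)}\Big),
\end{equation}
where $p_x=\Tr(M_x\rho_1)$, $q_x=\Tr[(M_x\otimes I)\tau_{12}]$, and $\tau_2^{(x)}=q_x^{-1}\Tr_1[(M_x\otimes I)\tau_{12}]$.

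For classical-quantum states the fidelity decomposes as $\sum_x\sqrt{p_xq_x}\,F(\rho_2,\tau_2^{(x)})$. The key structural claim is that each $\tau_2^{(x)}\in\SP_2$. Write $M_x\propto$ a stabilizer projection state, i.e.\ a product of Pauli eigenprojectors $\prod P(h)$ on subsystem 1. Then $(M_x\otimes I)\tau_{12}$ is, up to normalization, a stabilizer projector on $12$: the group generated by the stabilizers of $\tau_{12}$ and of $M_x$, restricted to the consistent sector. Its partial trace over 1 is the projection state of the subgroup supported on 2. I would verify this via the Pauli expansion $\tau_{12}\propto\sum_{s\in S}s$: products of commuting-group projectors, after trace, keep exactly the group elements acting trivially on 1.

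Given this, $F(\rho_2,\tau_2^{(x)})\le F(\rho_2,\SP_2)$, so
\begin{equation}
F(\rho_1\otimes\rho_2,\tau_{12})\le F(\rho_2,\SP_2)\sum_x\sqrt{p_xq_x}.
\end{equation}
The sum is the classical fidelity $F(\calM(\rho_1),\calM(\tau_1))$ with $\tau_1=\Tr_2\tau_{12}\in\SP_1$. Minimizing over $\calM\in\SM$ gives $F_{\SM}(\rho_1,\tau_1)\le F_{\SM}(\rho_1,\SP_1)$.

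The main obstacle is the structural claim that post-measurement conditional states stay in $\SP_2$. Subtleties arise when $M_x$ does not commute with $\tau_{12}$'s stabilizers, in which case $(M_x\otimes I)\tau_{12}$ is not Hermitian. I would handle this by symmetrizing to $\Tr_1[\sqrt{M_x}\tau_{12}\sqrt{M_x}]$, which has the same partial trace. I would then check, using Pauli orthogonality under the trace, that non-commuting terms cancel and the result is proportional to $\prod P$ over a subgroup, for general composite $q$. Orders of generators need care here, handled as in Definition~\ref{def:stabilizer_projection_states}.
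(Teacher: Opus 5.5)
Your proof is correct and is essentially the paper's argument: apply $\calM\otimes\calI$ with $\calM\in\SM$, split the classical--quantum fidelity as $\sum_x\sqrt{p_xq_x}\,F(\rho_2,\tau_2^{(x)})$, and use that each conditional state $\tau_2^{(x)}$ with $q_x>0$ lies in $\SP_2$. The paper asserts this closure property without proof, and your Pauli-expansion check supplies it, using $\Tr_1[(M_x\otimes I)\tau_{12}]=\Tr_1[(\sqrt{M_x}\otimes I)\tau_{12}(\sqrt{M_x}\otimes I)]$ by cyclicity in the traced factor. Your ordering is also the right one: you fix $\tau_{12}$, minimize over $\calM$ to get $F_{\SM}(\rho_1,\tau_1)$ with $\tau_1\in\SP_1$, and only then maximize; the paper's final line, read literally, takes $\max_{\sigma_1\in\SP_1}$ before minimizing over $\calM$, which would give only the weaker min--max quantity rather than $F_{\SM}(\rho_1,\SP_1)$.
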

\begin{proof}
Choose $\sigma_{12}\in\SP_{12}$ such that $F(\rho_1\otimes\rho_2,\SP_{12})=F(\rho_1\otimes\rho_2,\sigma_{12})$. Denote by $\sigma_1=\Tr_2(\sigma_{12})$ and $\sigma_2=\Tr_1(\sigma_{12})$ the reduced density matrices on the respective subsystems. Perform an arbitrary stabilizer measurement on the first system, represented by a channel $\calM\otimes\calI$. For each measurement outcome $x$, define $\sigma_{2,x}=\frac{\Tr_1(M_x\sigma_{12})}{\Tr(M_x\sigma_1)}$, which can be understood as the post-measurement state on the second subsystem conditioned on $x$. The output of the channel can be written as:
\begin{equation}
\begin{aligned}
\calM\otimes\calI(\rho_1\otimes\rho_2)&=\sum_x\Tr(M_x\rho_1)\ketbra{x}\otimes \rho_2,\quad\calM\otimes\calI(\sigma_{12})=\sum_x\Tr(M_x\sigma_1)\ketbra{x}\otimes \sigma_{2,x}.
\end{aligned}
\end{equation}
Since fidelity is monotonic under channels, we have
\begin{equation}
\begin{aligned}
F(\rho_1\otimes\rho_2,\sigma_{12})&\le F\big(\calM\otimes\calI(\rho_1\otimes\rho_2),\calM\otimes\calI(\sigma_{12})\big) \\
&=\sum_x\sqrt{\Tr(M_x\rho_1)\Tr(M_x\sigma_1)}F(\rho_2,\sigma_{2,x}) \\
&\le F\big(\calM(\rho_1),\calM(\sigma_1)\big)F(\rho_2,\SP_2)\le\max_{\sigma_1\in\SP_1}F\big(\calM(\rho_1),\calM(\sigma_1)\big)F(\rho_2,\SP_2)
\end{aligned}
\end{equation}
The third line uses $\sigma_{2,x}\in\SP_2$ is still a stabilizer projection state, as $\sigma_{12}\in\SP_{12}$ and we are just doing stabilizer measurements. Minimizing over $\calM\in\SM$ gives the desired inequality.
\end{proof}

\begin{proposition}[Multiplicative fidelity upper bound for product states]\label{thm:multiplicative_fidelity_bound}
Take a product state over multiple subsystems $\otimes_i\rho_i$. Let $\SP_i$ denote the stabilizer projection states on subsystem $i$, then
\begin{equation}
F(\otimes_i\rho_i,\SP)\le\prod_iF_{\SM}(\rho_i,\SP_i)
\end{equation}
\end{proposition}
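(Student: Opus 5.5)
We argue by induction on the number of subsystems $m$, with Lemma~\ref{thm:fidelity_two_party} as the inductive step. For $m=1$ the claim reads $F(\rho_1,\SP_1)\le F_{\SM}(\rho_1,\SP_1)$. This follows from the monotonicity inequality $F(\rho,\tau)\le F_{\SM}(\rho,\tau)$ for each $\tau\in\SP_1$, after maximizing over $\tau$.

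For the inductive step we group the first $m-1$ factors into one subsystem. We apply Lemma~\ref{thm:fidelity_two_party} with the roles of the two parties arranged so that the single factor $\rho_m$ is the measured party and the block $\rho_1\otimes\cdots\otimes\rho_{m-1}$ is the unmeasured one:
\begin{equation}
F\Big(\bigotimes_{i=1}^m\rho_i,\SP\Big)\le F_{\SM}(\rho_m,\SP_m)\,F\Big(\bigotimes_{i=1}^{m-1}\rho_i,\SP_{1\cdots m-1}\Big).
\end{equation}
The lemma is symmetric under relabeling the two parties, so this is legitimate. The inductive hypothesis then bounds the second factor by $\prod_{i<m}F_{\SM}(\rho_i,\SP_i)$, which closes the induction.

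The point to check carefully is that Lemma~\ref{thm:fidelity_two_party} applies when one party is itself a composite of many qudits. Its proof uses two facts. The first is that the reductions of a stabilizer projection state $\sigma_{12}$ are stabilizer projection states. The second is that conditioning on the outcome of a stabilizer measurement on one party leaves a stabilizer projection state $\sigma_{2,x}$ on the other. Both statements concern a bipartition of a multi-qudit Pauli group and make no use of the parties being elementary. Tracing out a subset of qudits keeps exactly the stabilizers supported on the remaining qudits. A stabilizer measurement element is proportional to a stabilizer projector, and acting with it on $\sigma_{12}$ yields a projector onto the joint code of two commuting-compatible Pauli groups, or zero; partial trace of that projector is again a stabilizer projection state.

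I expect this second fact to be the only real obstacle. One has to verify that $\Tr_1(M_x\sigma_{12})$ is proportional to a stabilizer projector even when $M_x$ and $\sigma_{12}$ have stabilizers that fail to commute. My first approach is to handle this in the Pauli expansion: $\Tr_1(M_x\sigma_{12})$ is a sum over products $st$, with $s$ a stabilizer of $M_x$ and $t$ a stabilizer of $\sigma_{12}$, keeping only those products whose restriction to party 1 is the identity. Non-commuting pairs cancel through character sums, and the surviving terms form a group-averaged projector. Once that is established, the rest is bookkeeping.
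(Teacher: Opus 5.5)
Your proof is correct and matches the paper's, which is the one-line instruction to iterate Lemma~\ref{thm:fidelity_two_party} and bound the last remaining factor using $F(\rho,\SP)\le F_{\SM}(\rho,\SP)$. Your induction spells this out: each step measures a single patch and leaves the composite block unmeasured. Your extra check, that conditioned states $\sigma_{2,x}$ stay in $\SP$ when the unmeasured party is a multi-patch block, fills in a step the paper only asserts inside the lemma's proof. Your character-sum argument for it is the right one: sandwiching by the measurement projector removes the stabilizers that fail to commute with it.
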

\begin{proof}
Iterate lemma~\ref{thm:fidelity_two_party} and note $F(\rho,\SP)\le F_{\SM}(\rho,\SP)$.
\end{proof}

Now it remains to control $F_{\SM}(\rho,\SP)$ in terms of the trace distance to $\SP$. For this purpose we introduce the stabilizer measurement version of trace disntance:
\begin{equation}
\|\rho-\sigma\|_{\SM}:=\max_{\calM\in\SM}\lVert\calM(\rho)-\calM(\sigma)\rVert_1.
\end{equation} 
To control $\|\rho-\sigma\|_{\SM}$ using $\|\rho-\sigma\|_1$ is really asking how good stabilizer measurements can distinguish between two distinct states. Since Pauli measurements are stabilizer measurements that are informationally complete, we can construct a Pauli measurement to successfully distinguish between arbitrary distinct states, at a cost of a dimension factor.

\begin{lemma}[Distinguishability of stabilizer measurements]\label{thm:SM_distinguishability}
Let $\mathrm{D}$ be the Hilbert space dimension.
\begin{equation}
\norm{\rho-\sigma}_{\SM}\ge\frac{1}{\mathrm{D}}\onenorm{\rho-\sigma}.
\end{equation}
\end{lemma}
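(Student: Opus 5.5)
The plan is to use the full Pauli measurement: an informationally complete POVM built from all the Pauli eigenbases. Concretely, for a Hilbert space of dimension $\mathrm{D}$ of $n$ qudits, I would take the measurement that picks a Pauli operator $P$ uniformly from the $\mathrm{D}^2$ elements of $\calP$ and measures it projectively. The resulting POVM has elements $\mathrm{D}^{-2}\Pi_{P,\lambda}$, where $\Pi_{P,\lambda}$ is the spectral projector of $P$ onto eigenvalue $\lambda$. Each element is proportional to a stabilizer projection state in the sense of Definition~\ref{def:stabilizer_projection_states}, so $\calM\in\SM$.

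Write $\Delta=\rho-\sigma$, which is traceless and Hermitian. The outcome probability difference is $\mathrm{D}^{-2}\Tr(\Pi_{P,\lambda}\Delta)$. I would use $\Pi_{P,\lambda}=\delta^{-1}\sum_v \lambda^{-v}P^v$, where $\delta$ is the order of $P$. By Fourier orthogonality over $\lambda$,
\begin{equation}
\sum_\lambda|\Tr(\Pi_{P,\lambda}\Delta)|^2=\frac{1}{\delta}\sum_{v}|\Tr(P^v\Delta)|^2 .
\end{equation}
Summing over all $P\in\calP$ counts each Pauli at least once through $v=1$. The orbit structure is controlled by the fact that each nontrivial power class is hit a bounded number of times, weighted by $1/\delta$. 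This gives a lower bound of the form $\sum_{P,\lambda}|\Tr(\Pi_{P,\lambda}\Delta)|^2\ge c\sum_{P}|\Tr(P\Delta)|^2=c\,\mathrm{D}\|\Delta\|_2^2$.

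The step I expect to be delicate is the power-counting in composite dimension. If the $1/\delta$ weighting makes it awkward, I would instead use the fact that every term on the left is nonnegative. For prime or general $q$, a cleaner route is to group Paulis into the maximal commuting cover of Lemma~\ref{thm:cover_pauli_group} and measure in the common eigenbases, again stabilizer measurements. For each group $S$, Parseval over the characters of $S$ gives $\sum_\chi|\Tr(\sigma_{S,\chi}\Delta)|^2=\mathrm{D}^{-1}\sum_{P\in S}|\Tr(P\Delta)|^2$. Choosing each group with probability $1/N$, where $N=q^n\prod_j(1+p_j^{-n})\le 2\mathrm{D}$, the sum over groups covers every Pauli at least once. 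This yields
\begin{equation}
\sum_{\text{outcomes}}|p_\rho-p_\sigma|^2\ge\frac{1}{N^2}\|\Delta\|_2^2 .
\end{equation}

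Finally, I would convert between norms. On the measurement side, $\|\calM(\rho)-\calM(\sigma)\|_1\ge\|\calM(\rho)-\calM(\sigma)\|_2$. On the quantum side, $\|\Delta\|_2\ge\mathrm{D}^{-1/2}\|\Delta\|_1$. Combining these with the $1/N$ factor gives a bound of order $\mathrm{D}^{-3/2}\onenorm{\rho-\sigma}$, which is weaker than the claimed $\mathrm{D}^{-1}\onenorm{\rho-\sigma}$.

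To recover the sharp factor $1/\mathrm{D}$, I would not use the $\ell_2$ route. Instead I would write $\Delta=\mathrm{D}^{-1}\sum_P \Tr(P^\dagger\Delta)P$ and use the dual characterization $\onenorm{\Delta}=\max_{\|O\|\le1}\Tr(O\Delta)$. Each expectation $\Tr(P\Delta)$ is bounded in modulus by the outcome $\ell_1$ difference of measuring $P$, and $P$'s eigenvalues are phases. This gives $\onenorm{\Delta}\le\mathrm{D}^{-1}\sum_P|\Tr(P\Delta)|\cdot\|P\|_1/\mathrm{D}\cdots$. Here I would choose the single best Pauli, or the single best commuting group, rather than the average, and absorb the $\mathrm{D}^2$ count of Paulis against the normalization. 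Making this constant bookkeeping land exactly on $1/\mathrm{D}$ is the main obstacle. If it fails, a $\poly(\mathrm{D})^{-1}$ factor still suffices for the downstream application, since $\mathrm{D}$ is a constant depending only on $q$ and $d$.
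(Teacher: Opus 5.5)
Your proposal has a genuine gap: it does not prove the lemma as stated. Your best completed bound is about $\tfrac12\mathrm{D}^{-3/2}\|\rho-\sigma\|_1$ from the cover route (the uniform-Pauli route is worse). You then explicitly leave the $1/\mathrm{D}$ constant open and fall back on a $\poly(\mathrm{D})^{-1}$ factor. That fallback would suffice for Proposition~\ref{thm:exp_small_fidelity_bound} up to constants, but it is not the statement.

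Your suggested fix also cannot work. If you abandon the $\ell_2$ route and apply the triangle inequality to $\Delta=\mathrm{D}^{-1}\sum_P\Tr(P^\dagger\Delta)P$ with $\|P\|_1=\mathrm{D}$, you only get $\|\Delta\|_1\le\sum_{P\neq I}|\Tr(P\Delta)|\le(\mathrm{D}^2-1)\max_P|\Tr(P\Delta)|$. That is a factor of order $\mathrm{D}^{-2}$, not $\mathrm{D}^{-1}$.

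The missing idea is that $\|\cdot\|_{\SM}$ is a maximum over stabilizer measurements. So you should neither mix measurements nor pass through the $\ell_2$ norm of the classical outcome vector. The argument then runs as follows:
\begin{itemize}
\item Parseval gives $\sum_P|\Tr(P^\dagger\Delta)|^2=\mathrm{D}\|\Delta\|_2^2$, summed over the $\mathrm{D}^2$ Paulis.
\item By pigeonhole, a single $P$ has $|\Tr(P\Delta)|\ge\mathrm{D}^{-1/2}\|\Delta\|_2\ge\mathrm{D}^{-1}\|\Delta\|_1$. The second inequality is $\|\Delta\|_1\le\sqrt{\mathrm{D}}\|\Delta\|_2$, and $|\Tr(P\Delta)|=|\Tr(P^\dagger\Delta)|$ since $\Delta$ is Hermitian.
\item Measure that one $P$ projectively, writing $P=\sum_x\lambda_xM_x$ with $|\lambda_x|=1$.
\item Then $\sum_x|\Tr(M_x\Delta)|\ge\bigl|\sum_x\lambda_x\Tr(M_x\Delta)\bigr|=|\Tr(P\Delta)|$, which is exactly the $1/\mathrm{D}$ bound.
\end{itemize}
So the $\ell_2$ route you discarded is the right one. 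Your losses came from two places: the mixing weights ($\mathrm{D}^{-2}$ per Pauli, or $1/N$ per commuting group), and the step $\|\cdot\|_1\ge\|\cdot\|_2$ on the outcome distribution. The phase-weighted triangle inequality on a single Pauli measurement avoids both losses.
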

\begin{proof}
Let $\Delta:=\rho-\sigma$. Expand it under the basis of Pauli operators:
\begin{equation}
\Delta=\frac{1}{\mathrm{D}}\sum_{P}\alpha_P P,\quad\text{where}\quad\alpha_P=\Tr(P^\dagger\Delta).
\end{equation}
The Pauli operators satisfy $\Tr(P^\dagger Q)=\mathrm{D}\delta_{P,Q}$, so
\begin{equation}
\lVert\Delta\rVert_2^2=\frac{1}{\mathrm{D}}\sum_P|\alpha_P|^2.
\end{equation}
Since there are $\mathrm{D}^2$ different Pauli's in total, there must exist a Pauli $P$ with $|\alpha_P|\ge\frac{1}{\sqrt{\mathrm{D}}}\lVert\Delta\rVert_2$. Using $\lVert\Delta\rVert_1\le\sqrt{\mathrm{D}}\lVert\Delta\rVert_2$, we obtain
\begin{equation}
|\alpha_P|\ge\frac{1}{\mathrm{D}}\lVert\Delta\rVert_1.
\end{equation}

Consider the spectral measurement of the Pauli $P=\sum_x\lambda_x M_x$, where $M_x$ are projectors and $\lambda_x$ are phases. Under the channel $\calM(\cdot)=\sum_x\Tr[M_x(\cdot)]\ketbra{x}$, the distance becomes
\begin{equation}
\lVert\calM(\rho)-\calM(\sigma)\rVert_1=\sum_x|\Tr(M_x\Delta)|\ge\left|\sum_x\lambda_x\Tr(M_x\Delta)\right|=|\alpha_P|\ge\frac{1}{\mathrm{D}}\lVert\Delta\rVert_1.
\end{equation}
Since $\lVert\rho-\sigma\rVert_{\SM}\ge\lVert\calM(\rho)-\calM(\sigma)\rVert_1$, the desired bound is established.

\end{proof}

\begin{proposition}\label{thm:exp_small_fidelity_bound}
Given a tensor product state $\otimes_{i=1}^m\rho_i$ over $m$ disjoint subregions, and each region has Hilbert dimension $\mathrm{D}$, if $\onenorm{\rho_i-\SP}\ge\epsilon$, $\forall i$, then
\begin{equation}    
F(\otimes_{i=1}^m\rho_i,\SP)\le \left(1-\frac{\epsilon^2}{4\mathrm D^2}\right)^\frac{m}{2}.
\end{equation}
\end{proposition}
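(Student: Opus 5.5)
The plan is to combine Proposition~\ref{thm:multiplicative_fidelity_bound} with a per-patch bound. Proposition~\ref{thm:multiplicative_fidelity_bound} already gives $F(\otimes_{i=1}^m\rho_i,\SP)\le\prod_i F_{\SM}(\rho_i,\SP_i)$. It therefore suffices to show, for a single region of dimension $\mathrm D$ with $\onenorm{\rho_i-\SP}\ge\epsilon$,
\begin{equation}
F_{\SM}(\rho_i,\SP_i)\le\left(1-\frac{\epsilon^2}{4\mathrm D^2}\right)^{1/2}.
\end{equation}
Fix any $\tau\in\SP_i$; we need to bound $F_{\SM}(\rho_i,\tau)$ uniformly in $\tau$. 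By hypothesis $\onenorm{\rho_i-\tau}\ge\epsilon$.

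The first step is to apply Lemma~\ref{thm:SM_distinguishability}. It produces a stabilizer measurement $\calM$, namely the spectral measurement of a suitable Pauli, such that
\begin{equation}
\onenorm{\calM(\rho_i)-\calM(\tau)}\ge\frac{1}{\mathrm D}\onenorm{\rho_i-\tau}\ge\frac{\epsilon}{\mathrm D}.
\end{equation}
Here I use that the maximum defining $\|\cdot\|_{\SM}$ is attained by that Pauli measurement, or at least that the measurement constructed in the proof achieves this bound. Since $F_{\SM}$ is a minimum over stabilizer measurements, we get $F_{\SM}(\rho_i,\tau)\le F(\calM(\rho_i),\calM(\tau))$.

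The second step converts trace distance into fidelity for these classical distributions $p=\calM(\rho_i)$ and $r=\calM(\tau)$. I would use the Fuchs--van de Graaf inequality $\tfrac12\onenorm{p-r}\le\sqrt{1-F(p,r)^2}$, where $F$ is the root fidelity as in the paper. This gives
\begin{equation}
F(p,r)\le\sqrt{1-\tfrac14\onenorm{p-r}^2}\le\sqrt{1-\frac{\epsilon^2}{4\mathrm D^2}}.
\end{equation}
The right-hand side is independent of $\tau$, so taking the maximum over $\tau\in\SP_i$ yields the single-patch bound. Taking the product over the $m$ patches then gives the stated exponent $m/2$.

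I expect the only delicate points to be bookkeeping rather than conceptual. First, the fidelity convention must be consistent: $F$ here is the root fidelity, and Fuchs--van de Graaf must be applied in that form so the exponent comes out as $m/2$. Second, the hypothesis $\onenorm{\rho_i-\SP}\ge\epsilon$ must be read as a minimum over $\tau\in\SP_i$ on that patch, so that it applies to whichever $\tau$ is being compared. No further obstacle is anticipated, since the entanglement of $\sigma$ across patches has already been handled by Proposition~\ref{thm:multiplicative_fidelity_bound}.
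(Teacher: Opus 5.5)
Your proposal is correct and follows the paper's proof essentially step for step. For each $\tau\in\SP_i$, Lemma~\ref{thm:SM_distinguishability} gives a Pauli spectral measurement $\calM$ with $\lVert\calM(\rho_i)-\calM(\tau)\rVert_1\ge\epsilon/\mathrm D$; its proof constructs this measurement explicitly, so no attainment argument is needed. Fuchs--van de Graaf then yields $F_{\SM}(\rho_i,\SP_i)\le\bigl(1-\frac{\epsilon^2}{4\mathrm D^2}\bigr)^{1/2}$, and Proposition~\ref{thm:multiplicative_fidelity_bound} multiplies these bounds over the $m$ patches, with the root-fidelity convention exactly as you described.
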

\begin{proof}
Fix $i$ and $\tau\in\SP_i$. By Lemma~\ref{thm:SM_distinguishability}, there exists a stabilizer measurement $\calM$ such that
\begin{equation}
\bigl\|\calM(\rho_i)-\calM(\tau)\bigr\|_1
\ge \frac{1}{\mathrm D}\|\rho_i-\tau\|_1
\ge \frac{\epsilon}{\mathrm D}.
\end{equation}
Applying the Fuchs--van de Graaf inequality to the states $\calM(\rho_i)$ and $\calM(\tau)$ gives
\begin{equation}
F\bigl(\calM(\rho_i),\calM(\tau)\bigr)
\le
\sqrt{1-\frac{\epsilon^2}{4\mathrm D^2}}.
\end{equation}
Since $F_{\SM}(\rho_i,\tau)$ is the minimum over all stabilizer measurements,
\begin{equation}
F_{\SM}(\rho_i,\tau)
\le
\sqrt{1-\frac{\epsilon^2}{4\mathrm D^2}}.
\end{equation}
Maximizing over $\tau\in\SP_i$ yields
\begin{equation}
F_{\SM}(\rho_i,\SP_i)
\le
\sqrt{1-\frac{\epsilon^2}{4\mathrm D^2}}.
\end{equation}
The claim now follows from Proposition~\ref{thm:multiplicative_fidelity_bound}.
\end{proof}

\section{From magical subsystems to extensive magic}\label{sec:extensive_magic_from_magical_subsystems}
In this appendix we give extensive lower bounds to states with extensive number of subregions with magic, which may be of independent interest. It has been generally difficult to prove extensiveness of magic monotones for generic product states, except for odd-prime dimensional qudits where mana is an extensive monotone. In other situations, it is only proved for single qubit states and a restricted set of few qubit states\cite{Bravyi:2018ugg,Seddon:2021rij,Saxena:2022usu,Rubboli:2023qzz}.

For a product state, we can directly generalize Theorem~\ref{thm:fidelity_two_party}, replacing $\SP$ with $\calS$ in the proof:

\begin{proposition}[Extensive log-stabilizer fidelity in tensor product of magic states]
Given a product state $\otimes_{i=1}^n\rho_i$, we have 
\begin{equation}
F(\otimes_i\rho_i,\calS)\le\prod_iF_{\SM}(\rho_i,\calS_i),
\end{equation}
where
\begin{equation}
F_{\SM}(\rho,\sigma):=\min_{\calM\in\SM}F\big(\calM(\rho),\calM(\sigma)\big),\qquad F_{\SM}(\rho,\calS):=\max_{\tau\in\calS}F_{\SM}(\rho,\tau).
\end{equation}
Assuming $\onenorm{\rho_i-\calS}\ge\epsilon$, $\forall i$, then
\begin{equation}
\LF(\otimes_{i=1}^n\rho_i)\ge n\log\left(1-\frac{\epsilon^2}{4\mathrm D^2}\right)^{-1}.
\end{equation}
\end{proposition}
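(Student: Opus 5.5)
The plan is to repeat the two-party argument of Lemma~\ref{thm:fidelity_two_party}, with the stabilizer projection states $\SP$ replaced by the full stabilizer polytope $\calS$. We then iterate over the $n$ tensor factors and finish with the single-site estimate of Proposition~\ref{thm:exp_small_fidelity_bound}.

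\textbf{Step 1: two-party bound.} Fix an optimal $\sigma_{12}\in\calS_{12}$ with $F(\rho_1\otimes\rho_2,\calS_{12})=F(\rho_1\otimes\rho_2,\sigma_{12})$, and an arbitrary stabilizer measurement $\calM=\{M_x\}$ on subsystem $1$. Monotonicity of fidelity under $\calM\otimes\calI$ gives
\begin{equation*}
F(\rho_1\otimes\rho_2,\sigma_{12})\le\sum_x\sqrt{\Tr(M_x\rho_1)\Tr(M_x\sigma_1)}\,F(\rho_2,\sigma_{2,x}),
\end{equation*}
where $\sigma_{2,x}$ is the conditional post-measurement state. We then bound each $F(\rho_2,\sigma_{2,x})$ by $F(\rho_2,\calS_2)$ and the remaining sum by $F(\calM(\rho_1),\calM(\sigma_1))$. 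This requires $\sigma_1\in\calS_1$, which holds because partial traces of stabilizer mixtures are stabilizer mixtures. Finally we minimize over $\calM$ and maximize over $\sigma_1\in\calS_1$ to obtain
\begin{equation*}
F(\rho_1\otimes\rho_2,\calS_{12})\le F_{\SM}(\rho_1,\calS_1)\,F(\rho_2,\calS_2).
\end{equation*}

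\textbf{Main obstacle: closure of $\calS$ under conditioning.} The only genuinely new input is that $\sigma_{2,x}\in\calS_2$ whenever $\sigma_{12}\in\calS_{12}$ and $M_x\propto$ a stabilizer projection state on subsystem $1$. To prove it, I would write $\sigma_{12}=\sum_k p_k\ketbra{\phi_k}$ with pure stabilizer states $\phi_k$. Then $\Tr_1(M_x\ketbra{\phi_k})$ is proportional to the state obtained by projecting $\phi_k$ onto the code space of $M_x$'s stabilizer group, with the resulting uniform mixture over the code space. Projecting a stabilizer state onto a stabilizer code space gives either zero or a stabilizer state (standard Gottesman--Knill update, valid for any $q$ since Pauli eigenprojectors are Pauli sums). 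Averaging over the maximally mixed code state on $1$ and tracing out $1$ yields a convex combination of reduced stabilizer states. Each such reduced state is a stabilizer projection state, hence a uniform mixture of pure stabilizer states and so lies in $\calS_2$. Normalizing the positive weights shows $\sigma_{2,x}\in\calS_2$. This is the step I expect to need the most care, mainly for composite $q$. For composite $q$, I would handle the Pauli-projector structure through the Chinese-remainder decomposition used in Lemma~\ref{thm:cover_pauli_group}.

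\textbf{Step 2: iteration and single-site bound.} Iterating the two-party bound $n-1$ times, and using $F\le F_{\SM}$ on the last factor, gives
\begin{equation*}
F(\otimes_i\rho_i,\calS)\le\prod_iF_{\SM}(\rho_i,\calS_i).
\end{equation*}
For each factor, fix $\tau\in\calS_i$. Lemma~\ref{thm:SM_distinguishability} provides a Pauli measurement with $\onenorm{\calM(\rho_i)-\calM(\tau)}\ge\epsilon/\mathrm D$. Fuchs--van de Graaf then gives
\begin{equation*}
F(\calM(\rho_i),\calM(\tau))\le\sqrt{1-\epsilon^2/(4\mathrm D^2)},
\end{equation*}
and taking the maximum over $\tau$ bounds $F_{\SM}(\rho_i,\calS_i)$ by the same quantity. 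Since $\calF=F^2$, multiplying the $n$ factors gives $\calF\le(1-\epsilon^2/4\mathrm D^2)^n$. Taking $-\log$ yields the claimed lower bound on $\LF$.
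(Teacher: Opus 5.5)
Your proposal is correct and follows the paper's route. The paper reruns Lemma~\ref{thm:fidelity_two_party} with $\SP$ replaced by $\calS$, iterates it, and then applies Lemma~\ref{thm:SM_distinguishability} with Fuchs--van de Graaf exactly as in Proposition~\ref{thm:exp_small_fidelity_bound}. Your explicit check that $\sigma_{2,x}\in\calS_2$, by decomposing into pure stabilizer states, supplies the step the paper leaves implicit; it also works uniformly in $q$ without the Chinese-remainder reduction, because a maximal stabilizer group is its own commutant.
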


Even without a tensor product structure, one can show that the state has extensive magic measured by the relative entropy of magic, following~\cite{Piani:2009pzy}. 
\begin{proposition}[Extensive relative entropy of magic from local magic]
Given any state $\rho$ and any set of disjoint subsystems $A_1,\cdots, A_n$, we have
\begin{equation}
S(\rho\|\calS)\ge\sum_i S_{\SM}(\rho_i\|\calS_i),
\end{equation}
where
\begin{equation}
S_{\SM}(\rho\|\sigma):=\max_{\calM\in\SM}S\big(\calM(\rho)\|\calM(\sigma)\big),\qquad S_{\SM}(\rho\|\calS):=\min_{\tau\in\calS}S_{\SM}(\rho\|\tau).
\end{equation}
Assuming $\onenorm{\rho_i-\calS}\ge\epsilon$, $\forall i$, then $S(\rho\|\calS)\ge \frac{n\epsilon^2}{2\mathrm{D}^2}$. 
\end{proposition}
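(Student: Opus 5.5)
The plan is to reduce the quantum relative entropy to a classical one by measuring every subsystem with a stabilizer measurement, following Piani's argument for the relative entropy of entanglement~\cite{Piani:2009pzy}. The closure properties of the stabilizer set take the place of LOCC closure.

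\textbf{Step 1 (reduction to a fixed comparison state).} Fix $\sigma\in\calS$ attaining $S(\rho\|\calS)$. Since $\calS$ is convex and relative entropy is jointly convex, it suffices to treat each pure stabilizer component and then recombine at the end. More simply, I will carry $\sigma$ through the argument and only use two closure properties of $\calS$. First, a stabilizer measurement, whose POVM elements are proportional to stabilizer projection states, applied to a mixture of stabilizer states yields conditional post-measurement states that are again in $\calS$; this holds because projecting a pure stabilizer state onto a Pauli codespace gives a stabilizer state. Second, partial traces and convex mixtures of elements of $\calS$ stay in $\calS$.

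\textbf{Step 2 (sequential measurement and chain rule).} Choose stabilizer measurements $\calM_1,\dots,\calM_n$ on $A_1,\dots,A_n$ one at a time; the tensor product is itself a stabilizer measurement. By monotonicity under $\calM_1\otimes\cdots\otimes\calM_n$ (tracing out the rest), $S(\rho\|\sigma)\ge S(P\|Q)$, where $P,Q$ are the joint outcome distributions. The classical chain rule then gives
\begin{equation*}
S(P\|Q)=\sum_i \mathbb{E}_{x_{<i}\sim P}\, S\big(P_{i|x_{<i}}\,\big\|\,Q_{i|x_{<i}}\big).
\end{equation*}
Here $Q_{i|x_{<i}}=\calM_i(\tau_{i,x_{<i}})$, where $\tau_{i,x_{<i}}\in\calS_i$ is the marginal on $A_i$ of $\sigma$ conditioned on the earlier outcomes; this uses Step 1. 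The key trick is joint convexity of relative entropy, which gives
\begin{equation*}
\mathbb{E}_{x_{<i}}S(P_{i|x_{<i}}\|Q_{i|x_{<i}})\ge S\Big(\calM_i(\rho_i)\,\Big\|\,\calM_i(\bar\tau_i)\Big),\qquad \bar\tau_i:=\sum_{x_{<i}}P(x_{<i})\tau_{i,x_{<i}}\in\calS_i,
\end{equation*}
since the $P$-average of $P_{i|x_{<i}}$ is exactly $\calM_i(\rho_i)$.

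\textbf{Step 3 (choice of $\calM_i$).} The averaged state $\bar\tau_i$ depends only on $\calM_1,\dots,\calM_{i-1}$, not on $\calM_i$. I therefore choose $\calM_i$, after fixing the earlier measurements, to maximize $S(\calM_i(\rho_i)\|\calM_i(\bar\tau_i))$. Each term is then at least $\min_{\tau\in\calS_i}\max_{\calM}S(\calM(\rho_i)\|\calM(\tau))=S_{\SM}(\rho_i\|\calS_i)$, and summing over $i$ gives the superadditive bound. I expect the main obstacle to be making sure this order of optimization is legitimate, namely that the dependence of $\bar\tau_i$ on earlier choices never involves $\calM_i$. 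A secondary point is the closure claim in Step 1 for post-measurement states in composite dimension; I would verify it via the Pauli-projector form of Definition~\ref{def:stabilizer_projection_states}.

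\textbf{Step 4 (quantitative bound).} For each $\tau\in\calS_i$, Lemma~\ref{thm:SM_distinguishability} provides a Pauli measurement with $\onenorm{\calM(\rho_i)-\calM(\tau)}\ge\epsilon/\mathrm{D}$. Pinsker's inequality then gives $S(\calM(\rho_i)\|\calM(\tau))\ge \frac{1}{2}(\epsilon/\mathrm{D})^2$ in nats. Hence $S_{\SM}(\rho_i\|\calS_i)\ge \epsilon^2/(2\mathrm{D}^2)$, and summing over the $n$ subsystems yields $S(\rho\|\calS)\ge n\epsilon^2/(2\mathrm{D}^2)$.
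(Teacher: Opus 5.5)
Your argument is correct and is essentially the paper's proof: both follow Piani's strategy of monotonicity under stabilizer measurements, the chain rule, joint convexity, and closure of $\calS$ under stabilizer-measurement conditioning, finished with Lemma~\ref{thm:SM_distinguishability} and Pinsker. The only difference is bookkeeping: the paper measures only $A_1$, applies joint convexity to the unmeasured quantum remainder, and bounds that remainder by $S(\rho_2\|\calS_2)$ before maximizing over $\calM$, then recurses; this avoids the order-of-optimization issue that you handle correctly by noting that $\bar\tau_i$ depends only on $\calM_1,\dots,\calM_{i-1}$.
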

\begin{proof}
We start with the case with two subsystems. For $\forall \calM\in\SM$ and $\sigma_{12}\in\calS$,
\begin{equation}
\begin{aligned}
S(\rho_{12}\|\sigma_{12})
&\ge S\big(\calM\otimes\calI(\rho_{12})\|\calM\otimes\calI(\sigma_{12})\big) \\
&=S\big(\calM(\rho_1)\|\calM(\sigma_1)\big)+\sum_xp_xS(\rho_{2,x}\|\sigma_{2,x}),\qquad p_x=\Tr(M_x\rho_1),\ q_x=\Tr(M_x\sigma_1) \\
&\ge S\big(\calM(\rho_1)\|\calM(\sigma_1)\big)+S\Big(\rho_2\Big\|\sum_xp_x\sigma_{2,x}\Big)\\
&\ge S\big(\calM(\rho_1)\|\calM(\sigma_1)\big)+S(\rho_2\|\calS_2)
\end{aligned}
\end{equation}
The third line is from the joint convexity of relative entropy\footnote{This follows from monotonicity under quantum channels, as one can start from the classical-quantum states $\sum_xp_x\ketbra{x}\otimes\rho_x$ and $\sum_xp_x\ketbra{x}\otimes\sigma_x$, whose relative entropy is $\sum_xp_xS(\rho_x\|\sigma_x)$ and then take the partial trace.}. The fourth line is because $\sum_xp_x\sigma_{2,x}\in\calS$. This implies
\begin{equation}
S(\rho_{12}\|\calS_{12})\ge S_{\SM}(\rho_1\|\calS_1)+S(\rho_2\|\calS_2).
\end{equation}
Iterating this procedure, we get the desired inequality for multiple subsystems. Finally, plug in the quantum Pinsker inequality to convert trace distance to relative entropy.
\end{proof}

\section{Logarithmically large long-range magic from continuous decay of mutual information}\label{sec:LRM_from_mutual_information}
Entanglement entropies in stabilizer states are discrete: the mutual information between any two subsystems must be an integer multiple of $\log q$, when $q$ is a prime number. For composite qudit dimensions, the mutual information cannot sit between $0$ and $\log p$, where $p$ is the smallest prime divisor of $q$. It suggests that stabilizer states cannot represent states with continuously decaying mutual information---a property that is stable under constant-depth local unitaries. Mutual-information-based ideas of this kind have already been used to prove the \emph{existence} of long-range magic in certain states~\cite{Korbany:2025noe,Parham:2025sxj}. Here we show that there must be \emph{logarithmically large} long-range magic, under the condition of continuous decay of mutual information. The argument parallels Sec.~\ref{subsec:LRM_nA}: first, a constant-size patch whose internal mutual information falls in an interval forbidden for stabilizer projection states is uniformly bounded away from $\SP$; second, exponential decay of mutual information makes the state on many well-separated patches close to a product state; and third, Proposition~\ref{thm:exp_small_fidelity_bound} amplifies this local obstruction into an $\Omega(\log n)$ lower bound. This criterion shows that large long-range magic is not tied exclusively to non-Abelian topological order. %In particular, generic one-dimensional MPS's are expected to satisfy such conditions, closely related to the MPS fixed-point criterion for long-range magic developed in Ref.~\cite{Korbany:2025noe}.}

For a region $A$, let $A^{+d}$ denote the region within distance $d$ to $A$, and let $A^{-d}$ denote the the region with distance $d$ away from the complement of $A$.
\begin{theorem}\label{thm:LRM_from_mutual_information}
Let $\rho=\ketbra{\psi}$ be a pure state on a $n$ qudits, and let $q$ be the qudit dimensions, whose smallest prime divisor is $p$. For a fixed circuit depth $d=O(1)$. Assume that the mutual information on the state $\rho$ satisfies the following properties:
\begin{enumerate}
\item For any two disjoint subsystems $A$ and $B$, the mutual information is upper bounded by $I_\rho(A:B)\le K|A|\cdot|B|\exp(-\dist(A:B)/\xi)$ with $\xi=O(1)$. 

\item There exist $m$ disjoint subregions $A_1,\cdots, A_m$ and constants $c_0,c_1,r_0,\epsilon_1,\epsilon_2>0$, independent of $n$, such that $m\ge c_0\log n$, $|A_i|\le r_0$, $\dist(A_i,A_j)\ge c_1\log n\ (i\neq j)$ and within each $A_i$, there exists two subregions $C_i$ and $D_i$ satisfying $\epsilon_1\le I_\rho(C^{-d}:D^{-d})\le I_\rho(C^{+d}:D^{+d})\le\epsilon_2$, where $0<\epsilon_1\le\epsilon_2<\log p$, $C^{+d}$ and $D^{+d}$ are disjoint.
\end{enumerate}
Then
\begin{equation}
\LR M_d(\ketbra{\psi})=\Omega(\log n)\quad\text{for}\quad d=O(1).
\end{equation}
\end{theorem}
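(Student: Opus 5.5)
Fix an arbitrary depth-$d$ local unitary $U$ and set $\tilde\rho=U\rho U^\dagger$. We will show that $\LF(\tilde\rho)=\Omega(\log n)$. By Proposition~\ref{thm:LRM_magic_measures_order}, every listed magic measure is at least $\LF$, so the same bound holds for all of them. Minimizing over $U$ then gives the claimed lower bound on $\LR M_d$. The argument has three steps, parallel to Sec.~\ref{subsec:LRM_nA}: a local obstruction on each patch, approximate factorization across patches, and amplification via Proposition~\ref{thm:exp_small_fidelity_bound}.

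\emph{Local obstruction.} Fix a patch $A_i$ and the regions $C_i,D_i$ inside it. The circuit has lightcone radius $O(d)$. By monotonicity of mutual information under local channels (apply the lightcone of $C$ to $C^{+d}$, then trace out), we get $I_{\tilde\rho}(C_i:D_i)\le I_\rho(C_i^{+d}:D_i^{+d})\le\epsilon_2$. For the reverse direction, the backward lightcone of $C_i^{-d}$ lies inside $C_i$, so $I_{\tilde\rho}(C_i:D_i)\ge I_\rho(C_i^{-d}:D_i^{-d})\ge\epsilon_1$. (The roles of $\pm d$ may need adjusting by constant factors in $d$.) Hence $I_{\tilde\rho}(C_i:D_i)\in[\epsilon_1,\epsilon_2]\subset(0,\log p)$. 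For any stabilizer projection state $\tau$ on $A_i$, the mutual information $I_\tau(C_i:D_i)$ is either $0$ or at least $\log p$. This holds because the entropies of stabilizer projection states are logarithms of orders of subgroups of Pauli groups, so the differences are logs of integers whose prime factors divide $q$. Take $\gamma=\min(\epsilon_1,\log p-\epsilon_2)>0$. The Fannes--Audenaert continuity bound, applied to the three entropies in $I(C:D)$ on the patch of dimension $\mathrm D\le q^{r_0}$, gives a constant $\epsilon>0$ depending only on $\gamma,r_0,q$ such that $\onenorm{\tilde\rho_{A_i}-\SP}\ge\epsilon$ for every $i$.

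\emph{Approximate factorization.} The patches here are not exactly decoupled as they were for string-nets, so we need a quantitative bound. The circuit enlarges distances by at most $O(d)$ and sizes by at most $O(d^2)$ factors. Hypothesis~1 applied to the enlarged regions, together with monotonicity, bounds the mutual information of $\tilde\rho$ between $A_1\cdots A_{k-1}$ and $A_k$ by $K'\,k r_0^2\exp(-c_1\log n/\xi)$ (up to constants). Summing the chain rule over $k$ gives
\begin{equation}
S\Big(\tilde\rho_{A_1\cdots A_m}\Big\|\textstyle\bigotimes_i\tilde\rho_{A_i}\Big)\le K'' m^2 n^{-c_1/\xi}.
\end{equation}
This multi-information bound follows because the multi-information equals the sum of these successive mutual informations. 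Since $m=O(\log n)$, or by using only $\lceil c_0\log n\rceil$ patches, the quantity is $o(1)$. Pinsker's inequality then gives trace distance $\eta=o(1)$, and hence $F(\tilde\rho_{A_1\cdots A_m},\otimes_i\tilde\rho_{A_i})\ge1-\eta$.

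\emph{Amplification.} Let $\ket\phi$ be an optimal pure stabilizer state, and let $\sigma$ be its reduction to the patches, which lies in $\SP$. By monotonicity, $F(\tilde\rho,\ketbra{\phi})\le F(\tilde\rho_{A_1\cdots A_m},\sigma)$. We now pass from $\tilde\rho_{A_1\cdots A_m}$ to the product state. The purified-distance (sine-distance) triangle inequality gives $\arccos F(\tilde\rho_{\cup A},\sigma)\ge\arccos F(\otimes\tilde\rho_{A_i},\sigma)-\arccos(1-\eta)$. Proposition~\ref{thm:exp_small_fidelity_bound} gives $F(\otimes\tilde\rho_{A_i},\sigma)\le(1-\epsilon^2/4\mathrm D^2)^{m/2}=n^{-\Omega(1)}$. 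Together these yield $F(\tilde\rho,\calS)\le n^{-\Omega(1)}+O(\sqrt\eta)$.

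The main obstacle is this last step. The additive error $O(\sqrt\eta)=O(m\,n^{-c_1/2\xi})$ must itself be polynomially small, and it is: the hypothesis $\dist\ge c_1\log n$ is exactly what makes $\eta$ polynomially small. Therefore $F(\tilde\rho,\calS)\le n^{-\kappa}$ for some constant $\kappa>0$. This gives $\LF\ge 2\kappa\log n=\Omega(\log n)$, uniformly in $U$. A secondary point to check carefully is the claim that mutual information of stabilizer projection states is quantized for composite $q$, which requires the entropy to be a log of a group order as described above.
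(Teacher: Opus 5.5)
Your proposal is correct and follows essentially the same route as the paper. The paper's proof has the same three steps: quantized mutual information of stabilizer projection states as the per-patch obstruction; the chain-rule decomposition of the multi-information, the lightcone stability of mutual information and Hypothesis~1 for approximate factorization; and Proposition~\ref{thm:exp_small_fidelity_bound} with the Bures-angle triangle inequality (Lemma~\ref{thm:fidelity_triangle_inequality}) and monotonicity under partial trace for amplification.

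The differences are cosmetic. You use Pinsker plus Fuchs--van de Graaf instead of the paper's $F\ge e^{-S/2}$, which only worsens the polynomial exponent: the additive error is $O(\sqrt{m}\,n^{-c_1/4\xi})$ rather than your stated $O(m\,n^{-c_1/2\xi})$, still polynomially small. You work with the patches $A_i$ and their $d$-enlargements rather than the cores $A_i^{-d}$. You also make the paper's uniform-$\epsilon$ step explicit via Fannes--Audenaert, with the correct quantization $I\in\{0\}\cup[\log p,\infty)$ coming from the integer index $[S_{CD}:S_C S_D]$.
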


\begin{proof}
For a depth-$d$ unitary, we write $\tilde{\rho}=U\ketbra{\psi}U^\dagger$. On the evolved state $\tilde{\rho}$, take the core of the patches $A_1^{-d},\cdots,A_m^{-d}$. Because the original patches are separated by distance at least $c_1\log n$, the regions $A_i^{-d}$ are disjoint. 

We first show that the state on the union of patches $A^{-d}=\cup_{i=1}^m A_i^{-d}$ is close to a tensor product state, due to small mutual information between the patches. Consider the relative entropy between the the tensor product of state over each patch $\otimes_{i=1}^m\tilde{\rho}_i$ and the global state $\tilde{\rho}_{A^{-d}}$ over all patches.
\begin{equation}
S(\tilde{\rho}_{A^{-d}}\|\otimes_{i=1}^m\tilde{\rho}_i)=\sum_iS_{\tilde{\rho}}(A_i)-S_{\tilde{\rho}}(A)=\sum_{i=1}^{m-1}I_{\tilde{\rho}}\left(\cup_{j=1}^iA_j^-:A_{i+1}^-\right).
\end{equation}
By Lemma~\ref{thm:stability_of_MI}, each term is upper bounded by the mutual information of the corresponding thickened regions in the original state:
\begin{equation}
I_{\tilde\rho}\!\left(A_1^{-d}\cup\cdots\cup A_i^{-d}:A_{i+1}^{-d}\right)
\le
I_{\rho}\!\left(A_1\cup\cdots\cup A_i:A_{i+1}\right).
\end{equation}
Using Assumption~1 together with $|A_i|\le r_0$ and $\dist(A_i,A_j)\ge c_1\log n$, we obtain
\begin{equation}
S(\tilde{\rho}_{A^{-d}}\|\otimes_{i=1}^m\tilde{\rho}_i)\le K(c_0\log n)^2r_0^2 n^{-c_1/\xi},
\end{equation}
where we plugged $m=c_0\log n$. To lower bound fidelity, recall that the $\alpha$-R\'enyi relative entropy is monotonically increasing in $\alpha$~\cite{Muller-Lennert:2013liu,Beigi:2013teh}, and the case $\alpha=\tfrac12$ corresponds to fidelity. Hence $F(\rho_{A^-},\otimes_{i=1}^m\rho_i)\ge\exp(-S(\rho_{A^-}\|\otimes_{i=1}^m\rho_i)/2)\ge 1-1/\poly(n)$.

We next show that each local state $\rho_i$ is uniformly bounded away from $\SP$. By Assumption~2 and Lemma~\ref{thm:stability_of_MI}, for every $i$ we have
\begin{equation}
\epsilon_1\le I_{\tilde\rho}(C_i:D_i)\le\epsilon_2<\log p.
\end{equation}
If $\rho_i$ were a stabilizer projection state, then its reduction to $C_i\cup D_i$ would also be a stabilizer projection state, and therefore the mutual information $I_{\tilde\rho}(C_i:D_i)$ would have to be an integer multiple of $\log q$, a contradiction. Thus $\rho_i\notin\SP$ for every $i$. Since each patch has constant size, there exists a constant $\epsilon>0$, independent of $n$, such that
\begin{equation}
\min_{\sigma\in\SP}\|\rho_i-\sigma\|_1\ge \epsilon,
\quad \forall i.
\end{equation}

From the multiplicative fidelity upper bound in Proposition~\ref{thm:exp_small_fidelity_bound}, we have $F(\otimes_{i=1}^m\rho_i,\SP)\le 1/\poly(n)$. To relate this to $F(\rho_A,\SP)$, we use Lemma~\ref{thm:fidelity_triangle_inequality}, which implies $F(\rho_A,\SP)\le 1/\poly(n)$. Now let $\ket{\phi}$ be a pure stabilizer state achieving the maximum in $F(\tilde\rho,\calS)$. Under partial trace to $A^-$, the pure stabilizer state $\ketbra{\phi}$ reduces to some stabilizer projection state $\sigma\in\SP$. By monotonicity of fidelity under partial trace,
\begin{equation}
F(\tilde\rho,\calS)=F(\tilde\rho,\ketbra{\phi})\le F(\tilde\rho_{A^-},\sigma)\le F\!\left(\tilde\rho_{A^-},\SP\right) \le 1/\poly(n).
\end{equation}
Therefore
\begin{equation}
\LF(\tilde\rho)=\Omega(\log n).
\end{equation}
Because the choice of $U$ was arbitrary, this implies
\begin{equation}
\LR M_d(\ketbra{\psi})=\Omega(\log n)\quad\text{for}\quad d=O(1).
\end{equation}
\end{proof}

The following lemma makes precise the stability of mutual information under constant-depth local unitaries; see, for example, Ref.~\cite{Parham:2025sxj}.
\begin{lemma}\label{thm:stability_of_MI}
Take any state $\rho$ evolved by a local unitary with depth $d$. For any two disjoint regions $A$ and $B$ where $A^{+d}$ and $B^{+d}$ are disjoint, we have
\begin{equation}
I_\rho(A^{-d}:B^{-d})\le I_{U\rho U^\dagger}(A,B)\le I_\rho(A^{+d}:B^{+d}).
\end{equation}
\end{lemma}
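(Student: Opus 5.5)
The plan is to prove both inequalities from a single monotonicity principle. Mutual information cannot increase under local operations on either side, and it is invariant under local unitaries. The idea is to sandwich the evolved regions between the shrunk and thickened regions by choosing lightcones appropriately.

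Write $V:=U$ for the depth-$d$ circuit and split its gates into three groups. Let $V_A$ be the gates in the backward lightcone of $A$, and $V_B$ the gates in the backward lightcone of $B$. Because $A^{+d}$ and $B^{+d}$ are disjoint, these two lightcones do not share gates: each backward lightcone of $A$ is contained in $A^{+d}$, and likewise for $B$.

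\textbf{Upper bound.} The reduced state satisfies
\[
(V\rho V^\dagger)_{AB}=\Tr_{(A^{+d}B^{+d})\setminus AB}\big[(V_A\otimes V_B)\rho_{A^{+d}B^{+d}}(V_A\otimes V_B)^\dagger\big].
\]
This holds because gates outside the joint backward lightcone cancel against their adjoints after tracing. The map is a product of a channel acting on $A^{+d}\to A$ and a channel acting on $B^{+d}\to B$: each is a unitary followed by a partial trace. Data processing for mutual information under local channels then gives
\[
I_{V\rho V^\dagger}(A:B)\le I_\rho(A^{+d}:B^{+d}).
\]

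\textbf{Lower bound.} Apply the same argument to the inverse circuit. Write $\rho=V^\dagger(V\rho V^\dagger)V$; here $V^\dagger$ is also a depth-$d$ local circuit. The backward lightcone of $A^{-d}$ under $V^\dagger$ is contained in $A$, and that of $B^{-d}$ is contained in $B$. By the same reasoning, $\rho_{A^{-d}B^{-d}}$ is obtained from $(V\rho V^\dagger)_{AB}$ by a product of local channels. Hence
\[
I_\rho(A^{-d}:B^{-d})\le I_{V\rho V^\dagger}(A:B).
\]

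The only delicate step is the lightcone bookkeeping, namely showing that the reduced state on $AB$ depends only on gates in the two disjoint backward lightcones and factorizes into local channels. This requires the standard cancellation argument: order the layers and remove gates disjoint from the current lightcone layer by layer. Disjointness of $A^{+d}$ and $B^{+d}$ is needed so that no gate straddles both sides. A gate straddling both would act as a nonlocal channel, and it could increase the mutual information.
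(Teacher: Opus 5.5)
Your proof is correct and follows the paper's argument essentially verbatim. Lightcone cancellation reduces $(U\rho U^\dagger)_{AB}$ to a product of local unitary-plus-partial-trace channels acting on $\rho_{A^{+d}B^{+d}}$, which gives the upper bound by data processing; the lower bound follows from the same argument with the inverse circuit $U^\dagger$ on $A^{-d}, B^{-d}$, using $(A^{-d})^{+d}\subseteq A$ and monotonicity under partial trace, and your explicit check that no gate lies in both backward lightcones is a detail the paper leaves implicit.
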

\begin{proof}
After applying the finite depth local unitary $U$ to the initial state $\rho$, tracing over the complement region of $A\cup B$ cancels all the gates outside the lightcones of $A$ and $B$---these are the unshaded gates below:
\begin{equation*}
\centering\includegraphics[width=0.7\linewidth]{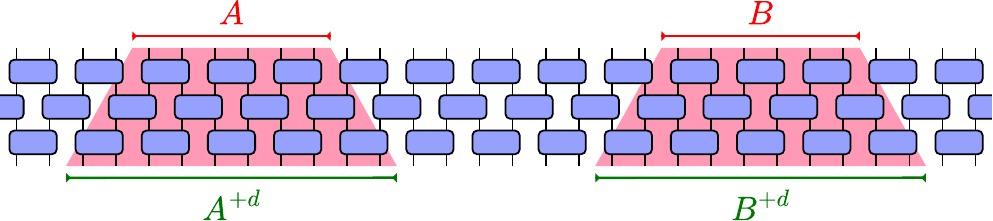}
\end{equation*}
Let $C$ be the complement region of $A\cup B$. The density matrix over $A\cup B$ is therefore 
\begin{equation}
\Tr_{C}(U\rho U^\dagger)=\Tr_{C}[(U_{A^{+d}}\otimes U_{B^{+d}})\rho(U_{A^{+d}}\otimes U_{B^{+d}})^\dagger],
\end{equation}
where $U_{A^{+d}}$ and $U_{B^{+d}}$ act on $A^{+d}$ and $B^{+d}$ respectively. Mutual information is invariant under local unitaries and non-increasing under partial trace, so $I_{U\rho U^\dagger}(A,B)\le I_\rho(A^{+d}:B^{+d})$. 

Applying the same argument to $U\rho U^\dagger$ with the inverse circuit $U^\dagger$ yields the lower bound $I_\rho(A^{-d}:B^{-d})\le I_{U\rho U^\dagger}(A:B)$. This completes the proof.
\end{proof}

The following Lemma converts the triangle inequality for the Bures angle to a bound on fidelity.
\begin{lemma}\label{thm:fidelity_triangle_inequality}
If $F(\rho,\sigma)\le\delta_1$ and $F(\sigma,\tau)\ge 1-\delta_2$, then $F(\rho,\tau)\le\delta_1+\sqrt{2\delta_2}$. 
\end{lemma}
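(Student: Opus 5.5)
The plan is to pass to Bures angles, where the triangle inequality holds, and then convert back to fidelities. Let $A(\rho,\sigma):=\arccos F(\rho,\sigma)\in[0,\pi/2]$. This is a metric on density matrices. The only nontrivial metric property is the triangle inequality, which is standard and follows from Uhlmann's theorem: one chooses purifications that realize the fidelities and uses the triangle inequality for angles between unit vectors. We therefore have
\begin{equation}
A(\rho,\tau)\ge A(\rho,\sigma)-A(\sigma,\tau).
\end{equation}

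If the right-hand side is negative, we need $\arccos\delta_1\le\arccos(1-\delta_2)$. Then $\delta_1\ge1-\delta_2$, so $\delta_1+\sqrt{2\delta_2}\ge 1-\delta_2+\sqrt{2\delta_2}\ge1$ whenever $\delta_2\le 2$, and the claim is trivial. The case $\delta_2>2$ is impossible because $F\ge0$ forces $\delta_2\le1$. Otherwise, since $\cos$ is decreasing on $[0,\pi/2]$,
\begin{equation}
F(\rho,\tau)\le\cos\big(A(\rho,\sigma)-A(\sigma,\tau)\big)=F(\rho,\sigma)F(\sigma,\tau)+\sin A(\rho,\sigma)\sin A(\sigma,\tau).
\end{equation}
Here we should be careful. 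The argument of $\cos$ is $A(\rho,\sigma)-A(\sigma,\tau)$, which is at least $\arccos\delta_1-A(\sigma,\tau)$. Either we bound it directly, or we use the expansion above with $F(\rho,\sigma)\le\delta_1$, $F(\sigma,\tau)\le1$, $\sin A(\rho,\sigma)\le1$, and
\begin{equation}
\sin A(\sigma,\tau)=\sqrt{1-F(\sigma,\tau)^2}\le\sqrt{1-(1-\delta_2)^2}=\sqrt{2\delta_2-\delta_2^2}\le\sqrt{2\delta_2}.
\end{equation}

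The one subtlety is monotonicity. The expansion $\cos(x-y)=\cos x\cos y+\sin x\sin y$ uses the actual values $x=A(\rho,\sigma)$ and $y=A(\sigma,\tau)$, so it is an identity and no monotonicity in $x$ is needed. It remains to bound each term: $\cos x\cos y\le \cos x\le\delta_1$, and $\sin x\sin y\le\sin y\le\sqrt{2\delta_2}$. Combining these gives $F(\rho,\tau)\le\delta_1+\sqrt{2\delta_2}$.

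The main obstacle is only the Bures-angle triangle inequality itself. I will cite it as standard (Nielsen--Chuang) rather than reprove it. If a proof is wanted, it comes from Uhlmann's theorem plus the spherical triangle inequality for pure states.
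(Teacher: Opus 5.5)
Your proof is correct and follows the paper's route---the Bures-angle triangle inequality, then $\cos(x-y)=\cos x\cos y+\sin x\sin y$ with the two terms bounded by $\delta_1$ and $\sqrt{2\delta_2}$. It is in fact more careful than the paper's sketch, which states the triangle inequality in the direction that literally bounds $F(\rho,\tau)$ from below and omits the final term-by-term estimates.

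One small fix: $F\ge 0$ does not force $\delta_2\le 1$. This is harmless, because $\delta_2\ge 1/2$ already makes the claim trivial ($\delta_1+\sqrt{2\delta_2}\ge 1\ge F(\rho,\tau)$), and your case split can be dropped altogether since $\arccos F(\rho,\tau)\ge\bigl|\arccos F(\rho,\sigma)-\arccos F(\sigma,\tau)\bigr|$ and cosine is even.
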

\begin{proof}
Define the Bures angle as $\theta(\rho,\sigma)=\arccos F(\rho,\sigma)$, which satisfies the triangle inequality~\cite{Nielsen_Chuang_2010} $\theta(\rho,\tau)\le\theta(\rho,\sigma)+\theta(\sigma,\tau)$. Taking the cosine gives
\begin{equation}
F(\rho,\tau)\le F(\rho,\sigma)F(\sigma,\tau)+\sqrt{1-F^2(\rho,\sigma)}\sqrt{1-F^2(\sigma,\tau)}.
\end{equation}
\end{proof}

\section{List of notations}
\begin{itemize}
\item $\sigma$: reference state where entanglement bootstrap methods are implemented.
\item $\Sigma(\Om)$: information convex set of a region $\Om$.
\item $\calG(S)$: generators of the stabilizer group $S$.
\item $S_A$: the stabilizer group generated by local stabilizers in $A$.
\item $L_A$: the Pauli operators on $A$ that commute with $S_A$.
\item $S_A(B)$: the stabilizers in $S_A$ that are fully supported on $B$.
\item $L_A(B)$: the operators in $L_A$ that are fully supported on $B$.
\item $S_r(A)$: the stabilizers of the reference state that are fully supported on $A$.
\item $\calS$: the convex hull of (pure) stabilizer states
\item $\SP$: the set of stabilizer projection states.
\item $D$: spatial dimension.
\item $d$: circuit depth.
\item $\mathrm{D}$: Hilbert space dimension of a patch.
\end{itemize}

\bibliography{reference}
\end{document}